\definecolor{maincolor}{HTML}{BA0C2F}
\def\algbackskip{\hskip-\ALG@thistlm}
\newcommand{\ZZ}{\mathbb{Z}}
\newcommand{\RR}{\mathbb{R}}
\newcommand{\F}{\mathbb{F}}
\newcommand{\poly}{\mathrm{poly}}
\newcommand{\ccdot}{\cdot\cdot\cdot}
\newtheorem{theorem}{Theorem}[section]
\newtheorem{lemma}[theorem]{Lemma}
\newtheorem{problem}[theorem]{Problem}
\newtheorem{definition}[theorem]{Definition}
\newtheorem{proposition}[theorem]{Proposition}
\newtheorem{property}[theorem]{Property}
\newtheorem{conjecture}[theorem]{Conjecture}
\newtheorem{procedure}[theorem]{Procedure}
\theoremstyle{remark}
\newtheorem{remark}{Remark}
\newtheorem*{remark*}{Remark}
\newif\ifmynotes
\newcommand{\SWnote}[1]{\ifmynotes\textcolor{blue}{[Siki: #1]}\else\fi}
\title{A number-theoretic conjecture implying \\ faster algorithms for polynomial factorization \\ and integer factorization}
\author{Chris Umans\thanks{Computing \& Mathematical Sciences Department. Supported by a Simons Investigator Grant.}  \\ Caltech \and Siki Wang\thanks{Computing \& Mathematical Sciences Department. Supported by Umans' Simons Investigator Grant.} \\  Caltech}
\begin{document}  
\setstretch{1.1}
\setcounter{page}{0}
\maketitle
\thispagestyle{empty}

\begin{abstract}
The fastest known algorithm for factoring a degree $n$ univariate polynomial over a finite field $\F_q$ runs in time $O(n^{3/2 + o(1)}\poly\log q)$, and there is a reason to believe that the $3/2$ exponent represents a ``barrier'' inherent in algorithms that employ a so-called baby-steps-giant-steps strategy. In this paper, we propose a new strategy with the potential to overcome the $3/2$ barrier. In doing so we are led to a number-theoretic conjecture, one form of which is that there are sets $S, T$ of cardinality $n^\beta$, consisting of positive integers of magnitude at most $\exp(n^\alpha)$, such that every integer $i \in [n]$ divides $s-t$ for some $s \in S, t \in T$. Achieving $\alpha + \beta \le 1 + o(1)$ is trivial; we show that achieving $\alpha, \beta < 1/2$ (together with an assumption that $S, T$ are structured) implies an improvement to the exponent 3/2 for univariate polynomial factorization. Achieving $\alpha = \beta = 1/3$ is best-possible and would imply an exponent 4/3 algorithm for univariate polynomial factorization. Interestingly, a second consequence would be a reduction of the current-best exponent for deterministic (exponential) algorithms for factoring integers, from $1/5$ to $1/6$. 
\end{abstract}

\newpage

\section{Introduction}
We are interested in the fundamental problem of factoring univariate polynomials over finite fields. In other words, given a polynomial $f(X) \in \F_q[X]$ of degree $n$, given by its coefficients, we wish to produce irreducible polynomials $f_1, f_2, \ldots f_k$ such that $f = f_1f_2\ldots f_k$. This is a classic problem in computer algebra, and polynomial-time algorithms have been known since Berlekamp's algorithm in 1970 \cite{Berlekamp_1970}. Since then, there has been a significant amount of work on algorithms for this central problem (see surveys \cite{von_zur_Gathen_Panario_2001}, \cite{Kaltofen_2003}, \cite{von_zur_Gathen_2006}). A key development due to Kaltofen and Shoup in 1998 \cite{Kaltofen_Shoup_1998} was the first sub-quadratic algorithm for this problem. The current best exponent is $3/2$, achieved by Kedlaya and Umans \cite{KU11} via a new nearly-linear-time algorithm for modular composition. 

These asymptotically fastest algorithms use the framework pioneered by Cantor and Zassenhaus \cite{Cantor_Zassenhaus_1981}, in which there are three main steps to produce the full factorization of $f(X)$ into irreducibles: squarefree factorization, distinct-degree factorization, and equal-degree factorization. The first and third of these steps have algorithms that run in time $\widetilde{O}(n)\poly\log(q)$ \cite{KU11}; distinct-degree factorization is thus the algorithmic bottleneck. This problem is described precisely as follows: 

\begin{problem}[Distinct-Degree Factorization] \label{ddf} Given a monic, squarefree polynomial $f \in \F_q[X]$ of degree $n$, output factors $f_1, f_2,..., f_n \in \F_q[X]$, where $f = f_1f_2 \ccdot f_n$ and each $f_i$ is either 1 or the product of degree-$i$ irreducible polynomials.
\end{problem}
As mentioned, the best-known exponent for Distinct-Degree Factorization is $3/2$ \cite{KU11}. The key algebraic fact used in the algorithm is that the polynomial $g_i(X) = X^{q^i} - X$ is the product of all monic irreducible polynomials of degree dividing $i$. Computing the GCD of $f$ with $g_i(X)$ for $i = 1, 2, \ldots, n$ then gives the distinct-degree factorization. Using fast modular composition, each such GCD can be computed using a nearly linear number of $\F_q$ operations. Since there are $n$ of these GCDs, however, the algorithm remains quadratic. To achieve subquadratic running time, a {\em baby-steps-giant-steps} approach is used in \cite{KU11}, following \cite{Kaltofen_Shoup_1998}, to compute the product  $\prod_{i=1}^ng_i(X) \bmod f(X)$ (and products of contiguous subsets of these factors) in $\widetilde{O}(n^{3/2})$ $\F_q$-operations. This remains the key idea in the best-known algorithms. Indeed, going beyond this algorithmic idea seems to be something of a ``barrier'' -- as \cite{Guo_Narayanan_Umans} give an assortment of problems whose best-known algorithm has exponent $3/2$ together with reductions between them and polynomial factorization. In this sense, all of these problems are stuck at exponent $3/2$ due to the need to go beyond the core baby-steps-giant-steps algorithmic motif. 

In this paper, we observe that instead of computing the GCD of $f(X)$ with $X^{q^i}-X$ for various $i \in [n]$, we can compute the GCD with $X^{q^A} - X$ for $A$ as large as $\exp(n^{1/3})$ (if our target is an exponent-$4/3$ algorithm). Such a high degree polynomial may be computed modulo $f(X)$ by repeated modular composition. Thus a single GCD can ``strip off'' many more than a single distinct-degree factor; it can strip off all those whose associated degree divides $A$, in one shot. Making this new idea work in conjunction with the baby-steps-giant-steps approach to beat exponent $3/2$ leads to the number-theoretic conjecture of the paper's title. The substantive new ideas required by this approach are the subject of the next three sections. Our main theorem (Theorem \ref{thm:main}) gives a randomized algorithm for polynomial factorization with an exponent that depends on the parameters of the number-theoretic conjecture; it has the potential to reach exponent $4/3$ if the conjecture is true for best-possible parameters. 

\paragraph{Outline.} In Section \ref{sec:framework} we describe a general framework for computing the Distinct-Degree Factorization. This leads us to formulate the main number-theoretic conjecture (and variants) in Section \ref{sec:num_theo_conj}. The main algorithmic content is in Section \ref{sec:poly_factoring} which shows how the number-theoretic conjecture can be used to produce the components needed by the general framework. Section \ref{sec:int_factoring} then applies the same ideas to integer factorization. Finally, in Section \ref{sec:ap_conj} we discuss the conjecture, and suggest some questions for future work.

\subsection{Preliminaries}

In this paper we use the notation $\widetilde{O}(\cdot)$ to suppress polylogarithmic factors. 
We use the fact that certain key operations on polynomials in $\F_q[X]$ of degree $n$ can be performed in $\widetilde{O}(n)$ time: multiplication, division with remainder, GCD, and multipoint evaluation (see, e.g., \cite{Von_zur_Gathen_Gerhard_2013}). 

We also use the result from \cite{KU11} that {\em modular composition} can be performed in $\widetilde{O}(n)$ time. This is the operation of computing $f(g(X)) \bmod h(X)$ where $f, g, h$ have degree $n$. Modular composition allows us to raise $X$ to large powers of $q$ modulo a polynomial $f(X)$ very efficiently, as captured in the following proposition:

\begin{proposition}
\label{prop:high-q-power}
Given $h(X) \in \F_q[X]$ of degree $n$, and an integer $a \ge 0$, the polynomial $X^{q^a} \bmod h(X)$ can be computed in time $\widetilde{O}(n)(\log q + \log a)$.  
\end{proposition}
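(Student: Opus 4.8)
The plan is to reduce the task to one exponentiation in the ring $\F_q[X]/(h(X))$ followed by $O(\log a)$ applications of modular composition, exploiting that composing Frobenius-power polynomials modulo $h$ \emph{adds} exponents.

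First I would compute $r(X) := X^q \bmod h(X)$ by repeated squaring in $\F_q[X]/(h(X))$: since $q$ has $O(\log q)$ bits and each multiplication-then-reduce in this ring costs $\widetilde O(n)$ (fast polynomial multiplication and division, as recalled in the preliminaries), this step runs in time $\widetilde O(n)\log q$. The case $a = 0$ is handled separately by returning $X \bmod h(X)$ directly.

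The key algebraic observation is the following: if $u, v \in \F_q[X]$ satisfy $u(X) \equiv X^{q^c} \pmod{h(X)}$ and $v(X) \equiv X^{q^d} \pmod{h(X)}$, then $u(v(X)) \equiv X^{q^{c+d}} \pmod{h(X)}$. To prove this, write $u(Z) = Z^{q^c} + h(Z)\,w(Z)$ for some $w \in \F_q[X]$ and substitute $Z = v(X)$, obtaining $u(v(X)) = v(X)^{q^c} + h(v(X))\,w(v(X))$. On one hand $v(X)^{q^c} \equiv (X^{q^d})^{q^c} = X^{q^{c+d}} \pmod{h(X)}$, because the map $t \mapsto t^{q^c}$ is additive in characteristic $p$ and hence preserves congruence modulo $h$. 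On the other hand $h(v(X)) \equiv h(X^{q^d}) = h(X)^{q^d} \equiv 0 \pmod{h(X)}$, where the middle equality uses that the coefficients of $h$ lie in $\F_q$ and are therefore fixed by the $q^d$-th power map. Verifying this last identity cleanly — in particular that $h \mid h(v(X))$ — is the one point that genuinely requires care; the rest of the argument is bookkeeping. In particular, specializing to $c = d$ gives $u(u(X)) \equiv X^{q^{2c}} \pmod{h(X)}$, so a single modular composition doubles the Frobenius exponent.

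Finally I would combine these with a repeated-doubling / binary-expansion scheme over the bits of $a$. Maintain a polynomial $Q_j(X) \equiv X^{q^{2^j}} \pmod{h(X)}$, initialized to $Q_0 = r$ and updated by $Q_{j+1} = Q_j(Q_j(X)) \bmod h(X)$, together with a running composite $P$ of those $Q_j$ for which the $j$-th bit of $a$ equals $1$ (with $P \equiv X$ initially). By the observation, after processing all $\lfloor \log_2 a\rfloor + 1$ bits we have $P \equiv X^{q^a} \pmod{h(X)}$. This uses $O(\log a)$ modular compositions, each costing $\widetilde O(n)$ by the Kedlaya–Umans algorithm \cite{KU11}, for a total of $\widetilde O(n)\log a$; adding the cost of computing $r$ yields the claimed bound $\widetilde O(n)(\log q + \log a)$.
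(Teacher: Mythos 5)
Your proof is correct and follows essentially the same route as the paper: compute $X^q \bmod h(X)$ by repeated squaring, obtain $X^{q^{2^j}} \bmod h(X)$ by iterated self-composition, and assemble $X^{q^a} \bmod h(X)$ by composing the terms corresponding to the binary expansion of $a$. The only difference is that you carefully justify the key congruence $u(v(X)) \equiv X^{q^{c+d}} \pmod{h(X)}$ (in particular that $h \mid h(v(X))$), which the paper simply asserts.
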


\begin{proof}
We first compute $X^q \mod h(X)$ by repeated squaring, then compose the result with itself modulo $h(X)$ repeatedly to get $X^{q^{2^i}} \bmod h(X)$ for $i = 0, 1, \ldots, \lfloor \log_2 a \rfloor$. After noting that 
\[X^{q^{s+t}} \equiv \left(X^{q^{s}} \bmod h(X) \right)\circ \left(X^{q^{t}} \bmod h(X)\right) \pmod{h(X)}\]
we can produce $X^{q^a} \bmod h(X)$ by composing a subset of the $X^{q^i} \bmod h(X)$ polynomials (corresponding to those $i$ for which $a \bmod 2^{i+1} = 1$).
\end{proof}

A key observation is the following:
\begin{lemma}
\label{lem:difference-polys}
Let $s, t$ be non-negative integers, and let $h(X) \in \F_q[X]$ be monic and squarefree. Then  \[\gcd(X^{q^s} - X^{q^t}, h(X))\]
consists of exactly the irreducible factors of $h$ whose degree divides $|s-t|$, and possibly a scalar factor of $-1$.
\end{lemma}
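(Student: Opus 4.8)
The plan is to reduce the difference of two $q$-power substitutions to the single polynomial $g_d(X) = X^{q^d} - X$, whose factorization into irreducibles is already recalled in the introduction, and then exploit the squarefreeness of $h$. First I would dispose of the trivial case $s = t$: here $X^{q^s} - X^{q^t} = 0$, so $\gcd(0,h)$ equals $h$ (up to a scalar), and since every positive integer divides $|s-t| = 0$ the asserted answer is also $h$. So assume $s \neq t$, and by symmetry assume $s > t$ — interchanging $s$ and $t$ only multiplies the polynomial by $-1$, hence multiplies its gcd with $h$ by $-1$, which is exactly the source of the scalar $-1$ permitted in the statement. Set $d = s - t \geq 1$.

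Next comes the one identity that makes everything work: since $q^t$ is a power of the characteristic of $\F_q$, raising to the $q^t$-th power is additive on $\F_q[X]$ (the ``freshman's dream'', iterated $t$ times over $\F_q$), so
\[
\bigl(X^{q^d} - X\bigr)^{q^t} \;=\; \bigl(X^{q^d}\bigr)^{q^t} - X^{q^t} \;=\; X^{q^{d+t}} - X^{q^t} \;=\; X^{q^s} - X^{q^t}.
\]
Hence $\gcd\bigl(X^{q^s} - X^{q^t},\, h\bigr) = \gcd\bigl(g_d(X)^{q^t},\, h\bigr)$, where $g_d(X) = X^{q^d} - X$.

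Finally I would invoke two standard facts. First, as recalled in the introduction, $g_d$ is the product of all monic irreducible polynomials in $\F_q[X]$ of degree dividing $d$; in particular it is squarefree (one can also see this directly since $g_d' = -1$). The quickest justification is that the roots of $g_d$ are precisely the elements of $\F_{q^d}$, and a monic irreducible $p \in \F_q[X]$ splits completely over $\F_{q^d}$ if and only if $\deg p \mid d$. Second, because $h$ is squarefree, every irreducible factor of $h$ occurs with multiplicity one, and therefore $\gcd(v^k, h) = \gcd(v, h)$ for any $v \in \F_q[X]$ and any $k \geq 1$, since both sides equal the product of the monic irreducibles dividing both $v$ and $h$. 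Applying this with $v = g_d$ and $k = q^t$, and then the factorization of $g_d$, shows that $\gcd\bigl(X^{q^s} - X^{q^t},\, h\bigr)$ is the product of exactly those monic irreducible factors of $h$ whose degree divides $d = |s-t|$, up to the scalar $-1$ already noted.

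The argument is essentially routine; the only places that call for a little care are the characteristic-$p$ identity above — in particular checking that the exponents compose correctly, $q^d \cdot q^t = q^s$ — and the standard structural fact about $g_d$, on which the whole reduction rests.
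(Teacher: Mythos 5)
Your proof is correct and follows essentially the same route as the paper's: both reduce $X^{q^s}-X^{q^t}$ to a $q^t$-th power of $X^{q^{|s-t|}}-X$ via the Frobenius identity and then invoke the factorization of $X^{q^d}-X$ together with the squarefreeness of $h$. Your write-up is somewhat more detailed (handling $s=t$ explicitly and justifying $\gcd(v^k,h)=\gcd(v,h)$), but there is no substantive difference in approach.
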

\begin{proof}
If $s \ge t$, we have that $(X^{q^s} - X^{q^t}) = (X^{q^{s-t}} - X)^{q^t}$;
otherwise $(X^{q^s} - X^{q^t}) = (X - X^{q^{t-s}})^{q^s} $
so the difference or its negation is a power of $X^{q^{|s-t|}} - X$. Thus $X^{q^s} - X^{q^t}$ is a power of a polynomial that is the product of exactly the irreducibles whose degree divides $|s-t|$ (possibly multiplied by $-1$). The lemma follows.
\end{proof}

\section{The framework}
\label{sec:framework}

In this section we describe a general framework for distinct-degree factorization, which captures the previous best algorithms. Our new algorithm is also described within this framework.

The input is a monic, squarefree polynomial of degree $n$. All of the polynomials discussed below are polynomials in the ring $\F_q[X]$.
\begin{theorem} \label{recur-split}
Suppose we have $k = k(n)$ polynomials $g_1(X), g_2(X), \ldots, g_k(X)$, and $\gamma \ge 0$ for which the following hold:
\begin{enumerate}
\item every monic squarefree polynomial of degree $n$ divides $\prod_{i = 1}^k g_i(X)$, and 
\item given a monic squarefree polynomial $h(X)$ with $\deg(h) \le n$ and integers $a \le b$, we can compute the {\em interval polynomial} 
\[\prod_{i=a}^b g_i(X) \bmod h(X)\]
    in time $\widetilde{O}(\deg(h)\cdot n^\gamma)\cdot \poly\log q$, and
\item given a monic squarefree polynomial $h(X)$ of degree at most $n$, and an integer $i$, we can compute the distinct-degree factorization of $\gcd(h, g_i)$ in time $\widetilde{O}(\deg(h) \cdot n^\gamma)\cdot \poly\log q$. 
\end{enumerate}
Then there is a randomized algorithm for computing the distinct degree factorization that runs in time $\widetilde{O}(n^{1 + \gamma}\log(k(n)))\cdot\poly\log q$.
\end{theorem}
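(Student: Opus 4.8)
The plan is to run a balanced binary divide-and-conquer over the index set $\{1,\dots,k\}$. Build a binary tree whose leaves are the singletons $\{1\},\dots,\{k\}$ and whose internal nodes are dyadic subintervals $[a,b]\subseteq[1,k]$, and attach to each node a monic squarefree polynomial $h_{[a,b]}\mid f$ maintained under the invariant that \emph{every irreducible factor of $h_{[a,b]}$ divides $\prod_{i=a}^{b}g_i(X)$}. At the root, set $h_{[1,k]}:=f$; the invariant holds there by hypothesis~(1). At an internal node $[a,b]$ with $a<b$, put $m=\lfloor(a+b)/2\rfloor$, use hypothesis~(2) to compute $P:=\prod_{i=a}^{m}g_i(X)\bmod h_{[a,b]}$, and set $h_{[a,m]}:=\gcd(h_{[a,b]},P)$ and $h_{[m+1,b]}:=h_{[a,b]}/h_{[a,m]}$, recursing into a child only if its polynomial is non-constant (the gcd and the division each cost $\widetilde{O}(\deg h_{[a,b]})$). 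At a leaf $\{a\}$, the invariant gives that $h_{\{a\}}$ is a product of distinct irreducibles each dividing $g_a$, hence $h_{\{a\}}\mid g_a$ and $\gcd(h_{\{a\}},g_a)=h_{\{a\}}$, so hypothesis~(3) yields the distinct-degree factorization of $h_{\{a\}}$. Finally, for each $j\in[n]$ let $f_j$ be the product over all leaves of the degree-$j$ component found there.

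For correctness I would first check that the invariant is preserved at an internal node: an irreducible $p\mid h_{[a,b]}$ that divides $\prod_{i=a}^{m}g_i$ divides $h_{[a,m]}$, while one that does not must divide $\prod_{i=m+1}^{b}g_i$ — because $p$ is prime, $h_{[a,b]}$ is squarefree, and $p\mid\prod_{i=a}^{b}g_i=\bigl(\prod_{i=a}^{m}g_i\bigr)\bigl(\prod_{i=m+1}^{b}g_i\bigr)$ — and hence divides $h_{[m+1,b]}=h_{[a,b]}/h_{[a,m]}$. Thus each irreducible factor of $f$ follows a unique root-to-leaf path and appears in exactly one $h_{\{a\}}$; since $f$ is squarefree, $f=\prod_a h_{\{a\}}$, and taking the product of the degree-$j$ pieces across leaves gives exactly the product of the degree-$j$ irreducible factors of $f$, as required by Problem~\ref{ddf}. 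It is essential here that the right child be the quotient $h_{[a,b]}/h_{[a,m]}$ and not an independent $\gcd(h_{[a,b]},\prod_{i=m+1}^{b}g_i\bmod h_{[a,b]})$: an irreducible can divide both one-sided products, and the naive choice would duplicate such a factor along the tree, potentially inflating the total work by a factor of $k$.

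For the running time, the tree has depth $\lceil\log_2 k\rceil+O(1)$. At any fixed level the non-constant nodes carry pairwise-disjoint sets of irreducible factors of $f$, and each of their polynomials divides $f$, so their degrees sum to at most $n$ and there are at most $n$ of them. The dominant cost at an internal node is the interval polynomial of hypothesis~(2), $\widetilde{O}(\deg h\cdot n^{\gamma})\cdot\poly\log q$, and at a leaf it is the $\widetilde{O}(\deg h\cdot n^{\gamma})\cdot\poly\log q$ of hypothesis~(3); summing over a level gives $\widetilde{O}(n^{1+\gamma})\cdot\poly\log q$, and over all $O(\log k)$ levels $\widetilde{O}(n^{1+\gamma}\log k)\cdot\poly\log q$. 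The closing assembly multiplies polynomials of total degree $\le n$ and is absorbed into this bound. The transformation itself is deterministic; the word ``randomized'' in the statement accommodates the randomized implementations of the subroutines in hypotheses~(2) and~(3) that arise in the applications.

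The step I expect to be the real (if modest) obstacle is not any single computation but the accounting that keeps the per-level degree sum at $O(n)$ — i.e.\ guaranteeing via the quotient trick that every irreducible factor is routed to a single leaf — since this is precisely what turns $k$ leaves into an $O(\log k)$ overhead rather than an $O(k)$ one.
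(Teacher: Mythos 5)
Your proposal is correct and is essentially the paper's own argument: the same midpoint split via the interval polynomial of hypothesis~(2) followed by a gcd, the same quotient $h/h_{\mathrm{lower}}$ for the right child to route each irreducible factor to a unique leaf, and the same invariant, with your per-level degree accounting being an equivalent repackaging of the paper's induction on interval length. No gaps.
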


The proof implements a strategy we call {\em recursive splitting}. We first describe how the previous algorithm of Kedlaya and Umans fits into this framework with $k(n) = n$ and $\gamma = 1/2$: we define $g_i(X) = X^{q^i} - X$. It is an elementary algebra fact that $g_i(X)$ is the product of all monic, irreducible polynomials whose degree divides $i$. Thus, Property 1 holds. For Property 2, Kedlaya and Umans use the baby-steps-giant-steps approach to show that the interval polynomials can be computed in the required time (details not repeated here). Finally, for Property 3, we use the well-known Divisor Bound: that the number of distinct positive integer divisors of an integer $i$ is at most $\exp(\log i/\log \log i) = i^{o(1)}$. By computing 
\[\gcd(h(X), X^{q^a}-X)\] for $a$ running over these divisors in increasing order, we can compute the distinct-degree factorization of $\gcd(h, g_i)$ in time \[\widetilde{O}(\deg(h)\cdot i^{o(1)})\cdot\poly\log q \le  \widetilde{O}(\deg(h)\cdot n^\gamma)\cdot\poly\log q\] as required. 

\begin{proof}[Proof of Theorem \ref{recur-split}]
We are given $f(X) \in \F_q[X]$ of degree $n$. Our algorithm recursively splits $f(X)$ into factors, ending with each factor being the product of irreducibles of a single degree. Each recursive call comes with an {\em interval} $(a,b)$ (with $1 \le a \le b \le k$) and the invariant that the polynomial to be factored divides the interval polynomial $\prod_{i=a}^b g_i(X)$. Initially, the interval is $(1, k)$ and Property 1 guarantees the invariant for the input polynomial $f(X)$.

The base case of the recursive algorithm is when $a = b$. In this case we return the distinct-degree factorization of $\gcd(f(X), g_a(X))$.

Otherwise, let $c = \lfloor (a + b)/2 \rfloor$ be the midpoint of the interval $(a, b)$. We compute the polynomial
\[f_{\mbox{lower}}(X) = \gcd (f(X), \prod_{i=a}^cg_i(X)).\]
There are three cases. If $f_{\mbox{lower}} = 1$ then $f$ divides $\prod_{i =c+1}^{b} g_i$ and we recursively split $f$ with the interval set to $(c+1, b)$. If $f_{\mbox{lower}} = f$ then we recursively split $f$ with the interval set to $(a, c)$. Otherwise $f_{\mbox{lower}}$ is a non-trival factor of $f$, and we recursively split $f_{\mbox{lower}}$ with the interval set to $(a,c)$ and $f_{\mbox{upper}} = f/f_{\mbox{lower}}$ with the interval set to $(c+1, b)$. In all cases the invariant is maintained.

Let $T(d, r)$ be the running time of this procedure when the input polynomial has degree $d$, and the interval has length $r = b - a +1$. We claim that for all $d \le n$
\[T(d, r) \le (\log (2r))\cdot \widetilde{O}(dn^{\gamma})\cdot\poly\log q.\]
The proof is by induction on $r$. The claim is true by Property 3 when $r = 1$. The recursive step (computing the interval polynomial, and one GCD) costs
\[\widetilde{O}(dn^{\gamma})\cdot\poly\log q\]
by Property 2, plus the time to recursively split $f$ with an interval of length $r/2$ (in the first two cases), or the time to recursively split $f_{\mbox{lower}}$ and $f_{\mbox{upper}}$ each with an interval of length $r/2$ (in the third case). By induction, this additional cost is
\[\log (2(r/2))\cdot \widetilde{O}(dn^{\gamma})\cdot\poly\log q\]
in the first two cases, and 
\[\log (2(r/2))\cdot \widetilde{O}(\deg(f_{\mbox{lower}})n^{\gamma})\cdot\poly\log q + \log (2(r/2))\cdot \widetilde{O}(\deg(f_{\mbox{upper}})n^{\gamma})\cdot\poly\log q\]
in the third case. Since $\deg(f_{\mbox{lower}}) + \deg(f_{\mbox{upper}}) \le d$, this is upper bounded by 
\[\log (2(r/2))\cdot \widetilde{O}(dn^{\gamma})\cdot\poly\log q.\]
Thus in all cases the overall time is at most 
\[\widetilde{O}(dn^{\gamma})\cdot\poly\log q + (\log (2r) - 1)\cdot \widetilde{O}(dn^{\gamma})\cdot\poly\log q \le (\log (2r)\cdot \widetilde{O}(dn^{\gamma})\cdot\poly\log q\]
as claimed\footnote{For ease of exposition, we have not made explicit the function of $d, n, q$ written as $\widetilde{O}(dn^{\gamma})\cdot\poly\log q$. To make the induction formally correct we should fix one explicit function that is simultaneously an upper bound for each occurrence of the expression in this proof, and ensure that that function is convex in $d$.}. 

The running time of the entire algorithm is $T(n, k(n)) \le \widetilde{O}(n^{1 + \gamma}\log(k(n)))\cdot\poly\log q$ as required.
Note that in the final output, each reported factor consists of irreducibles of a given degree, although there may be multiple reported factors with irreducibles of that degree. We can simply group these to produce the final distinct-degree factorization.
\end{proof}

\section{A number-theoretic conjecture}\label{sec:num_theo_conj}

In this section we introduce a family of number-theoretic conjectures, and in the following two sections, we show how, if true, they can be used to produce algorithms for polynomial factorization and integer factorization with exponents that beat the current best-known. We begin with a definition:

\begin{definition}{($n$-divisor property)} 
\label{divisor-cu}
A set $A$ of positive integers satisfies the $n$-divisor property if  for all $i \in \{1, \ldots,  n\}$, there exists $a \in A$ such that $i \mid a$.
\end{definition}

We will be interested in minimizing both the {\em magnitude} of the integers in $A$ and the {\em cardinality} of $A$. Of course the set $\{1,2,\ldots, n\}$ trivially satisfies the $n$-divisor property; a slightly more interesting fact is that the set $\{\lfloor n/2 \rfloor, \ldots, n\}$ of cardinality $\lceil n/2 \rceil$ satisfies the $n$-divisor property. More generally, without any further constraints, it is easy to satisfy the divisor property with sets of cardinality $m$ consisting of integers of magnitude at most $n^{n/m}$. This is close to tight, since the product of primes less than $n$ -- which has magnitude $\exp(n)$ -- must divide the product of the integers in $A$, implying that some element of $A$ must have magnitude $\exp(n)^{1/m}$.

Our main family of conjectures demands that the set $A$ have a difference-set structure. We further parameterize the magnitude and cardinality by $\alpha, \beta$  as follows:
\begin{conjecture} ($(\alpha,\beta)$-Divisor Conjecture)
For infinitely many $n$ there exist subsets $S, T \subseteq \ZZ^{+}$ such that the set of pairwise differences \[A = \{s-t \mid s \in S, t \in T\}\] satisfies the $n$-divisor property, and 
\begin{enumerate}
\item all elements of $S, T$ have magnitude at most $\exp(n^{\alpha})$
\item $|S|, |T| \le n^{\beta}$
\end{enumerate}
\end{conjecture}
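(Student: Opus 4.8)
The $(\alpha,\beta)$-Divisor Conjecture in the parameter range the paper cares about---$\alpha,\beta$ both bounded away from $1$---is genuinely open, so the most I can plan to actually \emph{establish} is the boundary regime $\alpha+\beta \ge 1+o(1)$ flagged in the abstract, together with the matching obstruction that makes $\alpha=\beta=1/3$ best possible. My plan for the positive side has two steps. First I would observe that the difference-set requirement is essentially free: if $A$ is any finite set of positive integers with the $n$-divisor property, of size $m$ and with all elements $\le M$, then fixing any integer $C>M$ and putting $T=\{C\}$, $S=\{C+a : a\in A\}$ gives $S,T\subseteq\ZZ^+$, $|T|=1$, $|S|=m$, all elements $\le 2M$, and the pairwise-difference set $\{s-t : s\in S, t\in T\}$ is exactly $A$. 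So it suffices to build a small $A$. For that I would partition $\{1,\dots,n\}$ into $m:=\lfloor n^{\beta}\rfloor$ consecutive blocks $B_1,\dots,B_m$ of length $\le \lceil n/m\rceil = n^{1-\beta+o(1)}$ and take $A=\{\mathrm{lcm}(B_1),\dots,\mathrm{lcm}(B_m)\}$: every $i\in\{1,\dots,n\}$ lies in some $B_k$ and hence divides $\mathrm{lcm}(B_k)\in A$, while $\mathrm{lcm}(B_k)\le \prod_{j\in B_k} j \le n^{|B_k|} = \exp(n^{1-\beta+o(1)})$. This proves the conjecture for every fixed $\alpha,\beta$ with $\alpha+\beta>1$ (in particular for $\alpha=\beta=1/2$), which is the routine part.

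For the lower bound---to confirm that $\alpha=\beta=1/3$ is the best one could hope for---I would count prime divisors. If $A=\{s-t : s\in S,\ t\in T\}$ has the $n$-divisor property, then for every prime $p\le n$ some element of $A$ is divisible by $p$; since each element is $\le \exp(n^{\alpha})$ it has at most $\log_2\exp(n^{\alpha})=O(n^{\alpha})$ distinct prime divisors, so summing over $A$ gives $|A|\cdot O(n^{\alpha})\ge \pi(n)=(1+o(1))\,n/\ln n$. Combining with $|A|\le |S|\,|T|\le n^{2\beta}$ yields $\alpha+2\beta\ge 1-o(1)$, and minimizing $\max(\alpha,\beta)$ subject to this constraint forces $\alpha=\beta=1/3$. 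Note this obstruction is consistent with, but strictly weaker than, the trivial construction's frontier $\alpha+\beta=1$: at $\alpha=\beta=1/3$ one has $\alpha+2\beta=1$ yet $\alpha+\beta=2/3<1$, so neither the easy construction nor the prime count settles the conjecture in the interesting range.

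The main obstacle is precisely this gap between the provable $\alpha+\beta=1+o(1)$ and the target $\alpha+2\beta=1$ (e.g.\ $\alpha,\beta<1/2$, or best-possibly $\alpha=\beta=1/3$): one must produce $S,T$ whose difference set---of size up to $n^{2\beta}$ rather than the $n^{\beta}$ one gets ``for free''---still contains a multiple of every $i\in\{1,\dots,n\}$ while staying inside $[1,\exp(n^{\alpha})]$. This requires genuinely exploiting the difference-set and (per the abstract) ``structured'' shape of $A$, for instance taking $S,T$ to be short arithmetic progressions or sumsets of such, and then arguing that the resulting $\approx n^{2\beta}$ highly-correlated differences can be made multiplicatively ``dense'' in this divisibility sense. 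I would expect this design problem---essentially a question about the multiplicative structure of differences of generalized arithmetic progressions---to be the crux, and I do not see how to resolve it; that it is left open is the point of the paper.
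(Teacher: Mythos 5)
You have correctly recognized that this statement is a conjecture the paper never proves, and your boundary analysis reproduces exactly what the paper sketches informally around the statement: the trivial construction (partition $[n]$ into $n^{\beta}$ blocks, take the product/lcm of each block as $S$, and a singleton $T$ — your shift by $C$ is a clean fix for the paper's choice $T=\{0\}\not\subseteq\ZZ^{+}$) giving $\alpha = 1-\beta+o(1)$, and the obstruction $\alpha \ge 1-2\beta$, which the paper derives from the magnitude of the product of all elements of $A$ versus the primorial $\exp((1+o(1))n)$ and you derive by counting distinct prime divisors — equivalent arguments with the same conclusion. Your identification of the open gap ($\alpha < 1-\beta$, down to $\alpha=\beta=1/3$) as the actual content of the conjecture matches the paper's own framing, so there is nothing to correct.
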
 

Expressing the argument above in terms of $\alpha, \beta$ yields the constraint $\alpha \ge 1 - 2\beta$; we do not know if $\alpha = 1 - 2\beta$ is achievable. In fact, we do not even know if $\alpha < 1 - \beta$ is achievable (whereas $\alpha = 1 - \beta + o(1)$ is easy by partitioning $[n]$ into $n^{\beta}$ sets of size $n^{1-\beta}$ each, and taking $S$ to be the products of the integers in each of these parts, and $T = \{0\}$).

Achieving $\alpha < 1 - \beta$ is an important challenge that forces one to make use of both sets $S, T$ (unlike the construction above which ``uses'' only one), and we might hope that such a construction would produce structured sets $S, T$. To be useful algorithmically in our intended settings, we need $S$ and $T$ to have the structure of generalized arithmetic progressions:

\begin{conjecture} (Strong $(\alpha,\beta)$-Divisor Conjecture) \label{gap}
For infinitely many $n$ there exist subsets $S, T \subseteq \ZZ^{+}$ such that the set of pairwise differences \[A = \{s-t \mid s \in S, t \in T\}\] satisfies the $n$-divisor property, with 
\begin{enumerate}
\item all elements of $S, T$ have magnitude at most $\exp(n^{\alpha})$
\item $|S|, |T| \le n^{\beta}$
\item There exist $c, c' \le n^{o(1)}$ such that
\[S = \{S_1 + ... + S_c\}, T = \{T_1 + ... +T_{c'}\}.\]
where the $S_i, T_j$ are arithmetic progressions.
\end{enumerate} 
\end{conjecture}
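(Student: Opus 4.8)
This is a conjecture, so what follows is an attack plan rather than a proof; I describe the route I would take to construct the sets $S,T$, aiming for parameters as small as possible (ideally approaching the best-possible $\alpha=\beta=1/3$).

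The first observation is that condition~3 — the generalized-arithmetic-progression structure — is the only real difficulty, because a difference $S-T$ of two GAPs is again a GAP, of dimension the sum of the two dimensions. So it suffices to exhibit GAPs $S,T$ of dimension $n^{o(1)}$, size $n^{\beta}$ and magnitude $\exp(n^{\alpha})$ whose difference set $S-T$ contains a multiple of every $i\in[n]$. My plan is a CRT/level-decomposition construction. Fix thresholds $1=y_0<y_1<\cdots$ and let level $\ell$ be responsible for the primes in the window $W_\ell=(y_{\ell-1},y_\ell]$; build $S=S_1+\cdots+S_r$ and $T=T_1+\cdots+T_r$ with each $S_\ell,T_\ell$ a short arithmetic progression containing $0$, so that $S,T$ are automatically GAPs of dimension $\le r$. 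Given a target $i\le n$, factor $i=\prod_\ell i_\ell$ into its pairwise-coprime level parts; for the few levels $\ell$ with $i_\ell>1$ pick $s_\ell\in S_\ell,\ t_\ell\in T_\ell$ with $s_\ell-t_\ell\equiv 0\pmod{i_\ell}$, and for the remaining levels take $s_{\ell'}=t_{\ell'}=0$. Arranging the common differences so that $S_\ell-T_\ell$ consists of multiples of a primorial-type ``staircase'' modulus absorbing the windows below level $\ell$, CRT lets these pieces combine into a single pair $(s,t)=\bigl(\sum_\ell s_\ell,\sum_\ell t_\ell\bigr)$ with $i\mid s-t$.

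For a window of small primes, $W_\ell\subseteq[1,n^{o(1)}]$, a single AP with common difference the lcm of all $W_\ell$-smooth numbers up to $n$ already hits every multiple in range, at negligible cost. The difficulty — and the reason this is only a conjecture — is the window(s) of large primes, roughly $(n^{1/2},n]$, where the level part $i_\ell$ of $i$ is essentially a single prime $q$ and we must get a GAP-difference divisible by $q$ (and, more annoyingly, by $cq$ for every small cofactor $c$ with $cq\le n$). There are $\Theta(n/\log n)$ such primes but only $|S-T|\le n^{2\beta}$ differences available, each of magnitude $\le\exp(n^{\alpha})$ and hence divisible by at most $O(n^{\alpha}/\log n)$ primes of size $n^{1/2}$; this already forces $\alpha+2\beta\ge 1$, and to approach $\alpha=\beta=1/3$ essentially every difference must be fully used — it must simultaneously carry a smooth factor (covering the cofactors $c$) and a product of many large primes, all while being a difference of two short progressions. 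Realizing such arithmetically rigid integers inside a low-dimensional GAP is, I expect, the genuine obstacle. To attack it I would: (i) run a probabilistic/greedy covering argument to confirm that near-optimal \emph{unstructured} $S,T$ exist, as a sanity check on the parameters; (ii) try to manufacture the covering from an explicit CRT reconstruction whose output is forced into GAP form (e.g.\ as a product of two short progressions); and (iii) bring in input from the anatomy of integers — Ford-type counts of how many $i\le n$ have a divisor in a given dyadic interval, and estimates on smooth numbers in short intervals — to bound how much each structured difference can cover. Even optimistically I would expect this plan to reach only $\alpha\approx\beta\approx 1/2$ (matching a ``primorial times a large prime'' heuristic, once one shows the large-prime part can be put in GAP form, which is itself not obvious); getting below $1/2$, and in particular down to $1/3$, seems to need a genuinely new idea, which is presumably why the authors pose it as a conjecture rather than a theorem.
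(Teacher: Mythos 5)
This statement is the paper's central \emph{conjecture}; the paper offers no proof of it, only supporting evidence, necessary conditions, and partial reductions (Proposition~\ref{prop:arith-prog-version}, Section~\ref{sec:ap_conj}). You correctly treat it as a conjecture and propose an attack plan rather than a proof, so there is no gap to flag in the usual sense --- there is simply nothing in the paper to verify your argument against.

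That said, your plan aligns closely with the paper's own discussion. Your counting constraint $\alpha + 2\beta \ge 1$ matches the paper's bound $\alpha \ge 1 - 2\beta$ (derived there from the primorial having magnitude $\exp(n)$). Your level-decomposition idea --- handling small-prime windows with a single lcm-based progression and combining the pieces --- is essentially Lemma 6.1 and the surrounding remarks, where the paper shows how to merge arithmetic progressions covering different subsets of $[n]$ into one generalized arithmetic progression at only constant-factor cost. Your CRT reconstruction in step (ii) is close in spirit to Lemma~\ref{lem:prime_char}, which recasts the prime case exactly as a CRT condition on a factorization of the primorial. And your identification of the hard cases (large primes together with their small cofactors, and near-balanced composites) matches the paper's two ``challenging sets.'' One caution on your step (ii): the paper's final counting lemma in Section~\ref{sec:ap_conj} shows that no black-box embedding of an \emph{arbitrary} divisor-covering family into an arithmetic progression can succeed without severe parameter loss, so any successful construction must exploit specific arithmetic structure of the covering --- a point you anticipate in step (iii) but which deserves emphasis, since it rules out the most naive version of your plan. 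Your pessimism about reaching below $\alpha, \beta = 1/2$ without a new idea is consistent with the paper's framing: the authors themselves note that even $\alpha < 1 - \beta$ is open and already forces genuine use of both sets $S$ and $T$.
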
 

An interesting fact is that if there are sets $A$ satisfying the $n$-divisor property with $|A| = n^{2\beta}$ and with integers of magnitude $\exp(n^\alpha)$ {\em and $A$ itself is an arithmetic progression}, then the Strong $(\alpha,\beta)$-Divisor Conjecture is true, as captured by the following proposition:
\begin{proposition}
\label{prop:arith-prog-version}
Suppose that for infinitely many $n$, there exists a set $A$ satisfying the $n$-divisor property with $A$ being an arithmetic progression, $|A| \le n^{2\beta}$, and having elements of magnitude at most $\exp(n^\alpha)$. Then the Strong $(\alpha,\beta)$-Divisor Conjecture is true.
\end{proposition}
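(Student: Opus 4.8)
The plan is to realize the single arithmetic progression $A$ as a difference set $S-T$ in which both $S$ and $T$ are themselves short arithmetic progressions, via a base-$\ell$ (radix) decomposition of the indices of $A$. First I would fix notation: write $A=\{a_0+jd : 0\le j\le m-1\}$ where $m=|A|\le n^{2\beta}$, the common difference is $d\ge 1$ (take $d=1$ if $m=1$), and $a_0\ge 1$; since $\max A=a_0+(m-1)d\le \exp(n^\alpha)$ we get $a_0\le\exp(n^\alpha)$ and, for $m\ge 2$, also $d\le\exp(n^\alpha)$. Put $\ell:=\lceil\sqrt m\,\rceil$, so that $\ell^2\ge m$ and $\ell\le\sqrt m+1\le n^\beta+1$.

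The key step is the factorization itself. Define
\[
S:=\{\,a_0+(\ell d)\,k \ :\ 1\le k\le \ell\,\},\qquad T:=\{\,d\,v\ :\ 1\le v\le \ell\,\}.
\]
These are arithmetic progressions of length $\ell$ (common differences $\ell d$ and $d$) consisting of positive integers, with largest elements $a_0+\ell^2 d$ and $\ell d$; using $\ell^2=O(m)$ and $md\le 2\exp(n^\alpha)$ (for $m\ge 2$; the case $m=1$ gives $S=\{a_0+1\}$, $T=\{1\}$ and is immediate), both are $O(\exp(n^\alpha))$. For $s=a_0+\ell d\,k\in S$ and $t=d\,v\in T$ we have $s-t=a_0+d(\ell k-v)$, and as $(k,v)$ ranges over $\{1,\dots,\ell\}^2$ the integer $\ell k-v$ ranges over exactly $\{0,1,\dots,\ell^2-1\}$ — every such $j$ is uniquely $j=\ell(k-1)+(\ell-v)$ with $k,v\in\{1,\dots,\ell\}$. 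Hence the set of pairwise differences equals
\[
\{\,s-t : s\in S,\ t\in T\,\}=\{\,a_0+dj : 0\le j\le \ell^2-1\,\}\ \supseteq\ A,
\]
and every difference is a positive integer.

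It then remains only to read off the conclusion. The difference set above consists of positive integers and contains $A$, and $A$ satisfies the $n$-divisor property, so the difference set does too (enlarging a set of positive integers only makes the defining existential condition easier to meet). Moreover $|S|=|T|=\ell\le n^\beta+1$, all elements of $S,T$ have magnitude $O(\exp(n^\alpha))$, and $S,T$ are each a single arithmetic progression, so the third condition of Conjecture~\ref{gap} holds with $c=c'=1\le n^{o(1)}$; carrying this out for each of the infinitely many $n$ for which such an $A$ exists gives the Strong $(\alpha,\beta)$-Divisor Conjecture — up to the harmless replacement of $n^\beta$ by $\lceil\sqrt{|A|}\,\rceil$ and of $\exp(n^\alpha)$ by a constant multiple thereof, neither of which affects the exponents (equivalently one may read the bounds as $n^{\beta+o(1)}$ and $\exp(n^{\alpha(1+o(1))})$). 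The combinatorial heart is the one-line radix identity; the only points I expect to need care are the shift keeping $S,T\subseteq\ZZ^{+}$ (starting $T$ at $d$ and $S$ at $a_0+\ell d$ rather than at $0$) and the routine magnitude bookkeeping, and these — rather than any conceptual difficulty — are the main things to get right.
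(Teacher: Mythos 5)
Your proposal is correct and is essentially the paper's own argument: both realize the index set $\{0,\dots,m-1\}$ of the arithmetic progression in base $\lceil\sqrt{m}\,\rceil$, taking $S$ to be the "giant steps" $a_0+k\ell d$ and $T$ the "baby steps" $vd$, so that $S-T$ recovers (a superset of) $A$. Your extra care about keeping $0\notin T$ and the constant-factor magnitude bookkeeping are fine refinements of the same construction.
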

\begin{proof}
Since $A$ is an arithmetic progression, there exist $b \ge 0$ and $c > 0$ such that \[A = \{b + ic: 0 \le i \le n^{2\beta}\}.\]
We define $S = \{b + i\lceil n^\beta \rceil c: 0 \le i \le n^{\beta}\}$ and $T = \{ic : 0 \le i \le n^{\beta}\}$. Since any nonnegative $i$ less than $n^{2\beta}$ can be written as $i = i_1\lceil n^{\beta} \rceil - i_0$ with $0\le i_1,i_0 \le n^{\beta}$, we find that every element $b+ic \in A$ can be written as $s - t$ with $s = b+ i_1n^{\beta}c \in S$ and $t = i_0 c \in T$. The proposition follows.
\end{proof}
Indeed the set $A = \{1, 2, \ldots, n \}$ trivially satisfies the $n$-divisor property, and via the above proposition the Strong $(\alpha,\beta)$-Divisor Conjecture is true for all $\alpha > 0$ and $\beta = 1/2$. As we will see in the next section, this already implies an exponent $3/2$ algorithm in our framework.

\section{Using the conjecture to factor polynomials in $\F_q[X]$}\label{sec:poly_factoring}

In this section we show that the Strong $(\alpha, \beta)$-Divisor Conjecture yields polynomials $g_i(X)$ meeting the requirements of Theorem \ref{recur-split}. As we remarked, proving the Strong $(\alpha, \beta)$-Divisor Conjecture for any $\alpha < 1 - \beta$ is a challenge (and on the cusp of what is trivially true). We will see in our main theorem (Theorem \ref{thm:randomized-DD}) that proving the conjecture for $\alpha < 1 - \beta$ leads to a subquadratic algorithm, and proving the conjecture for $\alpha, \beta < 1/2$ would lead to a factoring algorithm with exponent strictly less than $3/2$, beating the current best-known algorithm. The best exponent possible in this framework is $4/3$, achieved when $\alpha = \beta = 1/3$ (recall that $\alpha$ must be at least $1 - 2\beta$).

For the remainder of this section, we assume that the Strong $(\alpha, \beta)$-Divisor Conjecture holds for fixed $\alpha, \beta$. The polynomials $g_i(X)$ needed for Theorem \ref{recur-split} are taken to be $X^{q^{s}} - X^{q^t}$ for $s \in S, t \in T$. 

The ordering of the $g_i(X)$ polynomials is important. We associate $g_0, \ldots, g_{|S|\cdot|T|-1}$ with the $S \times T$ as follows: if $s_0, \ldots s_{|S|-1}$ is an enumeration of $S$ and $t_0, \ldots, t_{|T|-1}$ is an enumeration of $T$ then we set 
\begin{equation} 
g_{j+ \ell|S|}(X) = X^{q^{s_j}} - X^{q^{t_\ell}}
\label{eq:gis}
\end{equation}
for $0 \le j \le |S|-1$ and $0 \le \ell \le |T|-1$.
\begin{proposition}
\label{prop:property1}
Suppose that the Strong $(\alpha, \beta)$-Divisor Conjecture holds, and that polynomials $g_i(X)$ are defined as in Equation (\ref{eq:gis}). Then every monic squarefree polynomial of degree $n$ divides $\prod_i g_i(X)$. 
\end{proposition}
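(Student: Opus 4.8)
The plan is to prove the claim one irreducible factor at a time, combining the $n$-divisor property of $A$ with Lemma~\ref{lem:difference-polys}. First I would fix a monic squarefree polynomial $f(X)$ of degree $n$ and write $f = p_1 p_2 \cdots p_m$ as a product of distinct monic irreducibles, with $d_k = \deg(p_k)$. Each $d_k$ satisfies $1 \le d_k \le n$, so the $n$-divisor property of $A = \{s - t : s \in S,\, t \in T\}$ supplies, for every $k$, a pair $s \in S$, $t \in T$ with $d_k \mid (s - t)$, and hence $d_k \mid |s - t|$.

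Next I would feed this pair $(s,t)$ into Lemma~\ref{lem:difference-polys} with $h = f$: the polynomial $\gcd(X^{q^{s}} - X^{q^{t}},\, f(X))$ is, up to a scalar $-1$ (irrelevant for divisibility), the product of exactly those irreducible factors of $f$ whose degree divides $|s-t|$. Since $d_k \mid |s-t|$, the factor $p_k$ is among them, so $p_k \mid X^{q^{s}} - X^{q^{t}}$. Writing $s = s_j$ and $t = t_\ell$ in the fixed enumerations of $S$ and $T$, Equation~(\ref{eq:gis}) identifies $X^{q^{s}} - X^{q^{t}}$ with $g_{j + \ell|S|}(X)$, so $p_k$ divides one of the polynomials $g_i(X)$ and therefore $p_k \mid \prod_i g_i(X)$.

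Finally, since $f$ is squarefree its irreducible factors $p_1, \dots, p_m$ are pairwise coprime, and each divides $\prod_i g_i(X)$; hence their product $f$ divides $\prod_i g_i(X)$, which is the desired conclusion. I do not expect any real obstacle here — the proposition is essentially immediate once Lemma~\ref{lem:difference-polys} and the $n$-divisor property are in place. The only small points to keep straight are that every irreducible factor of $f$ has degree at most $n = \deg(f)$ (so the $n$-divisor property genuinely applies), that the $\pm 1$ scalar in Lemma~\ref{lem:difference-polys} does not affect divisibility, and that the index $j + \ell|S|$ in Equation~(\ref{eq:gis}) ranges over all of $S \times T$. The substantive work in making the $g_i(X)$ useful lies not in this proposition but in verifying Properties~2 and~3 of Theorem~\ref{recur-split}.
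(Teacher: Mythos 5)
Your proposal is correct and follows essentially the same route as the paper: use the $n$-divisor property to find a pair $(s,t)$ for each irreducible factor's degree, invoke Lemma~\ref{lem:difference-polys} to conclude that factor divides the associated $g_i$, and assemble the divisibility of $f$ from its distinct irreducible factors. Your added remarks (pairwise coprimality of the factors, the harmlessness of the $\pm 1$ scalar, the indexing covering all of $S \times T$) are fine explicit versions of steps the paper leaves implicit.
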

\begin{proof}
Let $f(X)$ be a monic squarefree polynomial of degree $n$. We argue that every (monic) irreducible factor of $f(X)$ divides some $g_i(X)$. Let $d \le n$ be the degree of an irreducible factor $h(X)$. By the $n$-divisor property of the set $S-T$, there is some $s \in S, t \in T$ for which $d | (s-t)$ (and hence $d$ divides their absolute value). Let $g_i(X)$ be the polynomial associated with the pair $(s, t)$. Then Lemma \ref{lem:difference-polys} implies that $\gcd(g_i, f)$ contains $h$ as an irreducible factor, and hence $h$ divides $g_i$. 
\end{proof}

\subsection{Computing the interval polynomials}

In this subsection we show how to compute the interval polynomials to satisfy Property 2 of Theorem \ref{recur-split}, assuming the Strong $(\alpha,\beta)$-Divisor Conjecture. Our first lemma shows how to efficiently compute the polynomials $X^{q^u} \bmod h(X)$ for all $u \in S \cup T$.

\begin{lemma}
\label{lem:preprocessing}
Suppose that the Strong $(\alpha, \beta)$-Divisor Conjecture holds, and fix $h(X) \in \F_q[X]$ of degree $d \le n$. There is a procedure running in $\widetilde{O}(d(n^{\alpha+o(1)} + n^{\beta})) \cdot \poly\log q$ time that produces \[X^{q^u} \bmod h(X)\] for all $u \in S \cup T$.
\end{lemma}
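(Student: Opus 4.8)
The plan is to exploit the generalized-arithmetic-progression structure of $S$ and $T$ guaranteed by Conjecture~\ref{gap}, together with the composition identity $X^{q^{s+t}} \equiv (X^{q^s}) \circ (X^{q^t}) \pmod{h(X)}$ from Proposition~\ref{prop:high-q-power}. Since $S = \{S_1 + \cdots + S_c\}$ with each $S_i$ an arithmetic progression and $c \le n^{o(1)}$, each element $u \in S$ is a sum of $c$ terms, one drawn from each $S_i$. The key point is that raising $X$ to the $q^u$ power modulo $h$ then factors as a composition of $c$ maps, one for each summand. So it suffices to precompute, for every arithmetic progression $P \in \{S_1,\dots,S_c,T_1,\dots,T_{c'}\}$ and every element $p \in P$, the polynomial $X^{q^p} \bmod h(X)$; then $X^{q^u} \bmod h(X)$ for any $u \in S \cup T$ is obtained by composing the relevant precomputed pieces, at a cost of $c$ (resp.\ $c'$) modular compositions per element, i.e.\ $n^{o(1)}$ compositions each costing $\widetilde{O}(d)\poly\log q$.

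First I would bound the magnitude of the arithmetic-progression data. Each $P = \{b_P + j \delta_P : 0 \le j \le \ell_P\}$ must have all its elements of magnitude at most $\exp(n^\alpha)$ (they are among the summands building $S$ or $T$, whose elements are bounded by $\exp(n^\alpha)$; one may assume WLOG the progressions themselves are bounded by this, or absorb a polynomial blowup into the $n^{o(1)}$ and $n^{\alpha + o(1)}$ slack). Also each $P$ has length $\ell_P + 1 \le n^\beta$ at most, since $|S|,|T| \le n^\beta$ and the progressions are among the building blocks — again up to $n^{o(1)}$ factors. Now to precompute $X^{q^p} \bmod h(X)$ for all $p$ in a single progression $P$: compute $X^{q^{b_P}} \bmod h(X)$ once (cost $\widetilde{O}(d)(\log q + \log b_P) = \widetilde{O}(d)(\log q + n^\alpha)$ by Proposition~\ref{prop:high-q-power}), compute $X^{q^{\delta_P}} \bmod h(X)$ once (same cost), and then walk along the progression: $X^{q^{b_P + (j+1)\delta_P}} \equiv (X^{q^{b_P + j\delta_P}}) \circ (X^{q^{\delta_P}}) \pmod{h(X)}$, which is one modular composition, $\widetilde{O}(d)\poly\log q$, per step, and there are at most $n^\beta$ steps. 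Summing over the $c + c' = n^{o(1)}$ progressions gives $\widetilde{O}(d)\cdot n^{o(1)} \cdot (n^\alpha + n^\beta)\poly\log q$, which is $\widetilde{O}(d(n^{\alpha + o(1)} + n^\beta))\poly\log q$ as claimed.

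Then I would assemble: for each $u \in S$, write $u = p_1 + \cdots + p_c$ with $p_i \in S_i$, and form $X^{q^u} \bmod h(X)$ by composing the $c$ precomputed polynomials $X^{q^{p_i}} \bmod h(X)$ in sequence — $c - 1 = n^{o(1)}$ modular compositions, $\widetilde{O}(d)\poly\log q$ each; doing this for all $n^\beta$ elements of $S$ (and analogously for $T$) costs $\widetilde{O}(d \cdot n^{\beta + o(1)})\poly\log q$, which is absorbed. A subtlety to address is that the decomposition $u = p_1 + \cdots + p_c$ need not be unique and some $u$ might arise from several tuples; this is harmless — we just pick one representation per $u \in S \cup T$ and compute once. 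Another minor point: I should note that composition of these particular polynomials is associative modulo $h$ (it corresponds to addition of exponents of $q$, hence iteration of the $q$-power Frobenius-type map), so the order of composition doesn't matter and no parenthesization issue arises.

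I expect the main obstacle — really the only non-routine matter — to be cleanly handling the bookkeeping around the $n^{o(1)}$ slack: the conjecture only gives $c, c' \le n^{o(1)}$ and magnitude $\le \exp(n^\alpha)$ for elements of $S, T$ (not a priori for the individual progression summands or their common differences), so I need to argue that one may take the arithmetic-progression building blocks themselves to have bounded magnitude and bounded length, or else track how errors propagate. The honest fix is that if the final sums $S_1 + \cdots + S_c$ are bounded by $\exp(n^\alpha)$ and have $\le n^\beta$ elements, then after discarding negative or absurdly large progressions one loses nothing, and each surviving $|S_i| \le |S| \le n^\beta$ and each surviving element is $\le \exp(n^\alpha)$ in magnitude up to sign; the $\log b_P$ and $\log \delta_P$ terms in Proposition~\ref{prop:high-q-power} are therefore $O(n^\alpha)$. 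Everything else is a direct application of repeated modular composition with the cost accounting above.
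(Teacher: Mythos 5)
Your proof is correct and follows essentially the same route as the paper's: compute $X^{q^{a_i}}$ and $X^{q^{b_i}}$ via Proposition \ref{prop:high-q-power}, walk along each arithmetic progression by repeated modular composition, then compose across the $c \le n^{o(1)}$ progressions to assemble each $u \in S \cup T$, with the same cost accounting. Your extra care about whether the individual progressions (as opposed to the sumsets $S,T$) have bounded magnitude and length is a reasonable point the paper glosses over, but it does not change the argument.
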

\begin{proof}
Recall that $S = S_1 + \cdots + S_c$ for $c \le n^{o(1)}$, where \[S_i = \{a_i + jb_i: j = 0, 1, \ldots, d_i\}\] for integers $a_i \ge 0, b_i > 0$ and $d_i \ge 0$. Observe that we can compute the polynomials $X^{q^{a_i+jb_i}} \bmod h(X)$ for $j = 0,1,\ldots, d_i$ by first computing $X^{q^{a_i}}$ and $X^{q^{b_i}}$, and then composing the first repeatedly with the second, $d_i$ times, with all computations done modulo $h(X)$. Repeating for $i = 1, 2, \ldots, c$ we obtain the polynomials $X^{q^{u_i}}$ for all $u_i \in S_i$, and for all $S_i$. Finally, for each $(u_1, u_2, \ldots, u_c) \in S_1 \times S_2 \times \cdots \times S_c$ we compute 
\[X^{q^{u_1}} \circ X^{q^{u_2}} \circ \cdots \circ X^{q^{u_c}} \pmod{h(X)}.\] 
This gives $X^{q^u} \bmod h(X)$ for every $u \in S$, and we need only to repeat for $T$ to obtain the desired set of polynomials. The overall cost is $O(d)\log q$ to compute $X^q \bmod h(X)$ by repeated squaring, and then \[\sum_{i=1}^c \left ( \lceil \log a_i \rceil + \lceil \log b_i \rceil + (d_i+1) \right ) + |S|\]
modular compositions (using Proposition \ref{prop:high-q-power}), for a total cost of $\widetilde{O}(d(n^{\alpha+o(1)} + n^{\beta})) \cdot \poly\log q$. Here we used the fact that $a_i, b_i \le \exp(n^\alpha)$ and $\prod_i (d_i+1) = |S| \le n^\beta$.
\end{proof}

Now, we show how to efficiently compute a specified interval polynomial. This proof uses as a key step the baby-steps-giant-steps method of Kaltofen and Shoup \cite{Kaltofen_Shoup_1998}.

\begin{lemma}
\label{lem:baby-steps-giant-steps}
Suppose that the Strong $(\alpha, \beta)$-Divisor Conjecture holds, and let the polynomials $g_i(X)$ be as defined in Equation (\ref{eq:gis}). Fix $h(X) \in \F_q[X]$ of degree $d \le n$ and a specified interval $a \le b$. Given the polynomials $X^{q^u} \bmod h(X)$ for each $u \in S \cup T$, there is a procedure that produces the interval polynomial
\[\prod_{i = a}^b g_i(X) \bmod h(X)\]
in time $\widetilde{O}(dn^{\beta}) \cdot \poly\log q$.
\end{lemma}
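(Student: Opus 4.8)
The plan is to reduce the computation of an arbitrary interval polynomial $\prod_{i=a}^b g_i(X) \bmod h(X)$ to a small number of ``column products'' — products over a full range $j = 0, \ldots, |S|-1$ of the $g_{j+\ell|S|}$ for a fixed $\ell$ — plus two ``partial column products'' at the two ends of the interval. Recall from Equation~(\ref{eq:gis}) that $g_{j+\ell|S|}(X) = X^{q^{s_j}} - X^{q^{t_\ell}}$, so the indices $0,\ldots,|S||T|-1$ are naturally arranged in $|T|$ blocks of length $|S|$, the $\ell$-th block corresponding to the fixed $T$-element $t_\ell$. Writing $a = j_a + \ell_a|S|$ and $b = j_b + \ell_b|S|$, the interval $[a,b]$ consists of a suffix of block $\ell_a$ (indices $j_a,\ldots,|S|-1$), all of blocks $\ell_a+1,\ldots,\ell_b-1$, and a prefix of block $\ell_b$ (indices $0,\ldots,j_b$); if $\ell_a = \ell_b$ it is just one sub-range of one block. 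So it suffices to show (i) each full block product $P_\ell(X) := \prod_{j=0}^{|S|-1}\bigl(X^{q^{s_j}} - X^{q^{t_\ell}}\bigr) \bmod h(X)$ can be computed in $\widetilde O(dn^\beta)\poly\log q$, (ii) each partial block product can be computed within the same bound, and (iii) multiplying together the $O(n^\beta)$ relevant block products costs $\widetilde O(dn^\beta)$.

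For step (i), fix $\ell$ and set $c := X^{q^{t_\ell}} \bmod h(X)$, which is among the precomputed polynomials. We must compute $\prod_{j=0}^{|S|-1}(u_j(X) - c(X)) \bmod h(X)$ where $u_j := X^{q^{s_j}} \bmod h(X)$ are also precomputed. This is exactly the shape handled by the Kaltofen--Shoup baby-steps-giant-steps method: form the bivariate polynomial $\prod_j (Y - u_j(X))$ — a polynomial in $Y$ of degree $|S|$ with coefficients in $\F_q[X]/(h)$ — by a subproduct-tree multiplication ($\widetilde O(|S|)$ multiplications in $\F_q[X]/(h)$, i.e. $\widetilde O(dn^\beta)\poly\log q$), and then evaluate it at $Y = c(X)$ by Horner or, better, by the standard baby-steps-giant-steps evaluation so that the cost stays $\widetilde O(d|S|)\poly\log q = \widetilde O(dn^\beta)\poly\log q$. (Here I am using $|S|, |T| \le n^\beta$.) Alternatively, and perhaps more transparently, one can evaluate the degree-$|S|$ polynomial $\prod_j(Y - u_j(X))$ at the single point $Y=c$ directly via the subproduct tree / remainder-tree machinery over the ring $\F_q[X]/(h)$, all of whose basic operations cost $\widetilde O(d)\poly\log q$; the tree has $O(n^\beta)$ leaves, giving the claimed bound. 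Step (ii) is identical except that the product runs over a sub-range $\{j_a,\ldots,|S|-1\}$ or $\{0,\ldots,j_b\}$ of the $u_j$'s, which only makes the subproduct tree smaller. Step (iii) is immediate: there are at most $|T| \le n^\beta$ block (and partial-block) factors, each of degree at most $n \cdot |S| \le n^{1+\beta}$ before reduction but of degree $< d$ after reduction mod $h$, so multiplying them in a balanced tree modulo $h$ costs $\widetilde O(dn^\beta)\poly\log q$.

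The main obstacle I anticipate is bookkeeping rather than a genuine mathematical difficulty: one has to be careful that the ordering in Equation~(\ref{eq:gis}) really does make every interval $[a,b]$ a union of at most $|T|$ full blocks and at most two partial blocks (this is where the specific ``$S$ fast, $T$ slow'' enumeration matters), and one has to make sure the baby-steps-giant-steps evaluation of a degree-$|S|$ univariate-in-$Y$ polynomial over $\F_q[X]/(h)$ at a single point $c(X)$ is correctly accounted as $\widetilde O(d|S|)\poly\log q$ — which is the core Kaltofen--Shoup estimate, here applied with ring $\F_q[X]/(h)$ in place of $\F_q$. I would also double-check that reducing each $u_j$ and $c$ mod $h$ up front (already guaranteed by the hypothesis that we are \emph{given} $X^{q^u}\bmod h(X)$ for all $u \in S \cup T$) keeps every intermediate polynomial of degree $<d$, so that the per-operation cost is uniformly $\widetilde O(d)\poly\log q$. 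Note also that a harmless scalar $\pm 1$ ambiguity from Lemma~\ref{lem:difference-polys} never arises here because we are forming the $g_i$ literally as $X^{q^{s_j}} - X^{q^{t_\ell}}$, not as $X^{q^{|s_j-t_\ell|}}-X$.
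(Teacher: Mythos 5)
Your decomposition of the interval into a suffix block, a run of full blocks, and a prefix block matches the paper's $Q_{\mbox{left}}, Q_{\mbox{middle}}, Q_{\mbox{right}}$, and your treatment of the two partial blocks is fine. But there is a genuine gap in how you handle the middle: you compute each full block product $P_\ell(X) = \prod_{j=0}^{|S|-1}(X^{q^{s_j}} - X^{q^{t_\ell}})$ \emph{separately}, at a cost of $\widetilde O(|S|)$ operations in $\F_q[X]/(h)$ per block (building the subproduct tree, or even just Horner evaluation at the single point $Y = c$, already costs $\Omega(|S|)$ ring operations). Since there can be up to $|T| \le n^\beta$ full blocks in the middle, your total cost is $\widetilde O(d\,|S|\,|T|)\poly\log q = \widetilde O(dn^{2\beta})\poly\log q$, not the claimed $\widetilde O(dn^{\beta})\poly\log q$. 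For $\beta = 1/2$ this is $\widetilde O(dn)$, i.e.\ no better than multiplying the $g_i$ one at a time, so the proof as written misses the entire content of the lemma. Your step (iii) only accounts for multiplying the block products together, not for producing them.

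The missing idea is to amortize across blocks via \emph{fast multipoint evaluation}, which is exactly the Kaltofen--Shoup step the paper uses: over $R = \F_q[X]/(h)[Z]$, form the single polynomial $p(Z) = \prod_{\ell \text{ middle}} \bigl(Z - X^{q^{t_\ell}}\bigr)$ of degree at most $|T|$ ($O(|T|)$ ring operations), evaluate it at \emph{all} $|S|$ points $X^{q^{s}}$ simultaneously in $\widetilde O(|S|+|T|)$ ring operations, and multiply the $|S|$ resulting values; this yields $\prod_{s\in S}\prod_{\ell}(X^{q^s}-X^{q^{t_\ell}}) \bmod h(X)$ in $\widetilde O(n^\beta)$ ring operations total. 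Your approach can be repaired in the transposed direction: build $P(Y) = \prod_{j}(Y - X^{q^{s_j}})$ once and evaluate it at all middle points $X^{q^{t_\ell}}$ by one fast multipoint evaluation (rather than one point per block), then take the product of the values, adjusting for the sign $(-1)^{|S|}$ per block. Either way, the batching step is essential and is what your proposal omits.
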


\begin{proof}
We observe that the integer interval $(a, b)$ is the union of three disjoint sets. Writing $a, b$ in base-$|S|$ we have $a = a_0 + a_1|S|$ and $b = b_0 + b_1|S|$. The three sets are then:
\begin{eqnarray*}
    Q_{\mbox{left}} & = & \{a_0, a_0+1, \ldots, |S|-1\} + a_1|S| \\
    Q_{\mbox{middle}} & = & \{(a_1+1)|S|, \ldots, b_1|S|\} \\
    Q_{\mbox{right}} & = & \{0, 1, \ldots, b_0\} + b_1|S| \\
\end{eqnarray*}
The first set has cardinality $|S|-a_0$, the second has cardinality $(b_1 - a_1)|S|$ (or zero if $a_1 = b_1$), and the third set has cardinality $b_1 + 1$. Since the first and third have size at most $n^\beta$, we can afford to produce the associated polynomials one by one. The set $Q_{\mbox{middle}}$ may have size up to $n^{2\beta}$ so we will use a different strategy to produce the product of its associated polynomials.

Recall from Equation (\ref{eq:gis}) that $g_{j+ \ell|S|}(X) = X^{q^{s_j}}-X^{q^{t_\ell}}$. We produce the polynomials associated with $Q_{\mbox{left}}$ by taking the difference of the polynomials
\[X^{q^{s_{a_0}}}, X^{q^{s_{a_0+1}}}, \ldots, X^{q^{s_{|S|-1}}}\]
with $x^{q^{t_{a_1}}}$ (all modulo $h(X)$). Similarly, we produce the polynomials associated with $Q_{\mbox{right}}$ by taking the difference of the polynomials
\[X^{q^{s_{0}}}, X^{q^{s_{1}}}, \ldots, X^{q^{s_{b_0}}}\]
with $x^{q^{t_{b_1}}}$ (all modulo $h(X)$). 

Since all the polynomials involved so far are given (in the statement of the lemma), producing the  polynomials associated with $Q_{\mbox{left}}$ and  $Q_{\mbox{right}}$ takes $O(|S|)$ operations on degree $d$ polynomials. We take the product of all of them modulo $h(X)$ with an additional $O(|S|)$ operations on degree $d$ polynomials. 

All that remains is to compute the product associated with $Q_{\mbox{middle}}$. For this we use the idea of Kaltofen and Shoup \cite{Kaltofen_Shoup_1998}. Over the ring $R = \F_q[X]/h(X)[Z]$, we form the following polynomial of degree at most $|T|$ 
\[p(Z) = \prod_{j = a_1+1}^{b_1}(Z - X^{q^{t_j}})\]
which entails $O(|T|)$ operations in the ring $R$ since each $X^{q^{t_j}} \bmod h(X)$ is given. We then evaluate $p(Z)$ at $X^{q^{s}}$ for all $s \in S$ using fast multipoint evaluation, which entails $\widetilde{O}(|S| + |T|)$ operations in the ring $R$, since each $X^{q^{s}} \bmod h(X)$ is given. Finally we take the product of these evaluations, again expending at most $O(|S|)$ operations in the ring $R$. Altogether, we obtain the ring element:
\[\prod_{s \in S} \prod_{j = a_1+1}^{b_1}  (X^{q^{s}} - X^{q^{t_j}}) \pmod{h(X)}.\]

We now multiply this polynomial with the polynomials associated with $Q_{\mbox{left}}$ and $Q_{\mbox{right}}$, all modulo $h(X)$, which completes the proof.
\end{proof}

Combining Lemma \ref{lem:preprocessing} and Lemma \ref{lem:baby-steps-giant-steps}, we obtain:
\begin{theorem}
\label{thm:interval-polys}
Suppose that the Strong $(\alpha, \beta)$-Divisor Conjecture holds, and that polynomials $g_i(X)$ are defined as in Equation (\ref{eq:gis}). Then given a monic squarefree polynomial $h(X)$ with $\deg(h) \le n$ and integers $a \le b$, the interval polynomial 
\[\prod_{i=a}^b g_i(X) \bmod h(X)\]
can be produced in time $\widetilde{O}(\deg(h)\cdot n^{\max(\alpha, \beta)+o(1)})\cdot \poly\log q$.
\end{theorem}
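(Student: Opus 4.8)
The plan is to simply chain the two preceding lemmas and then simplify the resulting exponent. Set $d := \deg(h) \le n$. First I would invoke Lemma~\ref{lem:preprocessing} on $h$ --- whose hypotheses (monic, squarefree, degree $\le n$) are exactly those assumed here --- to build, once and for all, the table of polynomials $X^{q^u} \bmod h(X)$ as $u$ ranges over all of $S \cup T$. By that lemma this preprocessing step runs in time $\widetilde{O}\!\left(d\,(n^{\alpha + o(1)} + n^\beta)\right)\cdot \poly\log q$.

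Next I would hand this table, together with $h$ and the interval endpoints $a \le b$, to the procedure of Lemma~\ref{lem:baby-steps-giant-steps}; its input interface is precisely ``the polynomials $X^{q^u}\bmod h(X)$ for each $u\in S\cup T$,'' so the two fit together with nothing to reconcile. That procedure outputs the requested interval polynomial $\prod_{i=a}^b g_i(X)\bmod h(X)$ in additional time $\widetilde{O}(d\,n^\beta)\cdot\poly\log q$.

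Summing the two contributions gives a total of $\widetilde{O}\!\left(d\,(n^{\alpha+o(1)} + n^\beta)\right)\cdot\poly\log q$, and the final step is the trivial observation that $n^{\alpha+o(1)} + n^\beta \le n^{\max(\alpha,\beta)+o(1)}$, the $o(1)$ absorbing both the subpolynomial slack coming from the $c \le n^{o(1)}$ arithmetic progressions comprising $S$ and $T$ and the leading constant $2$. This yields the claimed bound $\widetilde{O}\!\left(\deg(h)\cdot n^{\max(\alpha,\beta)+o(1)}\right)\cdot\poly\log q$. There is no real obstacle in this theorem --- the substance is entirely contained in Lemmas~\ref{lem:preprocessing} and~\ref{lem:baby-steps-giant-steps} --- and the only point worth care is that the running time is genuinely expressed in terms of $\deg(h)$ rather than $n$, so that it can later be applied to the varying-degree subpolynomials $h$ that arise in the recursion underlying Theorem~\ref{recur-split} (which is exactly the form in which Property~2 of that theorem is stated).
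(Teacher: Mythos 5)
Your proposal is correct and is exactly the paper's argument: the theorem is stated as an immediate consequence of combining Lemma~\ref{lem:preprocessing} (preprocessing the table of $X^{q^u} \bmod h(X)$ for $u \in S \cup T$) with Lemma~\ref{lem:baby-steps-giant-steps}, and then absorbing the sum $n^{\alpha+o(1)} + n^{\beta}$ into $n^{\max(\alpha,\beta)+o(1)}$. Your closing remark about keeping the bound in terms of $\deg(h)$ so it satisfies Property~2 of Theorem~\ref{recur-split} is also precisely the point of this formulation.
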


Thus the Strong $(\alpha, \beta)$-Divisor Conjecture implies that Property 2 of Theorem \ref{recur-split} holds with $\gamma = \max(\alpha, \beta) + o(1)$.

\subsection{Distinct-degree factorization of $\gcd(h, g_i)$}

Recall that $g_i(X) = X^{q^s} - X^{q^t}$ for some $s \in S, t \in T$, and that $\gcd(h, g_i)$ consists of exactly the irreducibles of $h$ whose degree divides $|s - t|$. Set $P = |s-t|$. 

If $P$ has only few distinct divisors, then we could split off from $h$ the irreducibles of those degrees, one after the other. But it could be that $P$ has {\em many} distinct divisors, if $P$ is the product of small primes. For example, $P$ might be the product of the first $\approx n^\alpha$ primes, and then well-known results on the density of smooth integers tell us that $P$ would have $\Omega(n)$ distinct divisors between $1$ and $n$. So we cannot, in general, afford to try all candidate divisors $d$ and split off the associated distinct-degree factor of $h$ -- this would take time at least $\Omega(n)\deg(h)$ in the worst case, while we are shooting for $\widetilde{O}(n^\alpha\deg(h))$. Instead, we adopt a randomized strategy, described in this section. 

Throughout this section, set $w(X) = \gcd(h(X), g_i(X))$, and let $d \le n$ be its degree; the polynomial $w$ can be computed initially at a cost of $\widetilde{O}(\deg(h)n^\alpha)\cdot\poly\log q$. Let $P$ be the $d$-smooth part of $|s-t|$; i.e., the product of all prime factors of $|s-t|$ that are at most $d$. Note that this can be computed easily by trial division, and that any degree of an irreducible factor of $w$ that divides $|s-t|$ also divides $P$. Let $R$ be the multiset of $P$'s prime factors. Our algorithm is recursive, and in the main step we will either split $w$ non-trivially into $w'(X)$ and $w(X)/w'(X)$, with $\deg(w')$ a substantial fraction of $d$, or reduce the cardinality of $R$, or both. This gives rise below to the following two-variable recurrence (in two different ways, in Lemma \ref{lem:small-R} and Theorem \ref{thm:randomized-DD}).

\begin{lemma}
\label{lem:recurrence}
Let $f, g$ be non-decreasing, non-negative functions, and let $f$ be a convex function on $\RR_{>0}$. Fix $1/2 > \epsilon > 0$ and integers $A, B \geq 1$. Consider the following recurrence:
\begin{equation*}
T(a, b) = \left \{ \begin{array}{ll}g(b) & \mbox{if $a \le A$} \\
f(a) & \mbox{if $b \le B$} \\
f(a)g(b) + \max\limits_{\epsilon a \le a' \le a} \left (T(a', b-1) + T(a-a', b) \right ) & \mbox{otherwise} \end{array} \right .
\end{equation*}
Then for all $a, b > 0$ we have
$T(a, b) \le f(2a)g(b)\log_{1/(1-\epsilon)}(a)b.$
\end{lemma}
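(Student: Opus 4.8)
The natural approach is induction on $a+b$ (or on $b$ with an inner induction on $a$), proving the bound
\[
T(a,b) \le f(2a)\,g(b)\,\log_{1/(1-\epsilon)}(a)\,b
\]
in all three regimes of the recurrence. First I would dispatch the two base cases. When $a \le A$ we have $T(a,b)=g(b)$, and since $f$ is non-negative and non-decreasing the factor $f(2a)$ is at least $f(2)$; one checks the claimed bound holds provided the implicit normalization makes $f(2a)\log_{1/(1-\epsilon)}(a)\,b \ge 1$, which I would arrange by noting the statement is only interesting for $a,b$ past the thresholds (or by absorbing a harmless constant). When $b \le B$ we have $T(a,b)=f(a) \le f(2a)$, and again $g(b)\log_{1/(1-\epsilon)}(a)\,b \ge 1$ suffices. (I would state these normalization conventions explicitly so the induction is clean — this is the same footnote-style caveat the authors already invoke for Theorem~\ref{recur-split}.)

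\textbf{Inductive step.} In the main case, fix the maximizing split $a' \in [\epsilon a, a]$ and apply the inductive hypothesis to both recursive terms, which are valid since $a' < a$ and $a-a' < a$ reduce the first argument (and $b-1<b$ reduces the second in the first term):
\[
T(a,b) \le f(a)g(b) + f(2a')g(b-1)\log_{1/(1-\epsilon)}(a')(b-1) + f(2(a-a'))g(b)\log_{1/(1-\epsilon)}(a-a')\,b.
\]
Now I would use three monotonicity/convexity facts in sequence. (i) Since $g$ is non-decreasing, $g(b-1)\le g(b)$, so we can pull out a common $g(b)$. (ii) Since $a' \ge \epsilon a$ we have $a-a' \le (1-\epsilon)a$, hence $\log_{1/(1-\epsilon)}(a-a') \le \log_{1/(1-\epsilon)}(a) - 1$; and trivially $\log_{1/(1-\epsilon)}(a') \le \log_{1/(1-\epsilon)}(a)$. (iii) Since $f$ is convex and non-decreasing with — after the normalization — $f(0)\ge 0$, we have the superadditivity-type bound $f(2a') + f(2(a-a')) \le f(2a)$ when $2a'+2(a-a')=2a$; more carefully, for a convex function vanishing at $0$, $f(x)+f(y)\le f(x+y)$, and one can reduce to this case or absorb $f(0)$ into constants. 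Combining, the bracketed sum is at most
\[
g(b)\Big[ f(a) + f(2a)\log_{1/(1-\epsilon)}(a)(b-1) + f(2a)\big(\log_{1/(1-\epsilon)}(a)-1\big)b \Big].
\]
It remains to check $f(a) + f(2a)\log_{1/(1-\epsilon)}(a)(b-1) + f(2a)(\log_{1/(1-\epsilon)}(a)-1)b \le f(2a)\log_{1/(1-\epsilon)}(a)\,b$, i.e. after cancelling $f(2a)\log_{1/(1-\epsilon)}(a)\,b$ from both sides, that $f(a) \le f(2a)\log_{1/(1-\epsilon)}(a) + f(2a)b \ge f(2a)$ — wait, I should recompute: the left side minus the target equals $f(a) - f(2a)\log_{1/(1-\epsilon)}(a) - f(2a)b$, which is $\le f(a) - f(2a) \le 0$ since $\log_{1/(1-\epsilon)}(a)\ge 0$ (as $a\ge 1$) and $b\ge 1$ and $f$ is non-decreasing. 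So the inequality closes.

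\textbf{Main obstacle.} The routine parts are the monotonicity manipulations; the one genuinely delicate point is step (iii), the use of convexity to get $f(2a') + f(2(a-a')) \le f(2a)$ — this is \emph{not} true for a general convex non-decreasing function (e.g. $f$ constant gives $2f \not\le f$), and is the reason the hypothesis "$f$ convex" is needed rather than just "non-decreasing." The clean statement is: a convex function $f$ with $f(0) \le 0$ is superadditive on $\RR_{\ge 0}$. Since the recurrence ultimately governs running times where the relevant $f$ is something like $\widetilde O(x\cdot n^\gamma)$ which does vanish at $0$, I would either (a) assume WLOG $f(0)=0$ by replacing $f$ with $f - f(0)$ and tracking the additive error, or (b) more simply, observe that for the bound to be used downstream it suffices to have $f$ convex with $f$ bounded below by $0$ and $f(x) \ge c x$ for the relevant range, so that $f(2a')+f(2(a-a')) \le f(2a) + O(f(2a)/\log\cdots)$ and the slack in the induction (the "$-1$" we extracted from the logarithm, times $f(2a)b$) absorbs the error. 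I would present version (a) as it is cleanest: if $f(0)>0$ one notes $T(a,b)$ with $f$ replaced by $f-f(0)$ still satisfies a recurrence of the same shape, and the contribution of the $f(0)$ terms telescopes to at most $f(0)\cdot(\text{number of leaves})$, which is polynomially bounded and can be folded into $g$. With that caveat handled, the induction goes through as above.
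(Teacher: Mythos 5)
Your overall strategy is the same as the paper's (induction, monotonicity of $g$, the discount $\log_{1/(1-\epsilon)}(a-a')\le \log_{1/(1-\epsilon)}(a)-1$, and superadditivity of convex $f$ with $f(0)\le 0$), and your discussion of the normalization/base-case issue and of why convexity is really needed is on point. But the inductive step as written does not close, for a concrete reason: you apply the superadditivity fact $f(2a')+f(2(a-a'))\le f(2a)$ by replacing $f(2a')$ and $f(2(a-a'))$ \emph{each} by $f(2a)$ in two terms that carry \emph{different} multipliers, namely $\log_{1/(1-\epsilon)}(a)(b-1)$ and $(\log_{1/(1-\epsilon)}(a)-1)b$. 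Superadditivity only lets you combine the two $f$-values when they share a common multiplier; used your way it gives roughly $2f(2a)\log_{1/(1-\epsilon)}(a)b$, about twice the target. Your final verification then contains an arithmetic slip that hides this: the quantity
\[
f(a)+f(2a)\log_{1/(1-\epsilon)}(a)(b-1)+f(2a)\bigl(\log_{1/(1-\epsilon)}(a)-1\bigr)b-f(2a)\log_{1/(1-\epsilon)}(a)b
\]
equals $f(a)+f(2a)\log_{1/(1-\epsilon)}(a)b-f(2a)\log_{1/(1-\epsilon)}(a)-f(2a)b$, not $f(a)-f(2a)\log_{1/(1-\epsilon)}(a)-f(2a)b$; the dropped term $f(2a)\log_{1/(1-\epsilon)}(a)b$ is exactly the excess, and the displayed inequality is false for large $a,b$.

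The fix is the paper's bookkeeping: before invoking convexity, bound \emph{both} multipliers by the common quantity $\log_{1/(1-\epsilon)}(a)b-1$ (using $\log_{1/(1-\epsilon)}(a)\ge 1$ for the first term, since $a\ge 2>1/(1-\epsilon)$, and $b\ge 1$ for the second). Then
\[
f(2a')\bigl(\log_{1/(1-\epsilon)}(a)b-1\bigr)+f(2(a-a'))\bigl(\log_{1/(1-\epsilon)}(a)b-1\bigr)\le f(2a)\bigl(\log_{1/(1-\epsilon)}(a)b-1\bigr),
\]
and the leftover work term is absorbed because $f(a)\le f(2a)$ (the paper instead uses $f(2a')+f(2(a-a'))\ge f(a)$, which holds since one of $2a'$, $2(a-a')$ is at least their average $a$; either absorption works). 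With that correction your argument coincides with the paper's.
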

\begin{proof}
The proof is by induction on $a$ and $b$. The base cases hold by definition. For the induction step with $a > A$ and $b > B$, we have
\begin{eqnarray*}
T(a, b) & \le & f(a)g(b) + T(a', b-1) + T(a - a', b) \\
& \le & f(a)g(b) + f(2a')g(b-1)\log_{1/(1-\epsilon)}(a')(b-1) + f(2(a-a'))g(b)\log_{1/(1-\epsilon)}(a-a')b \\ 
& \le & f(a)g(b) + f(2a')g(b)\log_{1/(1-\epsilon)}(a)(b-1) + f(2(a-a'))g(b)\log_{1/(1-\epsilon)}(a-a')b \\ 
& \le & f(a)g(b) + (f(2a') + f(2(a-a')))g(b)\log_{1/(1-\epsilon)}(a)b\\ & &  - f(2a')g(b) - 
f(2(a-a'))g(b)\\
& \le & f(a)g(b) + f(2a)g(b)\log_{1/(1-\epsilon)}(a)b - (f(2a') + f(2(a-a')))g(b)\\ 
& \le & f(2a)g(b)\log_{1/(1-\epsilon)}(a)b 
\end{eqnarray*}
where the third-to-last inequality used that $\log_{1/(1-\epsilon)}(a - a') \le \log_{1/(1-\epsilon)}(a) - 1$, the second-to-last inequality used the convexity of $f$, and the last inequality used that $f(2a') + f(2(a-a')) \ge f(a)$ (because either $2a'$ or $2(a - a')$ is at least $a$, their average).
\end{proof}

One easier case occurs when $|R|$ is at most $\poly\log n$. In this case we can compute the distinct degree factorization of $w(X)$ deterministically as follows:

\begin{lemma}
\label{lem:small-R}
Fix an integer $P$ of magnitude $\exp(n^{\alpha})$, and let $R$ be the multiset of $P$'s prime factors, in their multiplicity. Let $w(X)$ be a monic, squarefree polynomial of degree $d \le n$, whose irreducible degrees each divide $P$. There is an algorithm that produces the distinct-degree factorization of $w(X)$ in time $\widetilde{O}(dn^{\alpha}|R|^3)\cdot \poly\log q$.
\end{lemma}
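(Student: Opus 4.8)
The plan is a recursive peeling procedure organized around the prime factorization of $P$. Fix an ordering $R = (r_1, \dots, r_{|R|})$ of the multiset of prime factors of $P$ and set $Q_k = r_1 r_2 \cdots r_k$, so that $Q_0 = 1$ and $Q_{|R|} = P$. I will implement a subroutine $\textsc{Split}(v, k, s)$ that outputs the distinct-degree factorization of a monic squarefree $v(X) \in \F_q[X]$, under the promise that $s \cdot Q_k \mid P$ and that every irreducible factor of $v$ has degree of the form $s m'$ for some divisor $m'$ of $Q_k$; the top-level call is $\textsc{Split}(w, |R|, 1)$, whose promise is precisely the hypothesis on $w$ (every irreducible degree divides $P = Q_{|R|}$). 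Throughout I use the elementary fact (Lemma~\ref{lem:difference-polys} with $t = 0$) that $\gcd(v, X^{q^m} - X)$ is the product of the irreducible factors of $v$ whose degree divides $m$, computing $X^{q^m} \bmod v$ via Proposition~\ref{prop:high-q-power}; since every exponent $m$ that ever arises will be seen to divide $P$, and so be at most $\exp(n^\alpha)$, each such step costs $\widetilde{O}(\deg v)(n^\alpha + \log q)$.

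The body of $\textsc{Split}(v, k, s)$: if $\deg v = 0$, return; if $k = 0$, the promise forces every irreducible of $v$ to have degree exactly $s$, so output $v$ as the degree-$s$ component and return. Otherwise first compute the chain of divisors $v^{(0)} \mid v^{(1)} \mid \dots \mid v^{(k)} = v$, where $v^{(i)} = \gcd(v, X^{q^{s Q_i}} - X)$ (divisibility holds because $Q_0 \mid Q_1 \mid \dots \mid Q_k$, and $v^{(k)} = v$ by the promise). Then $v = v^{(0)} \cdot \prod_{i=1}^{k}(v^{(i)}/v^{(i-1)})$ telescopes, the factor $v^{(0)} = \gcd(v, X^{q^s} - X)$ is the degree-$s$ part and is output directly, and the factor $v^{(i)}/v^{(i-1)}$ collects the irreducibles of $v$ whose degree $m = s m'$ satisfies $m' \mid Q_i$ but $m' \nmid Q_{i-1}$. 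Here is the crucial point: since $Q_i = Q_{i-1} r_i$ with $r_i$ prime, $m' \mid Q_{i-1} r_i$ together with $m' \nmid Q_{i-1}$ forces, by a one-line prime-valuation argument, $r_i \mid m'$ and $m'/r_i \mid Q_{i-1}$; hence every such degree equals $(s r_i) \cdot m''$ with $m'' \mid Q_{i-1}$, so the recursive call $\textsc{Split}(v^{(i)}/v^{(i-1)},\, i-1,\, s r_i)$ meets its promise, and $(s r_i) Q_{i-1} = s Q_i \mid s Q_k \mid P$ keeps the exponent bound intact. I skip any piece $v^{(i)}/v^{(i-1)}$ that equals $1$. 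Correctness is an induction on $k$: the telescoping identity shows the reported components cover all of $v$, each piece is recursively factored correctly, and (since $s$ strictly increases along every recursive branch while degrees in $v$ are always $\ge s$) no degree is reported twice.

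For the running time, let $C(d, k)$ bound the total cost of a call $\textsc{Split}(v, k, s)$ with $\deg v \le d$. Such a call performs $k+1$ modular exponentiations and GCDs plus $O(k)$ divisions, all on degree-$\le d$ polynomials, for a direct cost $\widetilde{O}(dk)(n^\alpha + \log q)$, then recurses on pieces whose degrees sum to at most $d$, each with second argument at most $k - 1$; thus $C(d,k) \le \widetilde{O}(dk)(n^\alpha + \log q) + \sum_i C(d_i, i-1)$ with $\sum_i d_i \le d$, and $C(d,0) = \widetilde{O}(d)$. Because the $d$-dependence is linear, a routine induction — with the inequality $k + (k-1)^2 \le k^2$ doing the work in the inductive step and the base case $k \le 1$ immediate — gives $C(d,k) \le \widetilde{O}(dk^2)(n^\alpha + \log q)$, as in the proof of Theorem~\ref{recur-split} after fixing one explicit bounding function for all the $\widetilde{O}(\cdot)$'s. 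Hence the whole algorithm runs in $C(d, |R|) = \widetilde{O}(d\,|R|^2\, n^\alpha) \cdot \poly\log q$, comfortably within the claimed bound $\widetilde{O}(d\, n^\alpha\, |R|^3) \cdot \poly\log q$.

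The main obstacle, I expect, is exactly the invariant bookkeeping flagged above: proving that the pieces $v^{(i)}/v^{(i-1)}$ really do have all irreducible degrees of the shape $(s r_i) \cdot (\text{divisor of } Q_{i-1})$ — i.e. the valuation lemma ``$m \mid Q_{i-1} r_i$ and $m \nmid Q_{i-1}$ imply $r_i \mid m$ and $m/r_i \mid Q_{i-1}$'' — together with tracking that the accumulating scale factor never pushes a query exponent $s Q_i$ past $\exp(n^\alpha)$ (it stays a divisor of $P$). A secondary subtlety is that the recursion tree has super-polynomially many nodes in the worst case; this is harmless because, once empty pieces are pruned, the polynomial degrees summed over any single depth never exceed $d$ and the depth is at most $|R|$, and the recurrence above already encodes this accounting.
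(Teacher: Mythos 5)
Your proof is correct, but it takes a genuinely different route from the paper's. The paper's algorithm, for each recursive call, computes $w_r = \gcd(X^{q^{P/r}}-X, w)$ for \emph{every} $r \in R$ plus the degree-$P$ part $w_1$, selects the piece of maximum degree (which by pigeonhole has degree at least $d/(|R|+1)$), and recurses on that piece with $|R|$ decremented and on the complement with $R$ unchanged; the analysis then goes through the two-variable recurrence of Lemma~\ref{lem:recurrence} with $\epsilon = 1/(|R|+1)$, which is where the three factors of $|R|$ come from. You instead fix an ordering of the multiset once, form the nested divisor chain $Q_0 \mid Q_1 \mid \cdots \mid Q_{|R|} = P$, and split $v$ deterministically into the telescoping coprime pieces $v^{(i)}/v^{(i-1)}$, pushing each piece down with an updated scale factor $s r_i$ and a strictly shorter chain. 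Your key valuation observation ($m' \mid Q_{i-1}r_i$ and $m' \nmid Q_{i-1}$ force $r_i \mid m'$ and $m'/r_i \mid Q_{i-1}$) is right, the invariant $sQ_k \mid P$ does keep every query exponent at most $\exp(n^\alpha)$, and the argument that all irreducibles of a fixed degree land in the same piece (so each degree class is emitted exactly once) is sound. What each approach buys: yours has a cleaner, purely structural correctness argument, needs only standard linear recursion-tree accounting rather than Lemma~\ref{lem:recurrence}, and in fact yields the slightly stronger bound $\widetilde{O}(d\,|R|^2 n^{\alpha})\cdot\poly\log q$, which of course still implies the lemma as stated. The paper's max-degree/pigeonhole formulation is less sharp here but is designed to share its machinery with the randomized Theorem~\ref{thm:randomized-DD}, where no canonical chain ordering is available and one must instead split off a constant fraction of the degree via random subsets of $R$; your chain construction does not obviously extend to that setting.
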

\begin{proof}

We recursively compute the distinct-degree factorization of $w(X)$. In each recursive call, we hold a multiset $R$ with the guarantee that the degree of each irreducible factor of $w$ is a divisor of $\prod_{r \in R} r$. In each step, either the degree $d$ is reduced by at least a $(1-1/|R|)$ factor, or the multiset of prime factors $R$ shrinks. The complete recursive algorithm is as follows:

\begin{algorithm}
    \caption{DDF$(w, R)$}\label{euclid}
\hspace*{\algorithmicindent}
\textbf{Input}: polynomial $w(X)$ of degree $d$, a multiset of prime factors $R$ \\
\hspace*{\algorithmicindent}
\textbf{Guarantee}: the degrees of the irreducible factors of $w$ divide $\prod_{r \in R} r$\\
\hspace*{\algorithmicindent}
\textbf{Output}: the distinct-degree factorization of $w(X)$
\begin{algorithmic}[1]
\If{$d = 1$}\label{line:base1}
    \State output $w(X) = X-a, a \in \F_q$
\ElsIf{$|R| = 1$}
    \State output $w(X)$ 
        \label{line:base2}
    \Else
    \State set $P = \prod_{r \in R} r$
    \label{line:p1-first}
    \State set $w_1 = w$
    \For{$r \in R$}
    \State $w_{r}(X) = \gcd (X^{q^{P/r}} - X, w(X))$ 
    \State $w_1 = w_1/\gcd(w_1, w_r)$ 
    \EndFor     
    \label{line:p1-last}
\EndIf 
\State \textbf{find} $w' \in \{ w_1\} \cup \{w_r : r \in R\}$ of maximum degree
\label{line:max-degree}
\If{$w' = w_1$}
 \State output $w'(X)$
 \State \textbf{return} DDF$(w/w', R)$ (if $w' \ne w$)
 \Else 
 \State \textbf{return} DDF$(w', R \setminus \{r\})$, and DDF$(w/w', R)$ (if $w' \ne w$)
 \EndIf
\end{algorithmic}
\end{algorithm}
There are two base cases handled in lines (\ref{line:base1}- \ref{line:base2}): if $d = 1$ then $w(X) = X-a$ for some $a \in \F_q$; if $|R| = 1$, then it consists of a single prime $p$ and $w(X)$ is the product of irreducible polynomials of degree $p$. In both cases $w$ is trivially its distinct-degree factorization by itself. 

In lines (\ref{line:p1-first} - \ref{line:p1-last}) we set $P = \prod_{r \in R} r$ and the algorithm produces $w_1$ which is exactly the product of irreducibles of $w$ of degree $P$, if there are any. It does so by removing the distinct-degree factors whose associated degree is a proper divisor of $P$. 

A key observation is that in line (\ref{line:max-degree}) the selected polynomial $w'$ has degree $d'$ at least $d/(|R|+1)$.
This is because each irreducible factor of $w$ occurs in at least one of these polynomials (it either has degree $P$ and it occurs in $w_1$, or it has a degree that is a proper divisor of $P$, and it occurs in at least one of the $w_r$ polynomials). The bound on $d'$ follows from a pigeonhole argument.

The recursive calls proceed to compute the distinct-degree factorization of $w/w'$ (unless it equals $1$). In the case that $w' = w_1$, we additionally output $w'$ which is itself the remaining distinct-degree factor. Otherwise, we recursively compute the distinct degree factorization of $w' = w_r$, and we know that the degrees of its irreducible factors divide $P/r$, so we can remove one copy of $r$ from the multiset $R$ when making the recursive call.  

Let $T(d, c)$ denote the running time of this procedure on input $(w, R)$, where $w(X)$ has degree $d$ and the multiset $R$ has cardinality $c$. As noted, if $d = 1$ or $c = 1$ we simply output $w(X)$. The work in lines (\ref{line:p1-first} - \ref{line:p1-last}) costs \[\widetilde{O}(d\cdot n^{\alpha})\cdot \poly\log q\]
for each $r \in R$, which includes the cost of computing $X^{q^{P/r}} - X$ mod $w(X)$ (by repeated squaring and modular compositions), and the subsequent GCDs. The two recursive calls cost $T(d', c-1)$ and $T(d-d', c)$.
By Lemma \ref{lem:recurrence} (with $f(d) = \widetilde{O}(d)$ and $g(c) = c\cdot\widetilde{O}(n^{\alpha})\cdot \poly\log q$, and $\epsilon = 1/(|R|+1)$) we conclude that 
\[T(d, c)\le \widetilde{O}(d\cdot cn^{\alpha})\cdot \poly\log q \cdot \log_{1/(1-\epsilon)}(d)c\]
and after noting that $\log_{1/(1-\epsilon)} d \le (1/\epsilon)\ln d = (|R|+1)\ln d$, the overall running time of $T(d, |R|)$ is as claimed. 
\end{proof}

In general, $|R|$ may be as large as $n^\alpha$, and then we need a different strategy. In this case we repeatedly pick a random subset of $R$ that splits off distinct degree factors whose degree is at least a constant fraction of the degree remaining.

\begin{lemma}
\label{lem:shrink-R-recursive-step}
Fix an integer $P$ of magnitude $\exp(n^{\alpha})$ and let $R$ be the multiset of $P$'s prime factors, in their multiplicity. Let $w(X)$ be a monic, squarefree polynomial of degree $d \le n$, whose irreducible degrees each divide $P$.  Set $p = (1/2)^{1/\log n}$. Pick a random subset $R' \subseteq R$ of cardinality $\lceil p|R| \rceil$ and set $P' = \prod_{r \in R'} r$. Then  
\[\Pr\left [\deg \left (\gcd \left (X^{q^{P'}} - X, w(X)\right )
\right ) \ge d/8 \right ] \ge 1/8\]
provided $|R| \ge 4\log^2 n$.
\end{lemma}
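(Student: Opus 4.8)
The plan is to pass to the first moment of the degree in question by analysing a single irreducible factor, and then upgrade the expectation bound to a probability bound by a reverse-Markov step. Throughout I take $\log = \log_2$ (as in the hypothesis).

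\textbf{Setup.} By Lemma~\ref{lem:difference-polys} with $s = P'$ and $t = 0$, the polynomial $\gcd(X^{q^{P'}} - X, w(X))$ is, up to sign, the product of exactly those irreducible factors of $w$ whose degree divides $P'$. So, writing $W$ for the set of monic irreducible factors of $w$ and setting $Y := \sum_{\phi \in W \,:\, \deg\phi \mid P'} \deg\phi$, the claim is exactly $\Pr[Y \ge d/8] \ge 1/8$. I will view the $|R|$ primes of the multiset $R$ as distinguishable tokens, so that $R'$ is a uniformly random $k$-element subset with $k = \lceil p|R|\rceil$; one checks $k < |R|$ under the hypothesis (and if $k = |R|$ then $P' = P$ and $Y = d$, so the statement is trivial). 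Note also $p = (1/2)^{1/\log n} \ge 1/2$ since $\log n \ge 1$.

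\textbf{Single-factor survival probability.} Fix $\phi \in W$, put $m = \deg\phi$, and let $\omega$ denote the number of prime factors of $m$ counted with multiplicity. Since $m \le d \le n$ and every prime is at least $2$, we have $\omega \le \log m \le \log n$; and since the irreducible degrees of $w$ divide $P$, we have $m \mid P$. For each prime $\ell$, $m$ is divisible by $\ell$ exactly $v_\ell(m)$ times and $R$ contains $v_\ell(P) \ge v_\ell(m)$ copies of $\ell$, so we may fix a set $T_\phi$ of $\omega$ tokens (some $v_\ell(m)$ copies of each $\ell$) for which $T_\phi \subseteq R'$ forces $m \mid P'$. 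Hence
\[ \Pr[m \mid P'] \;\ge\; \Pr[T_\phi \subseteq R'] \;=\; \prod_{i=0}^{\omega-1} \frac{k-i}{|R|-i}. \]
Bounding each factor by $\frac{k-i}{|R|-i} \ge \frac{k-i}{|R|} = \frac{k}{|R|}\bigl(1 - \tfrac ik\bigr) \ge p\bigl(1 - \tfrac ik\bigr)$, the product is at least $p^{\omega}\prod_{i=0}^{\omega-1}(1 - i/k) \ge p^{\omega}\bigl(1 - \tfrac{\omega^2}{k}\bigr)$. Now $p^{\omega} \ge p^{\log n} = \bigl((1/2)^{1/\log n}\bigr)^{\log n} = 1/2$ (using $0 < p \le 1$ and $\omega \le \log n$), and $\omega^2/k \le (\log n)^2/(p|R|) \le (\log n)^2/(2(\log n)^2) = 1/2$ (using $p \ge 1/2$ and $|R| \ge 4\log^2 n$). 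Combining, $\Pr[m \mid P'] \ge \tfrac12\cdot\tfrac12 = \tfrac14$.

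\textbf{From expectation to probability.} By linearity, $\EE[Y] = \sum_{\phi \in W} \deg\phi \cdot \Pr[\deg\phi \mid P'] \ge \tfrac14\sum_{\phi \in W}\deg\phi = d/4$. Since $0 \le Y \le d$ always, $\EE[Y] \le d\cdot\Pr[Y \ge d/8] + (d/8)\Pr[Y < d/8] \le d\cdot\Pr[Y \ge d/8] + d/8$, and rearranging yields $\Pr[Y \ge d/8] \ge 1/8$, which is the claim.

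\textbf{Main obstacle.} The only delicate point is the single-factor estimate. Copies of a prime in $R$ are interchangeable, so ``$m \mid P'$'' is a constraint on how many copies of each prime land in $R'$, not on a fixed set of tokens; lower-bounding by the event $T_\phi \subseteq R'$ for an arbitrary fixed choice of representatives cleanly sidesteps this. The other care needed is that $R'$ has a \emph{fixed} size, so the token-inclusion probability is an exact hypergeometric product rather than a product of independent $p$'s; this is where $|R| \ge 4\log^2 n$ is used, to keep the multiplicative loss $1 - \omega^2/k$ bounded below by a constant, while $m \le n$ is what forces $\omega \le \log n$ and hence $p^\omega \ge 1/2$. (If $\log$ is taken to another fixed base, the same argument goes through with the constants $4$ and $8$ adjusted accordingly.)
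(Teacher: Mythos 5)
Your proof is correct and follows essentially the same route as the paper's: lower-bound the survival probability of each (distinct-degree or irreducible) factor by $1/4$ via the hypergeometric inclusion probability, apply linearity of expectation to get $\EE[Y]\ge d/4$, and finish with a reverse-Markov/averaging step. The only differences are cosmetic — you sum over individual irreducible factors rather than distinct-degree blocks, and your token formalization of the multiset-inclusion event is a slightly cleaner way to justify the same product formula the paper uses.
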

\begin{proof}
Let  $w_1, \ldots, w_k$ be the distinct-degree factors of $w$ and let $d_1, d_2, \ldots, d_k$ be their associated degrees (so each $w_i$ is the product of irreducibles of degree $d_i$). For each $i$ define $R_i$ to be the (multi-)subset of $R$ for which $\prod_{r \in R_i}r = d_i$. Let $I_i$ be the indicator random variable for the event that $R_i \subseteq R'$ and set $I = \sum_{i=1}^k \deg(w_i) I_i$. Set $c = |R|$ and $c' = |R'|$. We have 
\[\Pr[R_i \subseteq R'] = \frac{{{c - |R_i|} \choose {c' - |R_i|}}}{{c \choose {c'}}} \ge \left (\frac{c'-|R_i|}{c}\right )^{|R_i|} \ge \left (p- \frac{|R_i|}{c}\right )^{|R_i|} \ge p^{|R_i|}  - \frac{|R_i|^2}{c} \ge 1/2 - 1/4\]
by our choice of $p$, and the fact that $|R_i| \le \log n$ and $c \ge 4 \log^2 n$.
Thus by linearity of expectation, $\mathbb{E}[I] \ge d/4$. By an averaging argument, $\Pr[I < d/8] < 7/8$, and the lemma follows.
\end{proof}

Our final algorithm produces the complete distinct-degree factorization of $w(X)$, invoking Lemma \ref{lem:small-R} when $|R|$  shrinks below $4\log^2 n$. 

\begin{theorem}
\label{thm:randomized-DD}
Fix an integer $P$ of magnitude $\exp(n^{\alpha})$. Let $w(X)$ be a monic, squarefree polynomial of degree $d \le n$, whose irreducible degrees each divide $P$. 
There is a randomized algorithm that produces the distinct-degree factorization of $w(X)$ in time $\widetilde{O}(dn^{\alpha})\poly\log q$.
\end{theorem}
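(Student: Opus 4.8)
The plan is to combine Lemma~\ref{lem:shrink-R-recursive-step} with the recurrence of Lemma~\ref{lem:recurrence}, exactly mirroring the structure of the proof of Lemma~\ref{lem:small-R}, but now using the randomized splitting step in place of the deterministic one. First I would compute $P$ (by trial division, replacing $|s-t|$ by its $d$-smooth part) and the multiset $R$ of its prime factors; note $|R| \le n^{\alpha+o(1)}$ since each prime is at least $2$ and $P \le \exp(n^\alpha)$. Then I would describe the recursive algorithm: on input $(w,R)$ with $\deg(w) = d$, if $|R| \le 4\log^2 n$ invoke Lemma~\ref{lem:small-R} (whose cost $\widetilde{O}(dn^\alpha |R|^3) = \widetilde{O}(dn^\alpha)\poly\log q$ is acceptable here), and if $d = 1$ output $w$; otherwise repeatedly draw a random $R' \subseteq R$ of size $\lceil p|R|\rceil$ with $p = (1/2)^{1/\log n}$, form $X^{q^{P'}} \bmod w(X)$ by repeated squaring and modular composition (Proposition~\ref{prop:high-q-power}, cost $\widetilde{O}(d)\cdot(\log q + \log P') = \widetilde{O}(dn^\alpha)\poly\log q$ since $\log P' \le n^\alpha$), and compute $w' = \gcd(X^{q^{P'}} - X, w)$. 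By Lemma~\ref{lem:shrink-R-recursive-step}, with probability $\ge 1/8$ we get $\deg(w') \ge d/8$; by repeating $O(\log n)$ times (or running until success, and bounding the expected number of trials by $8$) we may assume this holds. We then recurse on $(w', R')$ and on $(w/w', R)$ --- but I need to double-check the bookkeeping here: $w'$ splits off \emph{several} distinct-degree factors at once (all those whose degree divides $P'$), so unlike Lemma~\ref{lem:small-R} the recursion on $w'$ cannot simply drop one prime. The fix is to recurse on $w'$ with the sub-multiset $R'$ (the degrees of $w'$'s irreducible factors divide $P' = \prod_{r\in R'}r$), which strictly shrinks $R$ since $|R'| = \lceil p|R| \rceil < |R|$ once $|R| > 4\log^2 n$; and to recurse on $w/w'$ with the full $R$, where $\deg(w/w') \le 7d/8$, i.e.\ the degree has shrunk by a constant factor.

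To bound the running time I would set up the two-variable recurrence $T(d, c)$ where $c$ now measures $\log|R|$ rather than $|R|$ (since each random subset shrinks $|R|$ multiplicatively by the factor $p < 1$, and $p^{O(\log^2 n / \log n)}$-many halvings suffice to bring $|R|$ from $n^{\alpha}$ down to $4\log^2 n$; more precisely $\log|R|$ drops by an additive $\log(1/p) = 1/\log n$ per level, so the depth in the $c$-variable is $O(\log^2 n)$). The "split $w$ by a constant factor" branch gives the $\epsilon = 7/8$ contraction in the first variable, contributing a $\log_{8/7}(d) = O(\log d)$ factor; the "shrink $R$" branch gives the recursion depth in the second variable. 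Applying Lemma~\ref{lem:recurrence} with $f(d) = \widetilde{O}(d)$ and $g(c) = \widetilde{O}(n^\alpha)\poly\log q$ (constant in $c$), and the base cases $g(b)$ for $d=1$ [cost $O(1)$] and the Lemma~\ref{lem:small-R} bound $f(a) = \widetilde{O}(an^\alpha)\poly\log q$ when $|R|$ is small, yields $T(d, c) \le \widetilde{O}(dn^\alpha)\poly\log q \cdot \mathrm{poly}\log n = \widetilde{O}(dn^\alpha)\poly\log q$ as the $\mathrm{poly}\log n$ factors are absorbed into the $\widetilde{O}$.

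The main obstacle I anticipate is getting the recurrence's two parameters and its base cases to line up cleanly with Lemma~\ref{lem:recurrence} as stated. That lemma's recursion drops $b$ by exactly $1$ in one branch and shrinks $a$ by a factor $\epsilon$ in the other; here the "$R$-shrinking" branch shrinks $|R|$ by a \emph{multiplicative} factor and dumps $w'$ (the large-degree piece) into that branch, while the "degree-shrinking" branch keeps all of $R$. So I should take $b = \lceil \log_{1/p}|R| \rceil$ (which decreases by $1$ when $|R| \mapsto p|R|$, matching the $b-1$ in the lemma), take $a = d$, $\epsilon = 1/8$ (so $\epsilon a \le a' \le a$ with $a' = \deg(w')$ when $w'$ is the large piece — and symmetrically the "otherwise" clause of the recurrence already allows the split to land anywhere in $[\epsilon a, a]$), set $A = 1$ and $B = \lceil \log_{1/p}(4\log^2 n) \rceil = O(\log^2 n / \log n) = O(\log^2 n)$, and absorb the $O(\log n)$-fold repetition for amplification and all logarithmic overheads into the $\widetilde{O}$. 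A secondary subtlety is the success probability: the recursion tree has $\mathrm{poly}(n)$ nodes, so a union bound over a $1/\mathrm{poly}(n)$ failure probability per node (obtained by $O(\log n)$ repetitions of the $\Pr \ge 1/8$ step) keeps the whole algorithm correct with high probability; alternatively one argues the expected running time is $\widetilde{O}(dn^\alpha)\poly\log q$ directly since each node succeeds after $8$ trials in expectation, and converts to a high-probability bound by a standard Markov argument. Once these alignments are pinned down, the rest is the same mechanical induction as in Lemma~\ref{lem:small-R}.
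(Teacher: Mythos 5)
Your proposal matches the paper's proof essentially step for step: the same randomized splitting via Lemma~\ref{lem:shrink-R-recursive-step} with $O(\log n)$-fold amplification and a union bound over the $\poly(n)$-node recursion tree, the same recursion on $(w',R')$ and $(w/w',R)$, the same fallback to Lemma~\ref{lem:small-R} when $|R|\le 4\log^2 n$, and the same application of Lemma~\ref{lem:recurrence} with the second variable set to $c=\lceil\log_{1/p}|R|\rceil$. The only discrepancy is a trivial bookkeeping one (the paper folds the $\widetilde{O}(n^\alpha)\poly\log q$ factor into $f$ and takes $g(c)=c$ so that one choice of $f,g$ covers both base cases), which does not affect correctness.
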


\begin{proof}
Let $R$ be the multiset of $P$'s prime factors, in their multiplicity, and let $p$ be as defined in Lemma \ref{lem:shrink-R-recursive-step}. We recursively compute the distinct-degree factorization of $w(X)$. In each recursive call, we hold a multiset $R$ with the guarantee that the degree of each irreducible factor of $w$ is a divisor of $\prod_{r \in R} r$. In each step, either the degree $d$ is reduced by at least a constant factor, or the multiset of prime factors $R$ shrinks by a $p$ factor. 

If $d = 1$ then we return $w$ and we are done. If $|R| \le 4\log^2 n$ then we invoke Lemma \ref{lem:small-R} and we are done. Otherwise, we pick $R'\subseteq R$ randomly and define $P' = \prod_{r \in R'} r$ as in Lemma \ref{lem:shrink-R-recursive-step}. Compute $w'(X) = \gcd(X^{q^{P'}} - X, w(X))$ and set $d' = \deg(w')$. By Lemma \ref{lem:shrink-R-recursive-step}, with probability at least $1/8$, we have $d' \ge d/8$. We check this condition and repeat if it fails to hold. Repeating $O(\log n)$ times drives the probability of failure down to $1/\poly(n)$, so that by a union bound, the overall algorithm, which invokes this randomized step fewer than $\poly(n)$ times, succeeds with at least constant probability.

At this point we recursively compute the distinct-degree factorization of $w'(X)$  with $R$ replaced by $R'$, and the distinct-degree factorization of $w(X)/w'(X)$ with $R$ unchanged (we don't make the second recursive call if $w = w'$).

Let $T(d, c)$ denote the running time of this procedure on input $(w, R)$, where $w(X)$ has degree $d$ and $c = \lceil \log_{1/p} |R| \rceil$ (note that in the present analysis $c$ is $O(\log |R|)$, instead of $|R|$ as it was in the proof of Lemma \ref{lem:small-R}). If $d = 1$ we simply return $w$; if $|R| \le 4\log^2 n$, we invoke Lemma \ref{lem:small-R} at cost $f(d) = \widetilde{O}(d) \cdot \widetilde{O}(n^{\alpha}(4\log^2 n)^3)\cdot\poly\log q$. 

The main step costs $O(c) \le O(n^\alpha)$ to select $R' \subseteq R$ and $\widetilde{O}(d\cdot n^{\alpha})\cdot \poly\log q$
to account for computing $X^{q^{P'}} - X$ mod $w(X)$ (by repeated squaring and modular compositions), and the subsequent GCD, up to $O(\log n)$ times. The two recursive calls cost $T(d', c-1)$ and $T(d-d', c)$ (or just $T(d', c-1)$ if $d' = d$).
By Lemma \ref{lem:recurrence} (with $f(d) = \widetilde{O}(d) \cdot \widetilde{O}(n^{\alpha})\cdot \poly\log q$,  $g(c) = c$, and $\epsilon = 1/8$) we conclude that 
\[T(d, c)\le \widetilde{O}(d\cdot cn^{\alpha})\cdot \poly\log q \cdot \log_{8/7}(d)c\]
and thus the overall running time is at most $T(d, \log_{1/p}|R|)$, which is as claimed, after noting that $\log_{1/p} |R| \le O(\log n)\ln |R|$.
\end{proof}

Thus the Strong $(\alpha, \beta)$-Divisor Conjecture implies that Property 3 of Theorem \ref{recur-split} holds with $\gamma = \alpha$. We note also that by taking $P$ to be the product of primes less than $n^{\alpha}$ the theorem gives an unconditional, randomized, distinct-degree factorization algorithm that succeeds on polynomials whose irreducible degrees are $n^\alpha$-smooth, which may be of independent interest.

Altogether, by plugging Proposition \ref{prop:property1}, Theorem \ref{thm:interval-polys}, and Theorem \ref{thm:randomized-DD}, into Theorem \ref{recur-split} we obtain:

\begin{theorem}[main]
\label{thm:main}
Suppose that the Strong $(\alpha, \beta)$-Divisor Conjecture holds. Then there is a randomized algorithm that computes the distinct-degree factorization of a monic squarefree polynomial of degree $n$ in time 
$\widetilde{O}(n^{1 + \max(\alpha, \beta)+o(1)})\cdot \poly\log q.$
\end{theorem}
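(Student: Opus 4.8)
The plan is to verify that the polynomials $g_i(X)$ of Equation (\ref{eq:gis}) satisfy all three hypotheses of Theorem \ref{recur-split} with a single common exponent $\gamma$, and then simply read off the resulting running time. Throughout I assume the Strong $(\alpha,\beta)$-Divisor Conjecture, which supplies the sets $S,T$ and hence the $g_i$.

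First I would record the bookkeeping: the number of polynomials is $k(n) = |S|\cdot|T| \le n^{2\beta}$, so $\log k(n) = O(\log n)$ and is absorbed into $\widetilde{O}(\cdot)$. Hypothesis 1 of Theorem \ref{recur-split} is precisely Proposition \ref{prop:property1}. Hypothesis 2 is precisely Theorem \ref{thm:interval-polys}, which produces the interval polynomial $\prod_{i=a}^b g_i(X)\bmod h(X)$ in time $\widetilde{O}(\deg(h)\cdot n^{\max(\alpha,\beta)+o(1)})\cdot\poly\log q$; this gives Hypothesis 2 with $\gamma = \max(\alpha,\beta)+o(1)$. For Hypothesis 3, given $h$ and $i$, write $g_i = X^{q^s}-X^{q^t}$ with $s\in S$, $t\in T$; since $s,t \le \exp(n^\alpha)$, Proposition \ref{prop:high-q-power} produces $X^{q^s}\bmod h$ and $X^{q^t}\bmod h$, and hence $w(X) := \gcd(h,g_i)$, in time $\widetilde{O}(\deg(h)\cdot n^\alpha)\cdot\poly\log q$. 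By Lemma \ref{lem:difference-polys} the irreducible degrees of $w$ all divide $|s-t| \le \exp(n^\alpha)$, so Theorem \ref{thm:randomized-DD} (with $P = |s-t|$) outputs the distinct-degree factorization of $w = \gcd(h,g_i)$ in time $\widetilde{O}(\deg(h)\cdot n^\alpha)\cdot\poly\log q$. Since $\alpha \le \max(\alpha,\beta)+o(1)$, Hypothesis 3 also holds with the same $\gamma = \max(\alpha,\beta)+o(1)$, so all three hypotheses are met simultaneously, and Theorem \ref{recur-split} yields a randomized algorithm running in time $\widetilde{O}(n^{1+\gamma}\log k(n))\cdot\poly\log q = \widetilde{O}(n^{1+\max(\alpha,\beta)+o(1)})\cdot\poly\log q$.

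The one point requiring genuine care — and the main (mild) obstacle — is the probabilistic accounting. Theorem \ref{thm:randomized-DD} is randomized with, say, constant success probability, whereas the recursive-splitting algorithm of Theorem \ref{recur-split} invokes the Hypothesis-3 subroutine up to $\poly(n)$ times (once per leaf of its recursion tree, and the tree has $\poly(n)$ nodes). I would therefore boost the success probability of each such call to $1 - 1/\poly(n)$ by $O(\log n)$ independent repetitions (a polylogarithmic overhead, absorbed into $\widetilde{O}(\cdot)$), and then union-bound over all calls to conclude that the entire algorithm succeeds with constant probability within the stated time bound. Everything else in the proof is a direct substitution into Theorem \ref{recur-split}.
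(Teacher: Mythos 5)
Your proposal is correct and follows the same route as the paper, which proves the theorem by plugging Proposition \ref{prop:property1}, Theorem \ref{thm:interval-polys}, and Theorem \ref{thm:randomized-DD} directly into Theorem \ref{recur-split}. The probabilistic boosting you flag is indeed needed, and the paper handles it the same way (amplification to $1-1/\poly(n)$ plus a union bound over the $\poly(n)$ invocations), just placed inside the proof of Theorem \ref{thm:randomized-DD} rather than at the level of the main theorem.
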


\section{Using the conjecture to factor integers}\label{sec:int_factoring}

In this section, we show that essentially the same ideas as used in Section \ref{sec:poly_factoring} can be used to give deterministic factoring algorithms for integers. In this regime, the state-of-the-art algorithms are exponential algorithms with running time of the form $\widetilde{O}(N^\gamma)$ for $0 < \gamma < 1$, where $N$ is the integer to be factored. This contrasts with randomized integer factoring which is heuristically $\exp(\widetilde{O}(\log^\gamma N))$ time (e.g. the Number Field Sieve). The trivial trial division method has running time $\widetilde{O}(N^{1/2})$, and classic results of Pollard \cite{Pollard_1974} and Strassen \cite{Strassen_1977} give $\widetilde{O}(N^{1/4})$. Recent work has improved the exponent from $1/4$ to $1/5$ using additional ideas \cite{Harvey_2021}. If a version of our strong $(\alpha, \beta)$-Divisor Conjecture is true for $\alpha, \beta = 1/3$, we show that this improves the exponent further to $1/6$, and would give a deterministic algorithm with the best known running time. 

In order to apply the conjecture, we need to assume an additional property. It is likely that if the conjecture is true, the construction of the structured sets $S, T$ would be done in such a way that the different divisors in the $n$-Divisor property would be allocated to the various $s-t$ differences in a prescribed or ``known'' fashion. In this case it is plausible to expect an efficient deterministic function which lists the prime factors of a given $s-t$ difference, in time $\widetilde{O}(n^{\alpha})$. For ease of exposition, we give a name to the version of the conjecture which presumes the existence of such a function, as follows:

\begin{conjecture} (Strong Prefactored $(\alpha,\beta)$-Divisor Conjecture)
\label{conj:strong-prefactored}
For infinitely many $n$, there exists subsets $S, T \subseteq \ZZ^{+}$ that satisfy the requirements of Conjecture \ref{gap}. Moreover, for every integer $s-t$ with $s \in S, t \in T$, its prime factorization can be output in deterministic time $\widetilde{O}(n^{\alpha})$. 
\end{conjecture}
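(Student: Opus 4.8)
This is a conjecture rather than a theorem the paper proves, so by ``proof proposal'' I mean the route I would take to try to \emph{establish} it, together with the point where I expect the attempt to stall.

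First I would reduce to a generalized-arithmetic-progression problem and then see why the one-dimensional case is not enough. By Proposition~\ref{prop:arith-prog-version} it suffices to exhibit a single arithmetic progression $A=\{b+ic : 0\le i\le n^{2\beta}\}$ of magnitude $\exp(n^{\alpha})$ satisfying the $n$-divisor property, after which $S,T$ are obtained by the two-dimensional splitting in that proof (and that splitting automatically gives the required GAP form). A short calculation shows this single-AP route cannot work for $\alpha<1$: if $p$ is a prime with $n^{2\beta}<p\le n$, then $p\mid b+ic$ has a solution with $i\le n^{2\beta}$ only when $p\mid c$, and by Chebyshev's estimate the product of all such primes is $\exp((1+o(1))n)$, which forces $c$ (hence $\max A$) to be at least $\exp((1-o(1))n)$. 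So the real work is to replace the single AP by a genuine sumset $S=S_1+\cdots+S_c$, $T=T_1+\cdots+T_{c'}$ with $c,c'=\Theta(\log n)$ short arithmetic progressions, using the many ``scales'' to spread the large primes across different difference pairs so that no single common difference must absorb their product.

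The core of a proof would then be a divisor-allocation scheme: partition $\{1,\dots,n\}$ by prime-factorization profile (largest prime power, number of large prime factors, smooth part), and for each cell design one pair, or a short block of pairs, $(s,t)$ so that $s-t$ is exactly the $\mathrm{lcm}$ of the moduli assigned to that cell, times a controlled polynomial-size slack. The two quantitative constraints are that the number of cells is at most $|S|\,|T|\le n^{2\beta}$ and that every such $\mathrm{lcm}$ stays below $\exp(n^{\alpha})$; satisfying both simultaneously requires grouping moduli that share most of their prime factors, and then realizing \emph{all} of these prescribed divisibilities inside a $\Theta(\log n)$-fold sumset of APs. That last realization step --- showing the multiplicative flexibility of such a structured set is rich enough --- is the main obstacle, and is essentially why the conjecture is open; I have no concrete attack on it beyond the heuristic that $\Theta(\log n)$ independent short APs furnish roughly $n^{o(1)}$-much freedom per coordinate, which is the right order of magnitude for the bookkeeping but not obviously enough for the arithmetic.

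Finally, the ``prefactored'' clause should be almost automatic \emph{provided} the construction above is of the designed rather than the merely existential kind: each difference $s-t$ is literally built as a product of named primes and prime powers together with one explicit slack factor of size $n^{O(1)}$, so the factoring routine only needs to index into the construction by the pair $(s,t)$ and trial-divide the slack, which is well within $\widetilde{O}(n^{\alpha})$. The pitfall to avoid is ever letting $s-t$ equal a known factor times a generic integer of magnitude up to $\exp(n^{\alpha})$, since such a cofactor cannot be factored in $\widetilde{O}(n^{\alpha})$ time; prefactoredness therefore has to be enforced throughout the construction, not appended afterwards.
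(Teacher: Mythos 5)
The statement you were asked about is a conjecture; the paper offers no proof of it (it is introduced in Section~\ref{sec:int_factoring} precisely as an unproven hypothesis whose consequences are then derived), so there is no argument in the paper to compare yours against. You correctly recognize this, and your outline is honest about where it stalls. That said, it contains one concrete error worth flagging.

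Your ``short calculation'' dismissing the single-arithmetic-progression route is wrong. You claim that for a prime $p$ with $n^{2\beta}<p\le n$, the congruence $p\mid b+ic$ has a solution with $0\le i\le n^{2\beta}$ only when $p\mid c$. That is only true when $b=0$ (then $p\mid ic$ with $1\le i\le n^{2\beta}<p$ forces $p\mid c$). For general $b$, if $p\nmid c$ the congruence has the unique root $i\equiv -bc^{-1}\pmod p$, and nothing prevents that root from landing in $[0,n^{2\beta}]$; one is free to choose $b$ and $c$ to try to arrange this simultaneously for many primes. Indeed the paper explicitly makes the opposite point in Section~\ref{sec:ap_conj}: it proves only that $b\ne 0$ is necessary, and states that it \emph{cannot rule out} the Arithmetic Progression Version holding with $c=1$ and $b\le\exp(O(n^{1-2\beta}))$. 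So the conclusion you draw --- that ``the real work'' must abandon the single AP in favor of a genuine multi-scale sumset --- does not follow from your argument; the single-AP route (which, via Proposition~\ref{prop:arith-prog-version}, would immediately yield the GAP structure and, being fully explicit, plausibly the prefactoredness as well) remains open and is in fact the version the paper singles out as the most natural one to attack or refute, e.g.\ via the characterization in Lemma~\ref{lem:prime_char}. Your closing remarks on why prefactoredness should come for free from a ``designed'' construction, and on the danger of leaving an unfactored cofactor of size $\exp(n^\alpha)$, are sensible and consistent with the paper's motivation for Conjecture~\ref{conj:strong-prefactored}.
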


As we will see, if the Strong Prefactored $(\alpha,\beta)$-Divisor Conjecture holds for $\alpha, \beta < 2/5$ then there exists a determinstic algorithm to factor integers that beats the current-best exponent of $1/5$; if the conjecture holds for $\alpha, \beta = 1/3$ (which is best possible), then the exponent is $1/6$.

\subsection{The framework}

For the remainder of this section, we will use the fact that common operations on integers $n$, such as addition, multiplication, division with remainder, and computing GCD run in time $\widetilde{O}(\log n)$. As a standard reference, see \cite{Von_zur_Gathen_Gerhard_2013}.

We now describe a general framework for integer factorization. This is the analogue of Theorem \ref{recur-split}, except for the added assumption that for every $A_i$, its prime factors are efficient to produce. 

\begin{theorem}\label{recur-split-int}
Let $n = \lfloor N^{1/2} \rfloor$. Suppose we have $k = k(n)$ integers $A_1, A_2, \ldots, A_k$, and $\gamma \ge 0$ for which the following hold:
\begin{enumerate}
\item every integer $\leq n$ divides $\prod_{i = 1}^k A_i$, and 
\item given $d \le N$ and integers $a \le b$, we can compute the {\em interval product} 
\[\prod_{i=a}^b A_i \bmod d\]
    in time $\widetilde{O}(\log d \cdot n^\gamma)$, and
\item given an integer $i$, there exists a function $f$ that returns the prime factors of $A_i$ in time $\widetilde{O}(n^{\gamma})$.
\end{enumerate}
Then there is a deterministic algorithm to compute the prime factorization of $N$ in time $\widetilde{O}(N^{\gamma/2}\log(k(n)))$.   
\end{theorem}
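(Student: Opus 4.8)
The plan is to mimic the recursive-splitting proof of Theorem~\ref{recur-split}, but working with integers in place of polynomials: instead of splitting a polynomial $f$ into factors via GCDs with interval polynomials, we split the integer $N$ by taking $\gcd(N, \text{interval product} \bmod N)$. The key structural fact replacing Lemma~\ref{lem:difference-polys} is simply that if $p \mid N$ is a prime with $p \le n = \lfloor N^{1/2}\rfloor$, then by Property~1 the prime $p$ divides $\prod_{i=1}^k A_i$, hence divides at least one $A_i$; and an $A_i$ divisible by $p$ contributes a factor of $p$ to $\gcd(N, \prod A_i)$. (Every composite $N$ has a prime factor $\le N^{1/2} = n$, so stripping off these primes eventually splits $N$ completely, or certifies a prime.)

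Concretely, I would run a recursive procedure carrying an interval $(a,b) \subseteq (1,k)$ with the invariant that every prime factor $\le n$ of the current integer $M$ divides $\prod_{i=a}^b A_i$; initially $M = N$ and the interval is $(1,k)$, with the invariant guaranteed by Property~1. In the recursive step, set $c = \lfloor (a+b)/2 \rfloor$, compute $P_{\text{low}} = \prod_{i=a}^c A_i \bmod M$ using Property~2 (cost $\widetilde O(\log M \cdot n^\gamma) \le \widetilde O(N^{\gamma/2})$ since $\log M \le \log N = \widetilde O(1)$... actually the relevant bound is that $\log M$ contributes only a polylog factor, while $n^\gamma = N^{\gamma/2}$), then set $M_{\text{low}} = \gcd(M, P_{\text{low}})$. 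As in the polynomial case there are three branches: if $M_{\text{low}} = 1$ recurse on $M$ with interval $(c+1,b)$; if $M_{\text{low}} = M$ recurse with interval $(a,c)$; otherwise recurse on $M_{\text{low}}$ with $(a,c)$ and on $M/M_{\text{low}}$ with $(c+1,b)$ — in each branch the invariant is preserved because a prime factor of $M_{\text{low}}$ dividing $\prod_{i=a}^c A_i$ cannot "need" the upper half, and symmetrically. The base case $a = b$: by the invariant, every prime factor $\le n$ of $M$ divides $A_a$, so Property~3 lets us produce the prime factors of $A_a$ in time $\widetilde O(n^\gamma)$, and by trial-dividing $M$ by each of these (and, if a cofactor $> 1$ remains, it is either $1$ or a single prime $> n$, hence prime since $M \le N$ and a composite $\le N$ has a prime factor $\le n$) we finish. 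One subtlety: an $A_i$ might have up to $\widetilde O(n^\gamma)$ prime factors, so trial division at a leaf costs $\widetilde O(n^\gamma)$ — matching the budget.

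For the running-time analysis I would set up the two-variable recurrence $T(\log M, r)$ where $r = b - a + 1$ is the interval length, exactly paralleling the proof of Theorem~\ref{recur-split}: the claim is $T(\cdot, r) \le \log(2r) \cdot \widetilde O(N^{\gamma/2})$, proved by induction on $r$, using at the leaves that Property~3 costs $\widetilde O(n^\gamma) = \widetilde O(N^{\gamma/2})$, and at internal nodes that the work of the third branch splits as $\widetilde O(\log M_{\text{low}} \cdot n^\gamma) + \widetilde O(\log(M/M_{\text{low}}) \cdot n^\gamma)$ with $\log M_{\text{low}} + \log(M/M_{\text{low}}) = \log M$, so the "bit-length budget" is additive just as polynomial degree was. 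Since the recursion depth is $\log k(n)$ and $n = \Theta(N^{1/2})$, the total is $T(\log N, k(n)) \le \widetilde O(N^{\gamma/2} \log k(n))$ as claimed. The main thing to get right — and the only place the integer setting genuinely differs from the polynomial one — is the bookkeeping at the leaves: making sure that "prime factors of $A_i$" really do capture all prime factors $\le n$ of $M$ (this is the invariant), and that the leftover cofactor after trial division is correctly identified as prime or $1$ using $M \le N$ and $n = \lfloor N^{1/2}\rfloor$. I do not expect any serious obstacle; it is a faithful translation of the already-proven framework, with convexity of the bit-length cost playing the role that convexity in $\deg$ played before.
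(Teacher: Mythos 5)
Your overall architecture (binary recursive splitting driven by GCDs with interval products, leaf handling via Property~3, and the $\log(2r)\cdot\widetilde O(\log d\cdot n^\gamma)$ induction) matches the paper's. But there is a genuine gap in your invariant. You recurse on $M=N$ itself with the invariant ``every prime factor $\le n$ of $M$ divides $\prod_{i=a}^b A_i$,'' and you claim preservation in the third branch ``symmetrically.'' It is not symmetric: $M_{\mathrm{low}}=\gcd(M,L)$ with $L=\prod_{i=a}^c A_i$ does only contain primes of $L$, but $M_{\mathrm{upper}}=M/M_{\mathrm{low}}$ is \emph{not} $\gcd(M,U)$. If $v_p(M)>v_p(L)$ and $v_p(U)=0$ for some prime $p\le n$, then $p\mid M_{\mathrm{upper}}$ yet $p\nmid U$, and the invariant dies. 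This is consistent with the hypotheses: Property~1 only forces each integer $\le n$ to divide $\prod_i A_i$, so e.g.\ $N=p^3$ with $p^2>n=\lfloor p^{3/2}\rfloor$ permits $v_p(\prod_i A_i)=1$ with the unique $p$-bearing $A_i$ in the lower half. Your algorithm then sends $M_{\mathrm{upper}}=p^2$ into the upper branch, every subsequent GCD is $1$, and the leaf declares the composite cofactor $p^2$ to be ``a single prime $>n$'' --- an incorrect output. Your leaf argument (``a composite cofactor would have a prime factor $\le n$, which the invariant would have captured'') is exactly what fails once the invariant is broken.

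The paper sidesteps this by never recursing on $N$ directly: it first forms $N_0=\gcd\bigl(\prod_{i=1}^k A_i, N\bigr)$ and maintains the stronger \emph{multiplicative} invariant $d\mid\prod_{i=a}^b A_i$, which \emph{is} preserved in the third branch (for each prime, $v_p(d/\gcd(d,L))=\max(0,v_p(d)-v_p(L))\le v_p(U)$ since $v_p(d)\le v_p(L)+v_p(U)$). The recursion therefore only produces pairs $(A_i,d_i)$ with $d_i\mid A_i$; the actual factorization of $N$ is then assembled in a separate final pass, by reading off candidate primes from the factorizations of those $A_i$ (Property~3), trial-dividing $N$ by each to determine its true multiplicity, and observing that the remaining cofactor is $1$ or the single prime factor of $N$ exceeding $n$. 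Your proof needs either this restructuring (split $N_0$, not $N$, and defer multiplicities to a global trial-division step) or some other mechanism for primes whose multiplicity in $N$ exceeds their multiplicity in the interval products; as written, the leaf outputs cannot be trusted.
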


\begin{proof}
Initially, we set $N_0 = \gcd(\prod_{i=1}^{k} A_i, N)$. This can be computed within the claimed time bound by Property 2 and a GCD. 

To produce $N$'s prime factorization, we use recursive splitting to eventually obtain at most $k' = O(\lceil \log N_0 \rceil)$ pairs of integers $(A_i, d_i)$, where $A_i \in \{A_1,...,A_k\}$, $d_i | A_i$, and $\prod_{i = 1}^{k'} d_i = N_0$. Note that this is not yet the final factorization of $N$; however it is true that for every prime $p\le n$ that divides $N$, it occurs as a factor of some $d_i$, by Property 1. 

We now obtain the list of primes that divide $N_0$. To do this, for each $i$ we obtain prime factors of $A_i$ by Property 3, and check for each prime $p$, whether $p \mid d_i$. This step takes $\widetilde{O}(n^{\alpha})$ by the fact that $A_i$ has magnitude $\exp(n^{\alpha})$, and there are at most $O(\lceil \log N_0 \rceil)$ pairs. 

Once we have this set of primes, we trial divide each with $N$ for at most $O(\lceil \log N \rceil)$ rounds to determine its multiplicity in $N$. This step cost at most $\widetilde{O}(\lceil \log N \rceil^2)$. Combining this multiset of primes with the (at most 1) prime factor of $N$ which is greater than $n$ gives the full factorization of $N$. 

Now, all that remains is to prove we can obtain the required set $\{(A_i, d_i)\}$ using recursive splitting. The argument then proceeds similarly to that of Theorem \ref{recur-split}. During each recursive call, we keep track of the interval $(a,b)$ while maintaining the invariant that its input $d$ divides $\prod_{i=a}^b A_i$. Initially, the interval is $(1, k)$ and the initial input $d = N_0$ satisfies the invariant by definition. 

The base case of the recursive algorithm is when $a = b$. In this case, we just return the pair $(A_a, \gcd( A_a,d))$. Otherwise, let $c = \lfloor (a + b)/2 \rfloor$ be the midpoint. We compute the integer 
\[d_{\mbox{lower}}=\gcd\left (\prod_{i = a}^c A_i, d\right ).\]

Just like the proof of Theorem \ref{recur-split}, if $d_{\mbox{lower}} = 1$ then $d$ divides $\prod_{i =c+1}^{b} A_i$ and we recursively split $d$ with the interval set to $(c+1, b)$. If $d_{\mbox{lower}} = d$ then we recursively split $d$ with the interval set to $(a, c)$. Otherwise $d_{\mbox{lower}}$ is a non-trivial factor of $d$, and we recursively split $d_{\mbox{lower}}$ with the interval set to $(a,c)$ and $d_{\mbox{upper}} = d/d_{\mbox{lower}}$ with the interval set to $(c+1, b)$. 


We now analyze the running time. Let $T(d, r)$ be the running time of this procedure when the input integer is $d \leq N_0$, and the interval has length $r = b - a +1$. We claim that for all $d$,
\[T(d, r) \le \log (2r)\cdot \widetilde{O}(\log d \cdot n^{\gamma}).\]
The proof is by induction on $r$. The claim is true when $r = 1$. The recursive step (computing the product $d_{\mbox{lower}}$, and one GCD) costs $\widetilde{O}(\log d \cdot n^{\gamma})$
by Property 2, plus the time to recursively split $d$ with an interval of length $r/2$ (in the first two cases), or the time to recursively split $d_{\mbox{lower}}$ and $d_{\mbox{upper}}$ each with an interval of length $r/2$ (in the third case). By induction, this additional cost is
\[\log (2 \cdot r/2)\cdot \widetilde{O}(\log d \cdot n^{\gamma})\] in the first two cases, and 
\[\log (2 \cdot r/2)\cdot \widetilde{O}(\log(d_{\mbox{lower}})n^{\gamma})+ \log (2 \cdot r/2)\cdot \widetilde{O}(\log(d_{\mbox{upper}})n^{\gamma})\]
in the third case. Since $d_{\mbox{lower}} \cdot d_{\mbox{upper}} = d$, in all cases the overall time is upperbounded by 
\[\widetilde{O}(\log d \cdot n^{\gamma}) + (\log 2r - 1)\widetilde{O}(\log d \cdot n^{\gamma}) \leq \log (2r)\cdot \widetilde{O}(\log d \cdot n^{\gamma})\]
which satisfies the inductive hypothesis.
\end{proof}

\subsection{Computing $\prod_{i = a}^b A_i \bmod d$ fast}
In this section, we show that the the Strong $(\alpha, \beta)$-Divisor Conjecture yields interval products $\prod_{i = a}^b A_i$ that can be computed fast, which satisfies Property 2 of Theorem \ref{recur-split-int}. To utilize the structure of the sets $S$ and $T$ in the the Strong $(\alpha, \beta)$-Divisor Conjecture, we use the ordering (described in the previous section) on  $A = \{s-t \mid s\in S, t \in T\}$: if $s_0, \ldots s_{|S|-1}$ is an enumeration of $S$ and $t_0, \ldots, t_{|T|-1}$ is an enumeration of $T$, we set 
\begin{equation} 
A_{j+ \ell|S|} = s_j - t_{\ell}
\end{equation}
for $0 \le j \le |S|-1$ and $0 \le \ell \le |T|-1$.\\

Similar to polynomial factorization, we now want to show that it is efficient to obtain the set of integers $S \cup T$ modulo $d$, which gives us a way to compute any interval $\prod_{i = a}^b A_i$ within the required budget. 

\begin{lemma}\label{lem:int-preprocessing}
Suppose that the Strong $(\alpha, \beta)$-Divisor Conjecture holds, and fix a positive integer $d \le n$. There is a procedure running in $\widetilde{O}((\log d)(n^{\alpha+o(1)} + n^{\beta}))$ time that produces $u \bmod d$ for all $u \in S \cup T$.
\end{lemma}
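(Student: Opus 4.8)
The plan is to mimic the proof of Lemma~\ref{lem:preprocessing}, but working in the ring $\ZZ/d\ZZ$ instead of $\F_q[X]/h(X)$, and with ordinary multiplication playing the role that modular composition played there. Recall that by the Strong $(\alpha,\beta)$-Divisor Conjecture we may write $S = S_1 + \cdots + S_c$ with $c \le n^{o(1)}$, where each $S_i = \{a_i + j b_i : j = 0,1,\ldots,d_i\}$ is an arithmetic progression with $a_i \ge 0$, $b_i > 0$, $d_i \ge 0$, and $\prod_i (d_i+1) = |S| \le n^\beta$; likewise for $T$. Every $u \in S$ is a sum $u = u_1 + \cdots + u_c$ with $u_i \in S_i$, so $u \bmod d = \big(\sum_i (u_i \bmod d)\big) \bmod d$.

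First I would, for each progression $S_i$, compute $a_i \bmod d$ and $b_i \bmod d$ directly (each is a reduction of an integer of magnitude at most $\exp(n^\alpha)$, i.e. of bit-length $\widetilde{O}(n^\alpha)$, so this costs $\widetilde{O}((\log d) n^{\alpha+o(1)})$ per progression, hence $\widetilde{O}((\log d) n^{\alpha+o(1)})$ in total since $c \le n^{o(1)}$). Then I would generate $a_i + j b_i \bmod d$ for $j = 0,1,\ldots,d_i$ by starting from $a_i \bmod d$ and repeatedly adding $b_i \bmod d$ modulo $d$; this is $d_i + 1$ modular additions, each costing $\widetilde{O}(\log d)$. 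Summing over $i$, the total number of such additions is $\sum_i (d_i + 1) \le \sum_i \prod_{i'} (d_{i'}+1) \le c \cdot |S| = \widetilde{O}(n^{\beta})$, giving cost $\widetilde{O}((\log d) n^{\beta})$ for this stage. Finally, for each tuple $(u_1,\ldots,u_c) \in S_1 \times \cdots \times S_c$ I would form $u \bmod d = (u_1 + \cdots + u_c) \bmod d$ using $c-1 = n^{o(1)}$ modular additions; since there are $|S| \le n^{\beta}$ tuples, this is another $\widetilde{O}(n^{\beta+o(1)} \log d)$ operations. Repeating the whole procedure for $T$ doubles the cost. Adding the three stages gives the claimed bound $\widetilde{O}((\log d)(n^{\alpha+o(1)} + n^{\beta}))$.

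There is no serious obstacle here: the proof is the ``additive, one-variable'' shadow of Lemma~\ref{lem:preprocessing}, where $X^{q^{a+jb}} \bmod h(X)$ is replaced by $a + jb \bmod d$, iterated composition is replaced by iterated addition, and the composition $X^{q^{u_1}} \circ \cdots \circ X^{q^{u_c}}$ is replaced by the sum $u_1 + \cdots + u_c$. The only points worth being careful about are (i) that the elements of $S$ and $T$ have magnitude at most $\exp(n^\alpha)$, so their bit-lengths are $\widetilde{O}(n^\alpha)$ and an initial reduction modulo $d$ of the generators $a_i, b_i$ costs $\widetilde{O}((\log d)\,n^{\alpha+o(1)})$ rather than $\widetilde{O}(\log d)$; (ii) that the number of distinct progressions and the number of summands $c, c'$ are $n^{o(1)}$, so they are absorbed into the $o(1)$ in the exponent and the $\widetilde{O}(\cdot)$; and (iii) bookkeeping that the total number of progression elements generated across all $S_i$ is $\widetilde{O}(n^{\beta})$, which follows from $\prod_i (d_i + 1) \le n^{\beta}$ together with $c \le n^{o(1)}$. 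None of these require more than the elementary cost bounds for integer arithmetic cited at the start of Section~\ref{sec:int_factoring}.
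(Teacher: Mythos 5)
Your proposal is correct and follows essentially the same route as the paper's own proof in the appendix: decompose $S$ (and $T$) into the constituent arithmetic progressions, reduce the generators $a_i, b_i$ modulo $d$, generate each $S_i$ by repeated modular addition, and assemble the elements of $S$ as $c$-fold sums, with the identical cost accounting $\sum_i(\lceil \log a_i\rceil + \lceil \log b_i\rceil + (d_i+1)) + |S|$. The only difference is that you spell out the bookkeeping (e.g.\ $\sum_i (d_i+1) \le c\,|S|$) slightly more explicitly than the paper does.
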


\begin{proof}
See Appendix.
\end{proof}

\begin{lemma}\label{lem:int_bsgs}
Suppose that the Strong $(\alpha, \beta)$-Divisor Conjecture holds. Fix a positive integer $d \le n$ and a specified interval $a \le b$. Given the integers $u \bmod d$ for all $u \in S \cup T$, there is a procedure that produces
\[\prod_{i = a}^b A_i \bmod d\]
in time $\widetilde{O}((\log d) n^{\beta})$.
\end{lemma}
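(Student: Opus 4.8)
The plan is to mirror the proof of Lemma~\ref{lem:baby-steps-giant-steps} almost verbatim, replacing the ring $\F_q[X]/h(X)$ with $\ZZ/d\ZZ$ and replacing each polynomial $X^{q^u} \bmod h(X)$ with the integer $u \bmod d$. First I would decompose the integer interval $(a,b)$ into the same three pieces $Q_{\mbox{left}}$, $Q_{\mbox{middle}}$, $Q_{\mbox{right}}$ obtained by writing $a = a_0 + a_1|S|$ and $b = b_0 + b_1|S|$ in base $|S|$, so that $|Q_{\mbox{left}}|, |Q_{\mbox{right}}| \le |S| \le n^\beta$ while $|Q_{\mbox{middle}}|$ can be as large as $n^{2\beta}$. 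For the two small pieces, recall that $A_{j + \ell|S|} = s_j - t_\ell$, so the factors associated with $Q_{\mbox{left}}$ are the integers $s_{a_0} - t_{a_1}, s_{a_0+1} - t_{a_1}, \ldots, s_{|S|-1} - t_{a_1}$ reduced mod $d$, each computable in $\widetilde{O}(\log d)$ time from the given residues $s_j \bmod d$ and $t_{a_1} \bmod d$; likewise for $Q_{\mbox{right}}$ with $t_{b_1}$. Taking the product of these $O(|S|)$ residues mod $d$ costs another $O(|S|)$ integer operations, i.e. $\widetilde{O}((\log d)n^\beta)$ in total.

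The main step is the product over $Q_{\mbox{middle}} = \{(a_1+1)|S|, \ldots, b_1|S|\}$, whose associated factors are all the differences $s_j - t_\ell$ with $a_1 < \ell \le b_1$ and $0 \le j \le |S|-1$; the product of all of them is $\prod_{s \in S}\prod_{\ell = a_1+1}^{b_1}(s - t_\ell)$. Following Kaltofen--Shoup, I would form over the ring $R = (\ZZ/d\ZZ)[Z]$ the polynomial $p(Z) = \prod_{\ell=a_1+1}^{b_1}(Z - t_\ell)$ of degree at most $|T| \le n^\beta$ (using $O(|T|)$ ring operations via a subproduct tree, since each $t_\ell \bmod d$ is supplied), then evaluate $p$ at the $|S|$ points $\{s_j \bmod d : j\}$ by fast multipoint evaluation, costing $\widetilde{O}(|S| + |T|)$ operations in $R$; finally multiply the $|S|$ evaluations together mod $d$ in $O(|S|)$ further operations. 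Each operation in $R$ on degree-$O(n^\beta)$ polynomials with coefficients mod $d$ costs $\widetilde{O}(n^\beta \log d)$, so this stage runs in $\widetilde{O}((\log d)n^\beta)$ time. Multiplying the three partial products together mod $d$ yields $\prod_{i=a}^b A_i \bmod d$ within the claimed bound.

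The one point that needs slightly more care than in the polynomial case is bookkeeping the arithmetic model: in $\F_q[X]/h(X)$ a single ``operation'' was a $\widetilde{O}(\deg h)$-time polynomial operation, whereas here the coefficients of $p(Z)$ live in $\ZZ/d\ZZ$ and polynomial multiplication/multipoint-evaluation over $(\ZZ/d\ZZ)[Z]$ must be accounted at $\widetilde{O}((\deg)\cdot \log d)$ bit cost; I would invoke the standard fact (e.g.~\cite{Von_zur_Gathen_Gerhard_2013}) that subproduct trees and multipoint evaluation of a degree-$m$ polynomial over any ring use $\widetilde{O}(m)$ ring operations, and that a ring operation in $\ZZ/d\ZZ$ is $\widetilde{O}(\log d)$ bit operations, so the total is $\widetilde{O}(m \log d)$ with $m = O(n^\beta)$. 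With that accounting in place, every step is dominated by $\widetilde{O}((\log d)n^\beta)$ and summing the three contributions gives the lemma. I do not anticipate a genuine obstacle here; the only subtlety is that $\ZZ/d\ZZ$ is not a field, but neither subproduct-tree construction nor multipoint evaluation requires inverses, so the Kaltofen--Shoup argument goes through unchanged.
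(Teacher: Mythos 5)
Your proof is correct and follows essentially the same route as the paper's: the identical base-$|S|$ decomposition into $Q_{\mbox{left}}$, $Q_{\mbox{middle}}$, $Q_{\mbox{right}}$, direct products for the two small pieces, and the Kaltofen--Shoup subproduct-tree/multipoint-evaluation step over $(\ZZ/d\ZZ)[Z]$ for the middle piece. Your added remarks on bit-cost accounting and on monic division not requiring inverses in $\ZZ/d\ZZ$ are correct refinements that the paper leaves implicit.
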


\begin{proof}
See Appendix.
\end{proof}

Altogether, if we assume the Strong Prefactored $(\alpha, \beta)$-Divisor Conjecture, combining Lemma \ref{lem:int-preprocessing}, Lemma \ref{lem:int_bsgs} with Theorem \ref{recur-split-int}, for $\gamma = \max(\alpha, \beta) + o(1)$, we obtain an unconditional, deterministic algorithm for integer factorization.

\begin{theorem}[main -- integer factoring]
Suppose that the Strong Prefactored $(\alpha, \beta)$-Divisor Conjecture holds. Then there is a deterministic algorithm that computes the prime factorization of an integer $N$ in time 
$\widetilde{O}(N^{\max(\alpha, \beta)/2+o(1)}).$
\end{theorem}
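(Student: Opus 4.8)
The plan is to mirror exactly the assembly done for polynomial factorization in Section~\ref{sec:poly_factoring}, verifying the three hypotheses of Theorem~\ref{recur-split-int} with $\gamma = \max(\alpha,\beta) + o(1)$, and then substitute $n = \lfloor N^{1/2}\rfloor$ into the conclusion. First I would fix the integers $A_i = s_j - t_\ell$ in the order prescribed by Equation~(7), with $S,T$ the structured sets guaranteed by the Strong Prefactored $(\alpha,\beta)$-Divisor Conjecture. Property~1 of Theorem~\ref{recur-split-int} is immediate: the $n$-divisor property of $A = \{s-t : s \in S, t \in T\}$ says precisely that every $i \in [n]$ divides some $A_i$, hence divides $\prod_i A_i$. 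Property~3 is exactly the "prefactored" clause added to Conjecture~\ref{conj:strong-prefactored}: for each index $i$ we are promised a deterministic routine outputting the prime factorization of $A_i = s_j - t_\ell$ in time $\widetilde{O}(n^\alpha) \le \widetilde{O}(n^\gamma)$. Property~2 is supplied by combining Lemma~\ref{lem:int-preprocessing} (which produces $u \bmod d$ for all $u \in S \cup T$ in time $\widetilde{O}((\log d)(n^{\alpha + o(1)} + n^\beta))$) with Lemma~\ref{lem:int_bsgs} (which, given those residues, forms $\prod_{i=a}^b A_i \bmod d$ in time $\widetilde{O}((\log d) n^\beta)$); the total is $\widetilde{O}((\log d) n^{\max(\alpha,\beta)+o(1)})$, i.e. Property~2 with $\gamma = \max(\alpha,\beta)+o(1)$.

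With all three properties verified, Theorem~\ref{recur-split-int} yields a deterministic algorithm computing the prime factorization of $N$ in time $\widetilde{O}(n^\gamma \log k(n))$. It remains only to translate this into a bound in $N$. Since $n = \lfloor N^{1/2}\rfloor$, we have $n^\gamma = n^{\max(\alpha,\beta)+o(1)} = N^{\max(\alpha,\beta)/2 + o(1)}$, where the $o(1)$ absorbs both the $o(1)$ in $\gamma$ and the harmless factor-of-two adjustment in the exponent. The parameter $k(n) = |S|\cdot|T| \le n^{2\beta} = N^{\beta} \le N$, so $\log k(n) = O(\log N)$ is a polylogarithmic factor that is swallowed by the $\widetilde{O}(\cdot)$ notation. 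Hence the running time is $\widetilde{O}(N^{\max(\alpha,\beta)/2 + o(1)})$, as claimed.

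I do not anticipate any genuine obstacle here; this theorem is a bookkeeping corollary of the preceding lemmas and of Theorem~\ref{recur-split-int}, much as Theorem~\ref{thm:main} was for the polynomial case. The only points requiring a moment of care are: (i) confirming that the $\widetilde{O}$ in Lemma~\ref{lem:int-preprocessing} and Lemma~\ref{lem:int_bsgs} correctly combine --- the preprocessing of Lemma~\ref{lem:int-preprocessing} is done once per node of the recursion tree, but since the tree has $O(k(n))$ nodes and each preprocessing costs $\widetilde{O}(n^{\max(\alpha,\beta)+o(1)})$, this is consistent with folding it into the per-step cost used in the proof of Theorem~\ref{recur-split-int}; and (ii) making sure the prefactoring routine of Property~3 is invoked only on the $O(\lceil \log N_0 \rceil)$ surviving pairs $(A_i,d_i)$, not on all $k$ indices, which is precisely how it is used in the proof of Theorem~\ref{recur-split-int}. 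Both points are already handled by the statement and proof of Theorem~\ref{recur-split-int}, so the argument is complete once Properties~1--3 are checked.
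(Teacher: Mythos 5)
Your proposal is correct and follows exactly the paper's own (very terse) argument: verify Properties 1--3 of Theorem~\ref{recur-split-int} via the $n$-divisor property, Lemmas~\ref{lem:int-preprocessing} and~\ref{lem:int_bsgs}, and the prefactored clause of Conjecture~\ref{conj:strong-prefactored}, then substitute $n = \lfloor N^{1/2}\rfloor$ and absorb $\log k(n) = O(\log N)$ into the $\widetilde{O}$. Your two points of care (charging the preprocessing to each Property-2 invocation, and invoking the factoring oracle only on the $O(\log N_0)$ surviving pairs) are exactly how the proof of Theorem~\ref{recur-split-int} handles these costs, so nothing further is needed.
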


\section{The arithmetic progression version of the conjecture}\label{sec:ap_conj}

The formally strongest version of the Strong $(\alpha, \beta)$-Divisor Conjecture is that there are {\em arithmetic progressions} $A$ of cardinality $n^{2\beta}$, whose elements have magnitude $\exp(n^{\alpha})$, that satisfy the $n$-divisor property. We call this the ``Arithmetic Progression Version'' of the conjecture. Proposition \ref{prop:arith-prog-version} shows that the Arithmetic Progression Version implies the Strong $(\alpha, \beta)$-Divisor Conjecture. Thus, if one is to disprove a version of the conjecture, the easiest version to attack is the arithmetic progression one. Here we make some observations that may be useful in studying this version. 

First, recall that such an arithmetic progression $A = \{b + ic: 1\le i \le n^{2\beta}\}$ satisfying the $n$-Divisor Property yields an improved exponent for polynomial factorization if $\alpha, \beta < 1/2$. One observation is that $b \ne 0$ is necessary for non-triviality. For if $b = 0$ then we have that every prime $p< n$ divides $ic$ for some $i \le n^{2\beta}$, and this requires $c$ to be the product of all primes between $n^{2\beta}$ and $n$, which is a number of magnitude $\exp(n - o(1))$ when $\beta < 1/2$. It is also necessary that $(b+c) \ge \exp(n^{1-2\beta})$ as we have argued in Section \ref{sec:num_theo_conj}. Although it seems unlikely, we cannot rule out the possibility that the Arithmetic Progression Version may hold with $b \le \exp(O(n^{1 - 2\beta}))$ and $c = 1$!

Certain {\em subsets} of integers in $[n]$ are easy to handle: for example, by choosing $c$ to be the least common multiple of all positive integers of magnitude at most $n^\alpha$, and $b=0$, it is easy to see that the arithmetic progression $A = \{b + ic: 1\le i \le n^{2\beta}\}$  contains multiples of each integer that can be written $xy$ where $x \le n^{2\beta}$ and $y \le n^{\alpha}$. On the other hand, some subsets of integers seem difficult. Two challenging sets of integers are (1) the primes in $[n]$, and (2) composites $xy$ where $x, y$ are in the range $[n^{1/2 - \epsilon}, n^{1/2}]$ for a constant $\epsilon > 0$.

A useful fact is that different arithmetic progressions containing multiples of different subsets of integers can be combined into a single generalized arithmetic progression:

\begin{lemma}
Let $A_1 = \{b_1 + ic_1: 1\le i \le \ell_1\}$  and $A_2 = \{b_2 + ic_2: 1\le i \le \ell_2\}$ be two arithmetic progressions. Then the  generalized arithmetic progression
\[A = \{c_1b_2 + i(c_1c_2) + j(c_2b_1 - c_1b_2): 1\le i \le \max(\ell_1, \ell_2), \;\; 0 \le j \le 1\}\]
contains multiples of $x$ if $A_1$ or $A_2$ does.
\end{lemma}
\begin{proof}
Suppose $x$ divides $b_1 + ic_1$ for some $i$. Then the $(i, 1)$ term in $A$, which equals $c_2b_1 + i(c_1c_2) = c_2(b_1 + ic_1)$ is also divisible by $x$. Similarly, if $x$ divides $b_2 + ic_2$ for some $i$, then the $(i, 0)$ term in $A$, which equals $c_1b_2 + i(c_1c_2) = c_1(b_2 + ic_2)$ is divisible by $x$.
\end{proof}

One can combine a constant number of arithmetic progressions, of length $n^{2\beta}$ and with coefficient magnitude $\exp(n^\alpha)$, in this fashion, and the magnitude of the coefficients increases from $\exp(n^{\alpha})$ to only $\exp(O(n^\alpha))$, and the size of the arithmetic progression increases from $n^{2\beta}$ to only $O(n^{2\beta})$. It is not hard to extend Proposition \ref{prop:arith-prog-version} to generalized arithmetic progressions of this form; i.e., ones where the generalized arithmetic progression is the sum of one ``long'' arithmetic progression and a constant number of others of constant length. Thus solving the problem separately for constantly many subsets of the integers in $[n]$ is a potential route to proving the Strong $(\alpha, \beta)$-Divisor Conjecture. 

As mentioned, the subset of primes in $[n]$ is a particularly challenging case. Here we have a interesting characterization:
\begin{lemma}\label{lem:prime_char}
Let $U$ be the product of primes in $[n]$. Given a factorization $U = U_1U_2 \cdots U_\ell$, define $T_i$ to be the multiplicative inverse of $U/U_i \pmod{U_i}$, and define $V_i = (U/U_i)T_i$. There is an arithmetic progression 
$A = \{b + ic: 1\le i \le \ell\}$ containing multiples of each prime in $[n]$, iff there is a factorization $U=U_1U_2 \cdots U_\ell$ and integers $b,c$ for which 
\begin{equation}
\label{eq:characterization-lemma}
c \cdot \left (\sum_{i = 1}^{\ell} iV_i\right ) + b \equiv 0 \pmod{U}.
\end{equation}
\end{lemma}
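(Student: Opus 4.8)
The statement is an ``iff'', so I would prove the two directions separately, with the Chinese Remainder Theorem doing the heavy lifting in both. The key observation is that for a factorization $U = U_1 \cdots U_\ell$ into pairwise coprime factors (which we may assume, since the $U_i$ are products of distinct primes and each prime appears in exactly one $U_i$), the integers $V_i = (U/U_i)T_i$ are precisely the CRT idempotents: $V_i \equiv 1 \pmod{U_i}$ and $V_i \equiv 0 \pmod{U_j}$ for $j \ne i$. Hence for any target residues $(r_1, \ldots, r_\ell)$, the integer $\sum_i r_i V_i$ is the unique element of $\ZZ/U\ZZ$ congruent to $r_i$ mod $U_i$ for each $i$.

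\textbf{Forward direction.} Suppose $A = \{b + ic : 1 \le i \le \ell\}$ contains a multiple of every prime $p \le n$. Group the primes: for each $i$, let $U_i$ be the product of those primes $p$ such that $i$ is the smallest index in $[\ell]$ with $p \mid b + ic$ (every prime gets assigned to some such $i$ by hypothesis, and this makes the $U_i$ pairwise coprime with $U_1 \cdots U_\ell = U$). Then for each $i$ and each prime $p \mid U_i$ we have $b + ic \equiv 0 \pmod p$, so $b + ic \equiv 0 \pmod{U_i}$. Now consider the single integer $Y = c\bigl(\sum_{i=1}^\ell i V_i\bigr) + b$. Using $V_j \equiv \delta_{ij} \pmod{U_i}$, we get $Y \equiv c \cdot i + b \equiv 0 \pmod{U_i}$ for every $i$; since the $U_i$ are coprime and multiply to $U$, this gives $Y \equiv 0 \pmod U$, which is exactly \eqref{eq:characterization-lemma}.

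\textbf{Reverse direction.} Conversely, suppose there is a coprime factorization $U = U_1 \cdots U_\ell$ and integers $b, c$ satisfying \eqref{eq:characterization-lemma}. Reducing that congruence mod $U_i$ and using $V_j \equiv \delta_{ij} \pmod{U_i}$ again yields $ci + b \equiv 0 \pmod{U_i}$, i.e. $U_i \mid b + ic$. Then every prime $p \le n$ divides some $U_i$, hence divides $b + ic$, so the arithmetic progression $A = \{b + ic : 1 \le i \le \ell\}$ contains a multiple of $p$. This closes the equivalence.

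\textbf{Main obstacle.} The only genuinely delicate point is bookkeeping around the factorization: one must be careful that in the forward direction the assignment of primes to indices produces \emph{pairwise coprime} $U_i$ whose product is exactly $U$ (not a divisor of it), since the definition of $T_i$ and $V_i$ via the inverse of $U/U_i \bmod U_i$ presupposes $\gcd(U/U_i, U_i) = 1$. Because each prime in $[n]$ divides $U$ exactly once and we assign it to a single index, coprimality is automatic; but it is worth stating explicitly. Everything else is a routine application of CRT and poses no real difficulty.
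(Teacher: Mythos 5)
Your proof is correct and follows essentially the same route as the paper's: both directions reduce the congruence modulo each $U_i$ using the CRT idempotent property $V_j \equiv \delta_{ij} \pmod{U_i}$, and the forward direction builds a "consistent" factorization by assigning each prime to an index $i$ with $p \mid b+ic$ (you pick the smallest such index, the paper just asks for any consistent assignment). Your explicit remarks on pairwise coprimality and on why $T_i$ is well-defined are welcome bookkeeping but do not change the argument.
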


\begin{proof}
One direction is easy: if $U = U_1U_2 \cdots U_\ell$ is a factorization for which there exist integers $b,c$ such that Equation (\ref{eq:characterization-lemma}) holds, 
then by considering the equation modulo $U_i$ we see that $ci + b \equiv 0 \pmod{U_i}$ and thus each prime factor of $U_i$ divides the $i$-th term in the arithmetic progression $A$.

In the other direction, we start with the assumption that each prime in $[n]$ divides some term in the arithmetic progression $A = \{b + ic: 1\le i \le \ell\}$. Let $U = U_1U_2 \cdots U_\ell$ be a consistent factorization in the sense that $p \mid U_i$ implies $p$ divides the $i$-th term in $A$. We claim that every $p \in [n]$ divides 
\begin{equation}
\label{eq:characterization}
c \cdot \left (\sum_{i = 1}^{\ell} iV_i\right ) + b
\end{equation}
and hence it equals $0$ modulo $U$ by the Chinese Remainder Theorem. As above, we see that modulo $U_i$, expression 
(\ref{eq:characterization}) is equivalent to $ci+b$, and hence it is divisible by every $p$ that divides $U_i$, by our choice of the factorization to be consistent. Since every prime $p \in [n]$ divides some $U_i$, this argument gives us that every such $p$ divides expression 
(\ref{eq:characterization}). Thus it equals 0 modulo $U$ as claimed. 
\end{proof}

Note that the parenthesized quantity in Equation (\ref{eq:characterization-lemma}) is, in general, of magnitude comparable to $U$, which is $\exp(n)$, while we wish for $b, c \le \exp(n^\alpha)$. Thus we are asking for a ``small'' affine transformation of a ``large'' quantity to equal a multiple of $U$. The design freedom is in the choice of the factorization of $U$, which determines the ``large'' quantity. If such a factorization is not possible, then this would disprove the Arithmetic Progression Version of the Strong $(\alpha, \beta)$-Divisor Conjecture.


Finally we note that if the Arithmetic Progression Version of the conjecture is true, then it cannot be true ``too generically'' in the sense expressed in the following lemma:
\begin{lemma}
Suppose for all positive integers $A_1, A_2, \ldots, A_\ell$ with $A_i \le M$, there exist positive integers $A_1', A_2', \ldots, A'_{\ell'}$ in arithmetic progression, with $A'_j \le M'$, such that for all $i$, there exists $j$ such that $A_i |A_j'$. Then either $\ell' \ge M^{\Omega(1)}$ or $M' \ge M^{\log\log M}$.
\end{lemma}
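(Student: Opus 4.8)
Assume $M' < M^{\log\log M}$; the goal is to deduce $\ell' \ge M^{\Omega(1)}$. The plan is to feed the hypothesis one fixed hard family of targets — the primes up to $M$ — rather than a random set. So I would apply the hypothesis with the $A_i$ taken to be the $\pi(M)$ primes $p \le M$ (the content of the lemma is in the regime $\ell \ge \pi(M)$; if $\ell$ is larger, repeat some $A_i$, or pad with $1$'s, which changes nothing). This yields an arithmetic progression of $\ell'$ positive integers, all at most $M'$, each prime $p \le M$ dividing at least one of its terms. After reindexing I may write the terms as $A'_1 = b,\ A'_2 = b+c,\ \dots,\ A'_{\ell'} = b+(\ell'-1)c$ with $b \ge 1$. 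I may assume $\ell' \ge 2$ and $c \ge 1$: otherwise the ``progression'' is a single integer $v \le M'$ divisible by every prime $\le M$, which is impossible because $\prod_{p \le M} p \ge e^{\Omega(M)} > M^{\log\log M} > M'$. From $A'_2 = b + c \le M'$ I then get $b < M'$ and $c < M'$.

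The key step is a dichotomy on how a prime $p \le M$ can occur among the terms. If $p \mid c$, then $p \mid A'_1 = b$ as well (subtract a multiple of $c$ from the term $p$ divides); the primes of this first kind are distinct and all divide $c$, so their product divides $c$ and is $< M'$. If $p \nmid c$, then $p$ divides the single integer $\prod_{j=1}^{\ell'} A'_j$; the primes of this second kind are distinct and all divide this product, so their product divides it and is at most $(M')^{\ell'}$. The two classes of primes are disjoint and the two sub-products are coprime, so multiplying them gives
\[
\prod_{p \le M} p \ \le\ c \cdot \prod_{j=1}^{\ell'} A'_j \ \le\ M' \cdot (M')^{\ell'} \ =\ (M')^{\ell'+1}.
\]

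To finish I would invoke Chebyshev's estimate $\sum_{p \le M} \log p = \Theta(M)$ (no prime number theorem needed), so the displayed inequality reads $\Omega(M) \le (\ell'+1)\log M'$. Since $M' < M^{\log\log M}$ we have $\log M' < (\log\log M)(\log M)$, and substituting gives $\ell'+1 > \Omega\!\big(M/((\log\log M)(\log M))\big) = M^{1-o(1)}$, hence $\ell' \ge M^{\Omega(1)}$ (indeed $\ell' = M^{1-o(1)}$), which is what we wanted. I do not expect a genuine obstacle; the only points needing care are keeping the quantifier over the $A_i$ honest (which is exactly why a fixed instance — the primes — is the right choice), the degenerate cases $c = 0$ and $\ell' = 1$, and the trivial bounds $b, c < M'$. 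As an alternative one could run a counting argument — a length-$\ell'$, height-$M'$ progression has at most $\ell' M'^{o(1)}$ divisors in $[M]$, and there are only $(M')^2$ such progressions, so a generic $\ell$-subset of $[M]$ with $\ell \gtrsim \log M'$ is uncovered unless $\ell'$ is large — but that route needs the sharp constant $\ln 2$ in the divisor bound to extract $\ell' \ge M^{\Omega(1)}$, so the prime-product argument above is cleaner and more robust.
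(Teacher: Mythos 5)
Your prime-product argument is internally correct and is genuinely different from the paper's proof, which is a pure counting/compression argument: the paper lower-bounds the number of sequences $(A_1,\ldots,A_\ell)$ by $M^\ell$ and upper-bounds it by the number of descriptions of the form (progression parameters $b,c$; for each $i$ an index $j$ and a divisor of $A'_j$), using the Divisor Bound. Within the regime where your argument applies, it is cleaner, avoids the delicate constant in the Divisor Bound, and gives the stronger conclusion $\ell' \ge M^{1-o(1)}$. The dichotomy on $p \mid c$ versus $p \nmid c$, the degenerate cases, and the Chebyshev step are all fine.

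The gap is in the sentence ``the content of the lemma is in the regime $\ell \ge \pi(M)$.'' That is exactly backwards relative to how the lemma is used. The hypothesis is quantified over sequences of a \emph{fixed} length $\ell$, and a covering scheme for length $\ell$ only yields one for smaller lengths (by padding), not larger ones; so to instantiate with all primes up to $M$ you must assume $\ell \ge \pi(M) = \Theta(M/\log M)$. But in the application the lemma is meant to rule out (the discussion immediately following it), one has $\ell = n^{1-2\beta}$ parts and $M \ge \exp(n^{2\beta})$, i.e.\ $\ell$ is \emph{polylogarithmic} in $M$. In that regime your instance is inadmissible, and the best your method can do is take $\ell$ primes of size close to $M$, yielding $\prod_i A_i \le (M')^{\ell'+1}$ and hence only $\ell' \gtrsim \ell \log M/\log M' \ge \ell/\log\log M$ --- nowhere near $M^{\Omega(1)}$. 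So your proof establishes the lemma only for $\ell = \Omega(M/\log M)$, whereas the paper's counting argument (whose own implicit largeness requirement is merely $\ell \gtrsim \log\log M$, a condition the lemma statement indeed omits and genuinely needs, since the claim is false for, say, $\ell = 1$) covers the polylogarithmic-$\ell$ regime that the lemma is actually for. To repair your approach you would need a hard instance of length $\ell$, not of length $\pi(M)$ --- which is essentially why the paper resorts to counting over all length-$\ell$ instances rather than exhibiting one.
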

This rules out the hope that one could (for example) argue the existence of an arithmetic progression starting only from an arbitrary partition of $[n]$ into $n^{1-2\beta}$ sets, setting $A_i$ to be the product of the integers in the $i$-th part, and ``embedding'' these factors into an arithmetic progression with parameters that do not degrade by much. Interestingly, this lemma leaves open the possibility that the parameter $M$ could degrade only mildly, which would still be suitable for our application. Is the bound in the lemma tight? If so, we would achieve exponent $4/3$ via Theorem \ref{thm:main}; we expect however that it is not tight, and that one would need to use more specific number-theoretic properties of the partition that would make it amenable to appearing as factors of an arithmetic progression.  

\begin{proof}
The proof is by counting two ways. The number of possible $A_i$ sequences is at least $\exp(\ell \log_2 M)$. If the $A_j'$ form an arithmetic progression $\{b + jc\}$ then we can specify the original $A_i$ sequence by writing down $b, c$, and for each $i$ the associated $j$ for which $A_i | A_j'$ together with a specification of that divisor of $A_j'$. Since by the Divisor Bound there are at most $\exp(\log x / \log \log x)$ distinct divisors of an integer $x$, an upper bound on the number of possible $A_i$ sequences is in total, at most: 
\[\exp\left ( 2\log_2 M' + \ell \left (\log_2 \ell' + \frac{\log M'}{\log \log M'} \right )\right ),\]
(which is exponential in the length of the description we specified above). 
If $\ell' \le M^{o(1)}$ and $M' \le M^{\log\log M}$ then this quantity is smaller than $\exp(\ell \log_2 M)$, a contradiction.
\end{proof}

\section{Conclusions and open problems}\label{sec:conclusion}

It is certainly possible that the Arithmetic Progression Version of the Strong $(\alpha, \beta)$-Divisor Conjecture is false, and the same could be true for any or all of the other variants. Even though some variants of the conjecture may turn out to be false, we note that there is still room for the ideas and algorithmic machinery in this paper to lead to improved algorithms for factoring polynomials and integers. This is because certain generalizations of the main conjectures are possible, and these generalizations could be true even if the versions we have discussed here are false. For example, one can replace the requirement that $S, T$ are generalized arithmetic progressions with the more general requirement that they have short {\em addition sequences}. This generalization of the classic {\em addition chains} was introduced by Knuth in \cite{Knuth_1997} ($\S 4.6.3$ problem $32$) and by Downey, Leong and Sethi in \cite{Downey_Leong_Sethi_1981}, and some nontrivial results are known in \cite{Downey_Leong_Sethi_1981} and \cite{Dobkin_Lipton_1980}. Addition sequences are sufficient because Lemma \ref{lem:preprocessing} remains true if $S, T$ have short addition sequences, with generalized arithmetic progressions being just one example of a suitably short addition sequence. 

Beyond directly proving or refuting the main conjectures, we highlight the case mentioned in Section \ref{sec:ap_conj}, of finding a length-$n^{2\beta}$ arithmetic progression consisting of integers of magnitude $\exp(n^\alpha)$ that contains multiples of the composites $xy$ with $x, y$ in the range $[n^{1/2 - \epsilon}, n^{1/2}]$, for a constant $\epsilon > 0$. Perhaps one can find a characterization similar to Lemma \ref{lem:prime_char} for this subset of integers. Alternatively, can one give a reduction to the ``prime case'' discussed in Lemma \ref{lem:prime_char}? It would be interesting to establish that the prime case is the ``hardest'' case formally. 

On the algorithmic side, we wonder whether a deterministic version of Theorem \ref{thm:randomized-DD} is possible. This would make our overall algorithm for distinct-degree factorization deterministic; the other algorithms (including the best-known exponent $3/2$ one) for this problem are deterministic. We still wouldn't obtain fast deterministic algorithms for polynomial factorization, because the known equal-degree factorization algorithms are randomized, but derandomizing Theorem \ref{thm:randomized-DD} is still an interesting challenge by itself.

Finally, it remains open to find a nearly-linear time algorithm for univariate polynomial factorization over a finite field, with or without the number-theoretic conjecture, which can prove exponent 4/3 at best. 

\bibliographystyle{alpha}
\bibliography{ref}

\begin{appendices}
\section{Deferred proofs}

\subsection{Proof of Lemma \ref{lem:int-preprocessing}} Recall that $S = S_1 + \cdots + S_c$ for $c \le n^{o(1)}$, where \[S_i = \{a_i + jb_i: j = 0, 1, \ldots, d_i\}\] for integers $a_i \ge 0, b_i > 0$ and $d_i \ge 0$. We can compute $a_i+jb_i \bmod d$ for $j = 0,1,\ldots, d_i$ by first computing $a_i \bmod d$, $b_j \bmod d$, and then add the second to the first $d_i$ times modulo $d$. Repeating for $i = 1, 2, \ldots, c$ we obtain $u_i \bmod d$ for all $u_i \in S_i$, and for all $S_i$. Finally, to obtain all $u \in S$ we do \[u \bmod d= (u_1 \bmod d) + \ldots + (u_c \bmod d) \mod d\]  and we repeat the same procedure for $T$. The overall cost is \[\sum_{i=1}^c \left ( \lceil \log a_i \rceil + \lceil \log b_i \rceil + (d_i+1) \right ) + |S|\]
modulo $d$. We obtain a total cost of $\widetilde{O}((\log d)(n^{\alpha+o(1)} + n^{\beta}))$. Here we used the fact that $a_i, b_i \le \exp(n^\alpha)$ and $\prod_i (d_i+1) = |S| \le n^\beta$.\\

\subsection{Proof of Lemma \ref{lem:int_bsgs}} Writing $a, b$ in base-$|S|$ we have $a = a_0 + a_1|S|$ and $b = b_0 + b_1|S|$. we divide integers $\{A_a,...,A_b\}$ into three sets with the following indices
\begin{eqnarray*}
    Q_{\mbox{left}} & = & \{a_0, a_0+1, \ldots, |S|-1\} + a_1|S| \\
    Q_{\mbox{middle}} & = & \{(a_1+1)|S|, \ldots, b_1|S|\} \\
    Q_{\mbox{right}} & = & \{0, 1, \ldots, b_0\} + b_1|S| \\
\end{eqnarray*}
The first and third set have cardinality at most $|S|$. We can produce the product of all the integers modulo $d$ using Lemma \ref{lem:int-preprocessing}. This takes time $\widetilde{O}((\log d )n^{\beta})$. All that remains is to compute the product modulo $d$ associated with $Q_{\mbox{middle}}$. We again use fast multipoint evaluation, this time forming a polynomial of degree at most $|T|$ over $\ZZ/d\ZZ$:
\[p(Z) = \prod_{j = a_1+1}^{b_1}(Z - t_j).\]
 We then evaluate $p(Z)$ at $s$ for all $s \in S$ using fast multipoint evaluation, which entails $\widetilde{O}(|S| + |T|)$ operations in the ring $\ZZ/d\ZZ$. Then, we take the product of these evaluations using $O(|S|)$ operations. Altogether, we obtain the ring element:
\[\prod_{s \in S} \prod_{j = a_1+1}^{b_1}  (s - t_j) \bmod d.\]

Multiply this product with those associated with $Q_{\mbox{left}}$ and $Q_{\mbox{right}}$ will produce $\prod_{i = a}^b A_i \bmod d$, which completes the proof. 
\end{appendices}

\end{document}


\begin{lemma}
Suppose the arithmetic progression $A = \{b + ic: 1\le i \le n^{2\beta\}$ satisfies the $n$-Divisor Property. Then $c \ge \exp(n^{1 - 2\beta})$.
\end{lemma}
\begin{proof}
Consider the primes in the range $[n^{2\beta}+1, \ldots, n^{2\beta} + \widetilde{O}(n^{1 - 2\beta})]$ where the interval is chosen to contain at least $2\beta n^{1 - 2\beta}$ distinct primes. If every one of these primes divides $c$, then we are done. Otherwise, let $q$ be a prime that does not divide $c$ and let $p_1, p_2, \ldots p_k$ be an enumeration of primes smaller than $n/q$. Note that by the $n$-Divisor Property, for each $1 \le j \le k$ there is some $i_j$ for which $pq_i$ divides $b+i_jc$. Then for any {\em two} $j, j'$ we have 
\[q | (b + i_jc) - (b+i_{j'}c) = (i_j - i_{j'})c\]
Since $q> |(i_j - i_{j'})|$ and $q \nmid c$, this can only happen if $i_j = i_{j'}$. This must hold for every pair, and thus $i_j = i^*$ for all $j$. In other words, $qp_1p_2\ldots p_k | (b+i^*c)$, which implies that ?
\end{proof}

In this subsection we give an equivalent characterization that may be easier to work with. 

We start with a useful lemma:
\begin{lemma}
Let $A = \{b + ic: 1 \le i \le \ell\}$ for integers $b \ge 0, c > 0$. Let \[A' = \{c^{\log_2 \ell}(b + ic): 1 \le i \le \ell\}.\] Then $A'$ satisfies the $n$-divisor property if $A$ does, and every positive integer at most $\ell$ divides some element of $A'$.
\end{lemma}
\begin{proof}
Let $m \le \ell$ be a positive integer. The key observation is that $\gcd(c^{\log_2 n}c, m)$ divides $\gcd(c^{\log_2 n}b, m)$. To see this, set $r_0 = \gcd(c, m)$ and observe that the prime factors of $r_0$ each may occur at most $\log m \le \log \ell$ times in $m$. Write $m = rm'$ where $r'$ consists of prime factors of $r_0$ in their multiplicity and $\gcd(m', c) = 1$. Then $\gcd(c^{\log_2 n}c, m) = r$, and $\gcd(c^{\log_2 n}b, m) = r\cdot\gcd(b, m')$.

By standard facts about modular equations, then,  there is a solution to \[i(c^{\log_2 n}c) \equiv (-c^{\log_2 n}b) \pmod{m},\] with $1 \le i \le \ell$, as required.
\end{proof}

\begin{lemma}
Suppose that the $n$-Divisor property holds for the set $A = \{b + ic: 1 \le i \le \ell\}$, with $b,c \le \exp(n^{\alpha})$, for $\alpha < 1/2$. Then $\ell \ge 1 - \alpha - o(1)$.
\end{lemma}
\begin{proof}
Let $Q$ be the set of primes between $\sqrt{n}/1000$ and $\sqrt{n}/100$; by the Prime Number Theorem, $|Q| \ge \Omega(\sqrt{n}/\log n)$. Our bound will come from considering those $m \in [n]$ such that $m = pq$ for $p, q \in Q$.

Since $c \le \exp(n^\alpha)$ there are at most $O(n^\alpha) \ll |Q|$ distinct primes in $Q$ that divide $c$, so there must be some $p_0 \in Q$ such that $p_0 \nmid c$. There are $|Q|$ distinct integers of the form $p_0q$ with $q \in Q$, all of which must divide some element of $A$ by the $n$-Divisor property. Define \[I = \{i: \exists q \in Q \mbox{ such that } p_0q | (b+ic)\}.\]
For every $i, j \in I$, we have that $p_0$ divides $(b+ic)$ and $p_0$ divides $(b+jc)$  which implies $p_0$ divides their difference $(j - i)c$; then since $p_0 \nmid c$, it must be that $p_0$ divides $(i-j)$. Thus, the differences $(i - j)$ for $i, j \in I$ must all be divisible by $p_0$, and thus $|I| \le \ell/p_0$.

By the $n$-Divisor property, for every $m$ of the form $m = p_0q$, there must be some $i \in I$ for which $m|(b+ic)$. Since $(b+ic) \le (\ell +1)\exp(n^\alpha)$ we see that at most $O(n^{\alpha} + \log \ell)$ different such $m$'s can divide the same $(b+ic)$. We conclude that 
\[|Q| \le O(n^{\alpha} + \log \ell)|I| \le O(n^{\alpha} + \log \ell)\cdot(\ell/\sqrt{n}),\]
and combining with $|Q| \ge \Omega(\sqrt{n}/\log n)$ we get $n \le O(n^{\alpha+o(1)})\cdot \ell$ as claimed.
\end{proof}


An algebraic fact commonly used by modern factoring algorithm is that over $\F_q$ the polynomial
\[X^{q^n} - X \mod f(X)\]
is the product of factors of $f(X)$ with degree dividing $n$. The polynomial $X^{q^i}$ is also called the \textit{i-th Forbenius power} of $X$. To quickly compute such a Frobenius power modulo $f(X)$,  we utilize the invention due to (Kaltofen, von zur Gathen \& Shoup [vzGS92]): first compute $X^q \mod f(X)$ by repeated squaring, then compose the result with itself modulo $f(X)$ to get

\[(X^q)^q \mod f(X) = X^{q^2} \mod f(X).\]

Therefore, it only takes $O(\log n)$ modular compositions to obtain $X^{q^n} \mod f(X)$ from $X^{q} \mod f(X)$. A crucial fact used is that raising to the $q$-th power commutes with modular composition, which is not true for multiplications in general. We include a lemma here and defer the proof to the appendix. 

\begin{lemma} Over $\F_q[X]$, \[X^{q^{i+j}} \mod f(X) = \left(X^{q^{i}} \mod f(X) \right)\circ \left(X^{q^{j}} \mod f(X)\right) \]
    
\end{lemma}
\begin{proof}
    See Appendix (?)
\end{proof}

In \cite{KU11}, Kedlaya and Umans obtained an algorithm that solved DDF for a degree $n$ polynomial over $\F_q$ using $\widetilde{O}(\log n^{3/2 + o(1)})$ bit operations. To summarize their strategy, they computed polynomials of the form
\[P_m(X) = (X^{q^1} - X)(X^{q^2} - X)\ccdot (X^{q^m} - X) \mod f(X)\]
for some $1 \leq m \leq n$. Using the baby-step, giant-step approach, we obtain $P_m(X)$ by evaluating a degree-$O(\sqrt{m})$ polynomial at $O(\sqrt{m})$ points, at a total cost of $\widetilde{O}(\sqrt{m}\cdot n)$ operations. Since $\gcd(P_m(X), f(X))$  contains all factors of $f(X)$ with degree $m$ or less, by checking if $\gcd(P_m(X), f(X))$ equals $1$, or $f(X)$, or some non-trivial factor $g_m(X)$ of $f(X)$, one can reduce the valid interval $[1,...,m]$ that an irreducible factor lies by $1/2$ every time a $\gcd$ computation is performed. As a result, using binary search on the mid-point of each interval (e.g. $m = n/2$, $m = n/4, m = 3n/4 ...)$, and repeatedly computing $\gcd(P_{m}(X), f(X)/g_{m}(X))$, one can efficiently produce all irreducible factors of $f(X)$.\\

In the algorithm above, we have hoped to compute $X^{q^{m}} - X \mod f(X)$ for small $m$'s (namely, $m \leq n$). To get an improvement of the exponent-$3/2$ algorithm, notice that one such computation only takes $O(\log m \log q)$ operations, which is far below the exponent-$3/2$ allowable budget. If we let $m$ grow, encompassing more divisors of $n$ and therefore reducing the number of intermediary polynomials $X^{q^{m}} - X \mod f(X)$ to compute, we can potentially achieve speed-up of DDT. More formally, we look for a set of integers $A_1,..., A_m$ such that the following property holds:

\SWnote{where should I put the small example?}

\begin{property} \label{divisor}
Let $0 \leq \beta \leq 1$. For some $m = n^\beta$, the set of integers $A_1,.., A_m$ satisfies the divisor property if  $\forall i \in \{1,.., n\}$, $i \mid A_j$ for some $j$.
\end{property}

For $m = n$, we have that $A_1 = 1, ..., A_m = n$ trivially holds the divisor property. For $m = n/2$, it is an easy exercise to check that $A_1 = n/2 + 1, ..., A_{n/2} = n$ also holds the divisor property. \\

Corresponding to Property \ref{divisor}, we present a version of the DISTINCT DEGREE FACTORING problem we'd like to solve:

\begin{problem} \label{poly}
    Given $A_1, ..., A_m$, that satisfies Property \ref{divisor}, compute 
    \[t_1(X) \ccdot t_m(X) \equiv (X^{q^{A_1}} - X) \ccdot (X^{q^{A_m}} - X) \mod f(X)\]

    where each $t_i(X)$ is a product of factors of $f(X)$ with degree dividing $A_i$. 
\end{problem}

Continuing with the example above, the case $m = n$, $A_1 = 1, ..., A_n = n$ corresponds to the exponent-$3/2$ algorithm, and for $m = 1$ and $A_1 = m!$ we can trivially compute\[X^{q^{m!}} - X \mod X\]
using modular composition, which leads to an exponent-$2$ algorithm. Therefore, to get an improvement on DISTINCT DEGREE FACTORING, additional constraints on the magnitude and the structure of these numbers need to be imposed. We capture these constraints as follows:

\begin{conjecture} ($(\alpha,\beta)$-Divisor Conjecture)
\label{gap}  Let $\alpha \in (0, 1/2), \beta \in (0, 1)$. For every $n > n_0$, there exists $S, T \subseteq \ZZ^{+}$ such that the pairwise differences \[\{s-t \mid s \in S, t \in T\}\] satisfies the divisor property. Moreover, the following holds:\\
1. $|S|, |T|$ are of size $n^{\beta/2+o(1)}$;\\
2. $s, t = \exp(n^\alpha)$; \\
3. $\exists c, c' = n^{o(1)}$ such that
\[S = \{S_1 + ... + S_c\}, T = \{T_1 + ... +T_{c'}\}.\]
where $S_i, T_j$ are arithmetic progressions.
\end{conjecture} 

\SWnote{Elaborate } Notice that $\alpha + \beta \geq 1$ by the magnitude of the product of primes $\leq n$. Therefore, $\beta \geq 1/2$. \\

This is the same as saying that the integers in $S$ (or $T$) are vertices in a hypercube of volume $O(n^{\beta/2+o(1)})$. Intuitively, if a set of a integers have such a ``succinct representation''; it is computationally cheap to obtain one from another. Assuming that Conjecture \ref{gap} is true, Theorem \ref{GAP} shows a pathway to solving Problem \ref{poly}. On the other hand, it is not true that we can directly reuse the binary search process from \cite{KU11} to get an improvment on DDF. Several adjustments need to be made and these are captured by Corollary \ref{bs}.\\

We first fix some notations and parameters. For the remainder of this section, $g(X)$ is a polynomial of degree $d \leq n$ which is understood to be a factor of $f(X)$.\\

For an integer $A$ and a polynomial $g(X)$ of degree $d$, we call the polynomial
\[X^{q^{A}} \mod g(X)\]
\textbf{$A$'s remainder modulo $g(X)$}. If the context is clear, we will simply call it \textbf{modulo remainder}. 

\begin{lemma}\label{lem:q-rem} For an integer $A_i$ of magnitude $\exp (n^{\alpha})$ and $g(X)$ of degree $d$, its remainder modulo $g(X)$
\[X^{q^{A_i}} \mod g(X)\]
takes $\widetilde{O}(n^{\alpha}\cdot d)$ operations to compute. 
\end{lemma}
\begin{proof}
   First, we obtain $X^q \mod g(X)$ via repeated squaring: $X \mod g(X), X^2 \mod g(X), X^4 \mod g(X),...$. This takes $O(\log q \cdot d)$ operations over $\F_q[X]/g(X)$. To compute a remainder of the form \[X^{q^{A_i}}\mod g(X),\] notice by the magnitude of $A_i$, it has a binary representation of $M$ bits, for some $M = O(n^\alpha)$. Write $A_i = a_1\cdot 2^{1} + \ccdot + a_M \cdot 2^{M}, a_i \in \{0, 1\}$, we can compute modulo remainder associated to each digit place
\[X^{q^{2}} \mod g(X), ...,  X^{q^{2^{M}}} \mod g(X).\]

It takes a modulo composition of $X^{q} \mod g(X)$ with itself to obtain $X^{q^2} \mod g(X)$; the same is true for obtaining the modulo remainder of any $2^{i}$ from $2^{{i-1}}$. Once each digit place is done, it take another $M$ compositions to assemble $A_i$'s modulo remainder. The Lemma is proved.   
\end{proof}

Now, we define an ordering on the set $S$ (or $T$). As $S$ is a generalized arithmetic progressions of $c$ differences, first define an ordering on the differences: $d_1 \prec d_2 \prec ... \prec d_c$. Notice that the ordering does not have to base on magnitude; it can be symbolic. Then, for $a, b \in S$, let $a \prec b$ iff $a_i \prec b_i$ where $i$ is the smallest $d_i$ where $a$ and $b$ differ. Denote the ordering as $\prec_S$ on $S$ (respectively, $\prec_T$ on $T$). \\ 

\begin{lemma}\label{lem:q-rem-set} 
Let $0 < \alpha < 1/2$. Using the notations from Conjecture \ref{gap}, let $S_0 \subseteq S$ (or $T$) be a subset of size $k$ consisting of consecutive elements. It takes $O((n^{\alpha} + k)\cdot n^{o(1)} \cdot d)$ operations to obtain the set $S_0$'s remainders modulo $g(X)$. 
\end{lemma}

\begin{proof}
A consecutive subset $S_0$ of $S$ inherits the ordering, and is still a generalized arithmetic progression of $\leq c$ differences. Let $s$ be the smallest element of $S_0$. Its remainder modulo $g(X)$ has the form 
\[X^{q^{s}} \mod g(X)= (X^{q^{d_0}}) \circ (X^{q^{i_1 d_1}}) \circ ... \circ (X^{q^{i_c d_c}})\mod g(X)\]
for some initial offset $d_0  = \exp(n^\alpha)$ and  $\prod_{k = 1}^c{i_k} \leq n^{\beta/2 + o(1)}$. By Lemma \ref{lem:q-rem}, it takes $(c+1) \cdot O(n^{\alpha}\cdot d) = O((n^{\alpha + o(1)})\cdot d)$ operations to obtain the modulo remainder associated with all the $d_i$'s; composing each gate for at most $O(\log n)$-many times give us the modulo remainder of $s$. To compute the whole set $S_0$, since all elements are consecutive, one needs $k$ extra compositions from $s$ to get all the modulo remainders. The Lemma follows. \\    
\end{proof}

Let $S$, $T$ be defined as in Conjecture \ref{gap}. We now order the set $\{s - t \mid s \in S, t \in T\}$ with $s \in S$ changing most rapidly, and then $T$. Within $S$ and $T$ the elements are ordered via $\prec_S$ and $\prec_T$. Relabel the set $M = \{A_1, ..., A_m\} = \{s - t \mid s \in S, t \in T\}$, for some $m = n^{\beta + o(1)}$. For an $A_i = s_a - t_u, A_j = s_b - t_v$, we have $A_i \prec_{M} A_j$ iff $s_a \prec_S s_b$ or $s_a = s_b$ and $t_a \prec_T t_b$. Notice that this ordering is also symbolic, and not based on magnitude.\\

For a polynomial $g(X)$ of degree $d$, define the \textbf{interval polynomial modulo $g(X)$} as \[Q(u,v) = \prod_{i = u}^v(X^{q^{A_i}} - X) \mod g(X)\]

which is the product of modulo remainders from $A_u$ to $A_v$. The following lemma shows that the complexity of computing interval polynomials of the same length is the same regardless of starting points. 

\begin{lemma} \SWnote{change this to the equivalent of: ... that satisfies Property 2}\label{interval_poly}
 For $1\leq u \leq v \leq m$, the interval polynomial $Q(u,v) \mod g(X)$ takes $O((n^{\alpha} + n^{\beta/2})\cdot n^{o(1)} \cdot d)$ operations to compute. This solves Problem \ref{poly}. 
\end{lemma}

\begin{proof}
Write $A_u = s_{i_u} - t_{j_u}$ and $A_v = s_{i_v} - t_{j_v}$. As $Q(u,v)$ is the product of consecutive modulo remainders, we first break it up into three pieces: an initial segment 
\[Q_0 = \prod_{k = j_u}^{|T|}(X^{q^{s_{i_u} - t_{k}}}- X)\]
a middle segment 
\[Q_1 = \prod_{k = 1}^{i_v-i_u}\prod_{t \in T}(X^{q^{s_{i_{u}+k}-t}}- X)\]
and a final segment 
\[Q_2 = \prod_{k = 1}^{j_v}(X^{q^{s_{i_v} - t_{k}}}- X)\]
Notice that both $Q_0 \mod g(X)$ and $Q_2 \mod g(X)$ are the modulo remainders of subsets of $T$ shifted by $s_{i_u}$ and $s_{i_v}$. Since $Q_0$ and $Q_2$ have length at most $|T| = n^{\beta/2+o(1)}$, using Lemma \ref{lem:q-rem-set}, we can obtain $Q_0 \mod g(X)$ and $Q_2 \mod g(X)$ in time $O((n^{\alpha} + n^{\beta/2})\cdot n^{o(1)} \cdot d)$. To obtain $Q_1$, we employ a baby-step, giant-step approach:\\

Over the ring $\F_q[X]/g(X)[Z]$, we form the polynomial $P(Z)$
\[P(Z) = \prod_{k=1}^{i_v-i_u}(X^{q^{s_{i_u+k}}} - Z) \mod g(X)\]
and evaluate it at the modulo remainders of the set $T$. Since $P(Z)$ has degree at most $n^{\beta/2 + o(1)}$, forming $P(Z)$ takes time $O((n^{\alpha} + n^{\beta/2})\cdot n^{o(1)} \cdot d)$. Using fast multipoint evaluation \SWnote{add reference}, the cost of evaluating a degree $\leq n^{\beta/2 + o(1)}$ univarite polynomial at $\leq n^{\beta/2 + o(1)}$ points is $\widetilde{O}(n^{\beta/2} \cdot d)$ over $\F_q[X]/g(X)$. This procedure produces a polynomial of the form 
\[\prod_{k = 1}^{i_v-i_u}\prod_{t \in T}(X^{q^{s_{i_{u}+k}}}- X^{q^t}) \mod g(X).\]
By the property of $\F_q$, this is the same as \[\prod_{k = 1}^{i_v-i_u}\prod_{t \in T}(X^{q^{s_{i_{u}+k}-t}}- X)^{q^t} \mod g(X).\] 
Since $g(X)$ (as a factor of $f(X)$) is squarefree, the modulo remainder computes the same polynomial as $Q_1 \mod g(X)$. As a result, $Q_0Q_1Q_2 \mod g(X)$ gives us the desired interval polynomial $Q(u,v) \mod g(X)$. 
\end{proof}

In the following theorem, we refer to the interval polynomial $Q(u,v)$'s \textbf{range} as the set of consecutive integers $\{A_u,..., A_v\}$.
\begin{theorem} \SWnote{Maybe change this into more about Property 3}\label{prop_3}
Let $\gamma = \max(\alpha,\beta/2)$. There exists a probabilistic algorithm for DISTINCT DEGREE FACTORING in time $O(n^{\gamma+1+o(1)})$ over $\F_q[X]$. 
\end{theorem}

\begin{proof}
   We adopt the notation from the previous proof. We start with a preprocessing stage, where all irreducible factors of degree $\leq n^{2\gamma}$ are produced. 
   Using the exponent-1/2 algorithm in \cite{KU11}, we can do this in time $\widetilde{O}(n^{\gamma})$.\\
    
   There are three main components to this algorithm. By Lemma \ref{interval_poly}, the interval polynomials we described satisfy Property 2 of Theorem \ref{recur-split}. So the first component is recursive splitting, as described in Theorem \ref{recur-split}.  So the result of recursive splitting gives us integers $\{(s-t)_1, ..., (s-t)_{b_0}\}$, $b_0 = \widetilde{O}(n^{\beta/2})$ \SWnote{actually this needs some work, because more straightforward is $n^{1-2\gamma}$ but it's not so pleasing} each of magnitude $\exp(n^{\alpha})$, such that their $q$-power remainders modulo $f(X)$, i.e.
\[X^{q^{(s-t)_i}} - X \mod f(X), i = 1,..., b_0\]produce a set 
$\bigg\{k_1(X), ..., k_{b_0}(X)\bigg\}$ for which $\prod_{i = 1}^{b_0} k_i(X) = f(X)$.   We need to show that for each integer $(s-t)_i$ and its modulo remainder $k_i(X)$ of degree $d_i$, we can compute the distinct-degree factorization of $k_i(X)$, i.e., a set of polynomials $f_1, ..., f_n$ such that $k_i(X) = f_1, ..., f_i$ and $f_i$ is the product of degree $i$ irreducible polynomials. This will be the second and the third components of the algorithm. Combined, they satisfy Property $3$ of Theorem \ref{recur-split}, and we therefore get DDF.\\

The goal of the second stage is to reduce the size of each integer $(s-t)_i$ from $\exp(n^{\alpha})$ to $O(n)$ by both removing irrelevant factors its prime factorization, and split down each $k_i(X)$ into irreducible factors of degree $\leq O(n)$.\\

An assumption we make is that for integer $(s-t)_i$, we are supplied with its prime factorization (with repeated primes): $(s-t)_i = p_{i_1}\ccdot p_{i_{a_i}}$, for which $a_i = O(n^{\alpha})$.\\

 The input of the algorithm is an integer $(s-t)_i$, its associated modulo remainder $k_i(X)$ and its prime factor set $S = \{p_{i_1},...,p_{i_{a_i}}\}$.  In one round, the algorithm discards each prime $p_{i,j}$ in $\{p_{i_1},...,p_{i_{a_i}}\}$ independently with probability $\frac{1}{2\log n}$; the ones that are left form a new set $S_1$. Then, the $q$-th power remainder associated with $S_1$ is computed:
\[r_1(X) = X^{q^{\prod_{p \in S_{1}}p}} - X \mod k_i(X)\]

Both procedures run in $O(n^{\alpha}\cdot d_i)$, where $d_i$ is the degree of $k_i(X)$. We claim the probability that $\gcd(r_1(X), f(X)) = g(X)$ is a non-trivial factor of $k_i(X)$ is $\geq 1/2$. This is because the prime factors of the degree of $g(X)$ lies in the set $S_0$, and there are at most $\log n$ of them. The probability that one such factor get discarded is $\frac{1}{2 \log n}$; therefore, by union bound, the probability that none of them gets discarded is $\geq \log n/(2\log n) = 1/2$.\\

Therefore, $g(X) = 1$ means we have either failed to split a factor, or $k_i(X)$ does not have any factors of degree dividing $S_1$. In either case, we can repeat this round by re-picking $S_1$ for $(\log n)^2$ times to reduce the failure probability to $\leq (1/2)^{(\log n)^2} = 1/n^{\log n}$. Therefore, with a running time of $O(n^{\alpha}\cdot d_i)$, we will see that $\gcd(r_1(X), f(X)) = g(X)$ is a nontrivial factor of $k_i(X)$. On the other hand, for every irreducible factor of $k_i(X)$, the probability that it is preserved is $\geq 1/2$; so in expectation, $g(X)$ contains more than $1/2$ of $k_i(X)$'s factors. We can move on to the next round with two subproblems, $\{S_1, g(X)\}$ and $\{S, f(X)/g(X)\}.  \\

On the other hand, If $g(X) = k_i(X)$, we move on to the next round with the new factor set $S_1$; if $g(X)$ is a nontrivial factor of $k_i(X)$, we can move on to the next round with $S_1$ and $g(X)$.\\ 

Since every successful round reduce the expected size of $S_0(X)$ by a factor of $(1-\frac{1}{2\log n})$, after $<2 \log n \ln n$ rounds we would have reduced the size of the set to $O(\log n)$ \SWnote{maybe add short reasoning}. When this happens, we proclaim the probabilistic stage as done.\\

In terms of running time, the cost of computing the $q$-th power remainder associated with any $S_i$ modulo a $k_i(X)$ with degree $d_i$ is $\widetilde{O}(n^{\alpha} \cdot d_i)$. Since there are at most $O((\log n)^2)$ levels of splitting, and on each level, the degree of the modulo remainders always sum up to $n$, so the total work is $\sum (\widetilde{O}(n^{\alpha}) \cdot d_i) = \widetilde{O}(n^{\alpha}) \cdot n$ run at most $O((\log n)^2)$ times. Therefore, we finish this probabilistic stage with running time $\widetilde{O}(n^{\alpha+1})$. The outcome of this stage is a set of integers $\{s_1, ..., s_c\}$ for some $c = \widetilde{O}(n^{\beta/2})$, each $s_i$ has a prime factorization of at most $O(\log n)$ primes, as well as their $q$-th power remainder $\{k'_1(X),..., k'_c(X)\}.$\\

In the final deterministic stage, the goal is to produce a distinct-degree factorization of $f(X)$: $f_1, f_2,..., f_n$ where $f = f_1f_2 \ccdot f_n$ and each $f_i$ is either 1 or the product of degree-$i$ irreducible polynomials. This is a probabilistic solution to Problem 1.1. \\

For each $s_i$, its modulo remainder $k_i'(X)$ and its prime factor set $S = \{p_1,...,p_{a}\}$ for some $a = O(\log n)$ , compute 
\[r'(X) = X^{q^{\prod_{p \in S \setminus p_k} p}} - X \mod k_i'(X)\]

for each $p_k \in \{p_1,...,p_{a}\}$. That is, in each round we remove one prime factor $p$ from the set $S$ and see if $S \setminus p$ produce a non-trivial split. If the remainder $\gcd(r'(X), k'_i(X))$ is $1$ for every $p_k$, then $k'_i(X)$ is exactly the product of irreducible factors of degree $\prod_{p \in S}p$, and we can output this answer. If $\gcd(r'(X), k'_i(X)) = k'_i(X)$ for some $S \setminus p_k$, we can discard $p_k$ and move on to the next round. Finally, if $\gcd(r'(X), k'_i(X))$ a nontrivial split of the factors, we can output $k'_i(X)/r(X)$ as an irreducible factor of degree $p_k$, and continue the search on $r(X)$ with the set of primes $S \setminus p_k$. For each $k'_i(X)$, this procedure runs for at most $(\log n)^2$ rounds; So the cost of splitting $k'_i(X)$ into distinct degree irreducible factors is $\widetilde{O}(d_i') \cdot \text{polylog}(n)$ for $\deg(k'_i) = d_i'$. Since $\sum \deg(k'_i) = n$, we have that the running time for step 3 is...
\end{proof}

\begin{proposition} (\textcolor{red}{Need fixing})
    $\alpha = \beta/2 = 1/3$ achieves the optimal running time of solving Problem \ref{polyfactor} using the baby-step, giant-step scheme \SWnote{Needs renaming}.
\end{proposition}
\begin{proof}
Optimally we have $\gamma + \alpha = 1$. If $\widetilde{O}(n^{\gamma})$ integers of size $\exp(\widetilde{O}(n^{\alpha}))$ needs to be divisible by $1,...,n$, then they are divisible by the product of all primes from $1,.., n$, which have size ? 
\end{proof}

\newpage
\SWnote{sections below needs fixing}
\section{Related conjectures and the information-theoretic barrier}

We now present a few variations of the conjecture used in Theorem \ref{GAP}. We also give an information-theoretic argument on why a randomly picked set of integers is unlikely to have the property we need. 

Throughout the following, parameters $\alpha$ and $\beta$ have $\alpha + \beta = 1$. As proved in ??, $\alpha = \widetilde{O}(n^{1/3})$ and $\beta = \widetilde{O}(n^{2/3})$ gives the best case running time in our current scheme. 
\begin{conjecture} (Arithmetic Progression)
For every $n$, there exists an integer $B$ of size $\widetilde{O}(exp(n^{\alpha}))$, such that for some $m = \widetilde{O}(n^{\beta})$, we have
$$1,..., n \mid \prod_{i = 0}^{m}(B + i),$$
\end{conjecture}

\begin{remark}
Using Chinese Remainder theorem, this has an equivalent interpretation: Given a set of congruences $(a_1, ..., a_k) \mod (p_1, ..., p_k)$ where $(p_1, ..., p_k)$ are all the primes between $m+1$ and $n$, is there a solution $B \mod \prod {p_i}$ that is of size $O(exp(n^{\alpha}))$?

Another formulation is this:
 Let $p_1 < p_2 < ...< p_l$ denote the sequence of primes larger than $m+1$ and smaller than $n$. There exists an integer $B$ with the following residue classes:

$$B \equiv (P_1-1 \; \makebox{mod}\ P_1)$$
$$B \equiv(P_2-2 \; \makebox{mod}\ P_2)$$
$$\ccdot$$
$$B \equiv(P_k -k \; \makebox{mod}\ P_k)$$

where $l = O(n^{2/3})$ as above, and each $P_i = \prod_{j=1}^{O(n^{1/3})} p_{ij}$ is a product of at most $O(n^{1/3})$ primes (namely those that divide each $B+i)$. CRT guarantees a unique solution mod $P_1\ccdot P_k$, but we want $B = O(n^{n^{1/3}})$. It seems unlikely that the unique solution mod $P_1\ccdot P_k$ is of the same order of magnitude as the residues of all equivalence classes. 

However, it is possible that the final solution constructed via CRT is a bit larger, and any integer with size $<O(n^{n^{1/2}})$ gives us an improvement of the polynomial factoring algorithm. It is in our interest to obtain any non-trivial bound of the magnitude of such an integer. 
    
\end{remark}

\begin{conjecture} (Arithmetic Progression)
For every $n$, there exists a number $B$ of size $\Tilde{O}(n^{n^{1/3}})$, such that for some $k = \Tilde{O}(n^{2/3})$,
$$1,..., n \mid \prod_{i = 0}^{k}(B + iA)$$
\end{conjecture}

\begin{conjecture} (Generalized Arithmetic Progression) There exist $c = \widetilde{O}(\log n)$ integers $A_1, ..., A_c$, each of size $\widetilde{O}(n^{n^{1/3}})$, and  $m_1, ..., m_c \geq 0$ such that 
$$1,..., n \mid \prod_{i_1, ..., i_c = 1}^{m_1, ..., m_c} (B + i_1 A_1 + ... + i_c A_c)$$
 with the constraint that $\prod_{i = 1}^c m_i = \widetilde{O}(n^{2/3})$. In other words, all indices $(i_1, ..., i_c)$ are contained in a hypercube of volume $\widetilde{O}(n^{2/3})$. 
\end{conjecture}

\begin{remark}
Just like the above, any $B, A_1,..., A_c$ of magnitude $< \widetilde{O}(n^{n^{1/2}})$ and $\prod_{i = 1}^c m_i < \widetilde{O}(n)$ will lead to a non-trivial improvement in FACTOR DEGREE. 
\end{remark}

The following two conjectures are more general, but we don't know how to turn them into algorithms. 

\begin{conjecture} (Most general) For every $n$, there exists a set $S$ of size $\Tilde{O}(n^{1/3})$, with each $s \in S$ of size $\Tilde{O}(n^{n^{1/3}})$, such that 
$$1,..., n \mid \prod_{s,t \in S}(s-t)$$
\end{conjecture}

\begin{conjecture} (Residue class formulation)
There exists a set $S \subseteq \Z^{\Tilde{O}(n^{n^{1/3}})}$, $k = |S| = \Tilde{O}(n^{1/3})$, a set $T$, $|T| < k$, such that for every $j \in [1,...,n]$ and every $s \in S$, we have $s \mod j \in T$. 
\end{conjecture}

To disprove Conjecture 6 and 7, one might look into upper bound results on the number of distinct prime factors dividing arithmetic progressions of short lengths. Related papers [...], but this seems to be an open problem.\\

Here is a characterization on the set of integers that is usable to us; that is, they need to have a ``succinct representation".

\begin{proposition} (Information-theoretic barrier) Let $A_1, ..., A_m$ be integers such that $$1,..., n \mid A_1, ..., A_m$$
Suppose each $A_i$ can be written as
$$A_i =  B_0 + i_1B_1 + ... + i_cB_c$$
for $i_1 \leq m_1,..., i_c \leq m_c$ and $\prod_{i = 1} m_c\leq m$, then the set $A_1, ..., A_m$ has a representation in $$\sum_{i = 0}^c  \log(B_i) + \sum_{i= 0}^c \log m_i$$
bits. 
\end{proposition}

\begin{remark}
    In general, we need $\sum_{i = 1}^m\log A_{i}$ bits to represent the set of integers $A_1, ..., A_m$. 
\end{remark}

\section{Integer Factorization}

\begin{problem}\label{intfactor}
Given a composite, squarefree integer $N$, output the prime factors of $N$. 
\end{problem}

\begin{problem}\label{intsub}
Given a composite, squarefree integer $N$ and an integer $m \leq \sqrt{N}$, compute the remainder 
$$m! \mod N$$
    
\end{problem}

\begin{proposition}
 The existence of $A, B$ and the set $S$ outlined in Proposition \ref{polyfactor} gives us an algorithm to solve Problem \ref{intsub} in time $\Tilde{O}(N^{1/6})$. 
 \end{proposition}

\begin{proof}
The $A, B$ associated with $m$ have magnitude $\Tilde{O}(m^{m^{1/3}})$,so we can write them in $m$-ary representations $a_k m^k + a_{k-1} m^{k-1} + ... + a_1 m + a_0$ with $k = O(m^{1/3})$ bits. Computing each bit via repeated squaring mod $N$ takes $\Tilde{O}(\log N)$, so in total computing $A$ and $B$ takes at most $O(m^{1/3} \log^k N)$ operations. Then, constructing the product
$$\prod_{j=0}^{O(m^{2/3})}(B + jA) \mod N$$using the same method as  \ref{polyp} takes $O(m^{1/3})$ operations. Plugging in $m = \sqrt{N}$ gives us the result. \end{proof}

\begin{proposition}
If Problem \ref{intsub} can be solved in xxx, then \ref{intfactor} can be solved in xxx. 
\end{proposition}

 \begin{proof}
We use a binary search strategy to output all factors of $N$, similar to the polynomial factoring problem.
    
    We use $g(m)$ to denote the subroutine of computing $m! \equiv R \mod N$, then taking $\gcd(R, N)$. Every $m$ for which $1 < g(m) < N$ results in a successful split of the factors of $N$. Otherwise, depending on whether $g(m) = 1$ or $N$, we eliminate the corresponding intervals as they contain no factors of $N$. 
    
    Since $N$ can have at most $\log N$ prime factors, and at most one of them is $> \sqrt{N}$, we start off with $m = \sqrt{N}$, compute $g(\sqrt{N})$; then $g(\sqrt{N}/2)$; then $g(\sqrt{N}/4)$ or $g(3\sqrt{N}/4)$....

    We now analyze the operation count of this recursive algorithm. Notice that we never set m larger than $\sqrt{N}$ throughout the entire algorithm, so we will pessimistically assume it is always $\sqrt{N}$  to simplify the analysis.
    
    Since $n$ is composite, at most one prime factor of $n$ is $\geq \sqrt{n}$. We first set $m = \sqrt{n}$ and use $A, B$ and the arithmetic progression associated with this $m$.
 
Call the result $d$. if $d = 0$ then all prime factors are less than $\sqrt{n}$. Now the available range of prime factors are $1,..., \sqrt{n}$, so we set $m = \sqrt{n}/2$ to be the midpoint of this range, and repeat the process. If $d \neq 0$, we know that both $d \leq \sqrt{n}$ and $n/d > \sqrt{n}$ are factors of $n$. In fact $n/d$ is prime, since $n$ cannot have more than two prime factors $> \sqrt{n}$. So we repeat the process on $m = d/2$ and update the range of primes to $1, ..., d$. 

If for any midpoint $m$ we get a $1 \mod n$ as result, we know that $n$ has no factors in the first half of the range, so we shrink the range to the second half and update the midpoint. Since $n$ has at most $\log (n)$ unique prime factors, we need to repeat the algorithm for at most $\log (n)$ times, giving us $\Tilde{O}(n^{1/6})$ running time in total. 
\end{proof}

In the polynomial factorization case, we described how $\Tilde{O}(n^{3/4})$ is the optimal running time within the baby-step, giant step scheme using the given arithmetic progression. It turns out that in the integer's case...

\begin{proposition}
    Suppose there exists integers $A$, $B$ of magnitude $\Tilde{O}(m^{m^{1/2}})$ s.t.
    $$1,..., m \mid \prod_{j=0}^{k} (B + jA)$$ where $k = \Tilde{O}(n^{1/2})$. Moreover, if both $A, B$ are both products of consecutive integers, then  can be solved in time $\Tilde{O}(n^{1/8})$. 
\end{proposition}

-Can add some observations about the chinese remainder theorem when solving either the stronger conjecture or the weaker one\\

\bibliographystyle{alpha}
\bibliography{ref}

\newpage
Let's consider the (adversarial) case where $X^{q^{S}} - X \mod f(X) = f(X)$, $S$ is of size $\exp(\widetilde{O}(n^{1/3}))$ and $S = p_1\cdot \cdot \cdot p_c$ is its factorization of primes $(c = \widetilde{O}(n^{1/3}))$. Suppose $f(X)$ consists of one irreducible factor of degree $n =p_{\alpha,1} \cdot \cdot \cdot p_{\alpha, \log n}$. \\

To reduce the size of $S$ to $\widetilde{O}(\log n)$, at each round, we keep each $p_i$ with probability $q = 1 - \frac{1}{10 \log n}$, and we do this for many rounds. After 1 round, the expected number of elements is $c \cdot q$. By union bound, the probability that we exclude any factor of $n$ is $\leq \frac{\log n}{10 \log n} = \frac{1}{10}$, so the probability that we included all factors of $n$ is $\geq \frac{9}{10}$.\\

After each round, we break each interval proportionally into a $1 - \frac{1}{10 \log n}$ segment and a $\frac{1}{10 \log n}$ segment, but we have to make sure that after $l$ rounds, the biggest segment is of expected size $\leq \widetilde{O}(\log n)$. $l = \widetilde{O}((\log n)^2)$ will work. As an example for $l = 10 \log n \ln n,$
\[c \cdot \left(1 - \frac{1}{10 \log n}\right)^{10 \log n \cdot \ln n} = c \cdot \left(e^{-1 + o(\frac{1}{(\log n)^2})}\right)^{\ln n} = O(1)\]

But the probability of success is calculated as the probability we include all factors of $n$ at each round, independently. So it is $\geq (\frac{9}{10})^{\widetilde{O}((\log n)^2)} = \frac{1}{n^{\widetilde{O}(\log n)}}.$

\end{document}

\subsection*{Old text}

We first fix some notations. For the remainder of this section, $h(X)$ is a polynomial of degree $d \leq n$ which is understood to be a factor of $f(X)$.\\

For an integer $A_i$ and a polynomial $h(X)$ of degree $d$, we call the polynomial
\[X^{q^{A_i}} \mod h(X)\]
\textbf{$A_i$'s remainder modulo $h(X)$}. If the context is clear, we will simply call it \textbf{modulo remainder}. 

\begin{lemma}\label{lem:q-rem} For an integer $A_i$ of magnitude at most $\exp (n^{\alpha})$ and $h(X)$ of degree $d$, its remainder modulo $h(X)$
\[X^{q^{A_i}} \mod g(X)\]
takes $\widetilde{O}(n^{\alpha}\cdot d)\cdot \log q$ operations to compute. 
\end{lemma}
\begin{proof}
   First, we obtain $X^q \mod g(X)$ via repeated squaring: $X \mod h(X), X^2 \mod h(X), X^4 \mod h(X),...$. This takes $O(\log q \cdot d)$ operations over $\F_q[X]/h(X)$. To compute a remainder of the form \[X^{q^{A_i}}\mod h(X),\] notice by the magnitude of $A_i$, it has a binary representation of $M$ bits, for some $M = O(n^\alpha)$. Write $A_i = a_1\cdot 2^{1} + \ccdot + a_M \cdot 2^{M}, a_i \in \{0, 1\}$, we can compute modulo remainder associated to each digit place
\[X^{q^{2}} \mod h(X), ...,  X^{q^{2^{M}}} \mod h(X).\]

It takes one modulo composition of $X^{q} \mod h(X)$ with itself to obtain $X^{q^2} \mod h(X)$; the same is true for obtaining the modulo remainder of any $2^{i}$ from $2^{{i-1}}$. Once each digit place is done, it take another $M$ compositions to assemble $A_i$'s modulo remainder. The Lemma is proved.   
\end{proof}

Now, we define an ordering on the set $S$ (or $T$, both as in Conjecture \ref{gap}). As $S$ is a generalized arithmetic progressions of $c$ differences, first define an ordering on the differences: $d_1 \prec d_2 \prec ... \prec d_c$. Then, for $a, b \in S$, let $a \prec b$ iff $a_i \prec b_i$ where $i$ is the smallest $d_i$ where $a$ and $b$ differ. Denote the ordering as $\prec_S$ on $S$ (respectively, $\prec_T$ on $T$). \\ 

\begin{lemma}\label{lem:q-rem-set} 
Let $S_0 \subseteq S$ (or $T$) be a subset of size $m$ consisting of consecutive elements. It takes $O((n^{\alpha} + m)\cdot n^{o(1)} \cdot d)$ \SWnote{missing $O(\log q)$} operations to obtain the set $S_0$'s remainders modulo $h(X)$. 
\end{lemma}

\begin{proof}
A consecutive subset $S_0$ of $S$ inherits the ordering, and is still a generalized arithmetic progression of $\leq c$ differences. Let $s$ be the smallest element of $S_0$. Its remainder modulo $h(X)$ has the form 
\[X^{q^{s}} \mod h(X)= (X^{q^{d_0}}) \circ (X^{q^{i_1 d_1}}) \circ ... \circ (X^{q^{i_c d_c}})\mod h(X)\]
for some initial offset $d_0  = \exp(n^\alpha)$ and  $\prod_{j = 1}^c{i_j} \leq n^{\beta + o(1)}$. By Lemma \ref{lem:q-rem}, it takes $(c+1) \cdot O(n^{\alpha}\cdot d) = O((n^{\alpha + o(1)})\cdot d)$ operations to obtain the modulo remainder associated with all the $d_i$'s; composing each gate for at most $O(\log n)$-many times give us the modulo remainder of $s$. To compute the whole set $S_0$, since all elements are consecutive, one needs $m$ extra modular compositions from $s$ to get all the modulo remainders. The Lemma follows. 
\end{proof}

Let $S$, $T$ be defined as in Conjecture \ref{gap}. We now define an ordering $\prec_A$ on the set $A = \{s - t \mid s \in S, t \in T\}$ with $s \in S$ changing most rapidly, and then $T$. Within $S$ and $T$ the elements are ordered via $\prec_S$ and $\prec_T$. Relabel the set according to $\prec_A$ as $A = \{A_1, ..., A_k\}$, for some $k(n) = n^{2\beta + o(1)}$. For an $A_i = s_a - t_u, A_j = s_b - t_v$, we have $A_i \prec_{A} A_j$ iff $s_a \prec_S s_b$ or $s_a = s_b$ and $t_a \prec_T t_b$. Notice that this ordering is symbolic, and not based on magnitude.\\

For a polynomial $h(X)$ of degree $d$, define the \textbf{interval polynomial modulo $h(X)$} as \[Q(u,v) = \prod_{i = u}^v(X^{q^{A_i}} - X) \mod h(X)\]

which is the product of modulo remainders from $A_u$ to $A_v$. The following lemma shows the interval polynomials coming from the set $A$ satisfies Property 2. 

\begin{proposition}\label{interval_poly}
 For $1\leq u \leq v \leq m$, the interval polynomial $Q(u,v) \mod g(X)$ takes $O((n^{\alpha} + n^{\beta})\cdot n^{o(1)} \cdot d)$ operations to compute. Set $\beta = \max(\alpha, \beta) + o(1)$, this satisfies Property 2. 
\end{proposition}

\begin{proof}
Write $A_u = s_{i_u} - t_{j_u}$ and $A_v = s_{i_v} - t_{j_v}$. As $Q(u,v)$ is the product of consecutive modulo remainders, we first break it up into three pieces: an initial segment 
\[Q_0 = \prod_{k = j_u}^{|T|}(X^{q^{s_{i_u} - t_{k}}}- X)\]
a middle segment 
\[Q_1 = \prod_{k = 1}^{i_v-i_u}\prod_{t \in T}(X^{q^{s_{i_{u}+k}-t}}- X)\]
and a final segment 
\[Q_2 = \prod_{k = 1}^{j_v}(X^{q^{s_{i_v} - t_{k}}}- X)\]
Notice that both $Q_0 \mod h(X)$ and $Q_2 \mod h(X)$ are the modulo remainders of subsets of $T$ shifted by $s_{i_u}$ and $s_{i_v}$. Since $Q_0$ and $Q_2$ have length at most $|T| = n^{\beta+o(1)}$, using Lemma \ref{lem:q-rem-set}, we can obtain $Q_0 \mod h(X)$ and $Q_2 \mod h(X)$ in time $O((n^{\alpha} + n^{\beta})\cdot n^{o(1)} \cdot d)$. To obtain $Q_1$, we employ a baby-step, giant-step approach:\\

Over the ring $\F_q[X]/h(X)[Z]$, we form the polynomial $P(Z)$
\[P(Z) = \prod_{k=1}^{i_v-i_u}(X^{q^{s_{i_u+k}}} - Z) \mod h(X)\]
and evaluate it at the modulo remainders of the set $T$. Since $P(Z)$ has degree at most $n^{\beta + o(1)}$, forming $P(Z)$ takes time $O((n^{\alpha} + n^{\beta})\cdot n^{o(1)} \cdot d)$. Using fast multipoint evaluation \SWnote{add reference}, the cost of evaluating a degree $\leq n^{\beta + o(1)}$ univarite polynomial at $\leq n^{\beta + o(1)}$ points is $\widetilde{O}(n^{\beta + o(1)} \cdot d)$ over $\F_q[X]/h(X)$. This procedure produces a polynomial of the form 
\[\prod_{k = 1}^{i_v-i_u}\prod_{t \in T}(X^{q^{s_{i_{u}+k}}}- X^{q^t}) \mod h(X).\]
By the property of $\F_q$, this is the same as \[\prod_{k = 1}^{i_v-i_u}\prod_{t \in T}(X^{q^{s_{i_{u}+k}-t}}- X)^{q^t} \mod h(X).\] 
Since $g(X)$ (as a factor of $f(X)$) is squarefree, the modulo remainder computes the same polynomial as $Q_1 \mod h(X)$. As a result, $Q_0Q_1Q_2 \mod h(X)$ gives us the desired interval polynomial $Q(u,v) \mod h(X)$. 
\end{proof}

\subsection{The information-theoretic barrier}

{\bf CHRIS SAYS: first, this belongs in the section studying the arithmetic progression version since it doesn't discuss the difference sets $S, T$. Once it is there, we might as well keep it simple and talk about a single (not generalized) arithmetic progression. I don't think this part is complete yet. The most significant missing part is that I don't know how one recovers $A_i$ from $A_i'$. You need to specify which factor. Specifying a generic factor costs $O(n^\alpha)$ anyway so there is no contradiction. This needs more thought and we can only include this part if it is fully complete and correct. One comment is that we should probably say "black-box transformation" rather than "procedure" to connect what we are trying to get at to other well-studied scenarios in other areas of complexity.}
Given a set of $m$ integers satisfying the n-divisor property, one may hope for a procedure to output a generalized arithmetic progression containing these integers as divisors, therefore making them useful for our factoring algorithms. We capture the procedure as below:

\begin{procedure}\label{proc:info-theo}
Given integers $A_1,...,A_m$ with magnitude at most $n^{\alpha}$, there is a procedure to output integers $A'_1,...,A'_m$ of magnitude at most $n^{\alpha}$, such that for every $A_i$ there exists an $A'_j$ for which $A_i \mid A'_j$. Further, $A'_1,...,A'_m$ is a generalized arithmetic progression of at most $n^{o(1)}$ differences. 
\end{procedure}

As an example, let $c = \lceil n/m \rceil$ and set $A_1 = \lceil n^{c}\rceil!, A_2 = \prod_{i = 1}^c(\lceil n^{c}\rceil + i)$,.... This partitions the integers from $1$ to $n$ into roughly $m$ groups of $\lceil n/m \rceil$ integers each. The set $A_1,...,A_m$ has magnitude less than $n^{n/m}$. What we wish for is a generic procedure like \ref{proc:info-theo} to work on such inputs. The following theorem says that this is not possible, because a generalized arithmetic progression contains too little information.

\begin{theorem}
There does not exist a procedure satisfying the requirements of \ref{proc:info-theo} that works for every set of $A_1,...,A_m$ with the $n$-divisor property. 
\end{theorem}
\begin{proof}
For the sake of contradiction, suppose Procedure \ref{proc:info-theo} exists. Let $A$ denote the set of integers that is the output of this procedure. This means we can write the set as
\[A =  B_1 + ... + B_c\]
for some $c \leq n^{o(1)}$, and each $B_i$ is an arithmetic progression. We can write each $B_i$ as
\[B_i = \{a_i + jb_i: j = 0, 1,.., d_i\}\]
where $\prod_{i = 1}^c d_i = m$. Information-theoretically, to represent the set $A$, we need to specify each $a_i, b_i$, and the length of the arithmetic progression $d_i$, for $i = 1,.., c$. Further, since $A$ satisfies the $n$-divisor property, for each $i \in [n]$ we need to specify which out of the $m$ integers in $A$ is $i$ a divisor of. In total, this allows us to represent $A$ in \[\sum_{i = 1}^c  \log(a_i) + \log(b_i)+ \log m_i \cdot n= O(n^{\alpha+o(1)}) + \log m \cdot n\]
bits. This amount of information is enough for another procedure to take the set $A$ and produce the original set of integers $A_1,...,A_m$.

On the other hand, there exists $A_1,...,A_m$ satisfying the $n$-divisor property that requires $m \cdot n^{\alpha}$ bits to represent (e.g. each $A_i$ containing as divisor a $n^{\alpha}$-bit prime). Since $O(n^{\alpha+o(1)}) + \log m \cdot n << m \cdot n^{\alpha}$ for $\log n << m$, we cannot get back the input from the generalized arithmetic progression that we produced. As a result, such a procedure does not exist. 
\end{proof}

However, this theorem does not rule out the possibility that a procedure might exist for specific inputs satisfying the $n$-divisor property.